\newtheorem{definition}{Definition}
\newtheorem{theorem}{Theorem}
\newtheorem{claim}{Claim}
\newtheorem{observation}{Observation}
\newtheorem{corollary}{Corollary}
\newtheorem{lemma}{Lemma}
\newcommand{\Gathering}{$\mathcal{GDG}$}
\newcommand{\TerminationOne}{\textbf{$\textrm{Term}_\textrm{1}$}}
\newcommand{\TerminationTwo}{\textbf{$\textrm{Term}_\textrm{2}$}}
\newcommand{\rFourOne}{\textbf{$\textrm{T}_\textrm{1}$}}
\newcommand{\rFourTwo}{\textbf{$\textrm{T}_\textrm{2}$}}
\newcommand{\rFourThree}{\textbf{$\textrm{T}_\textrm{3}$}}
\newcommand{\rThreeOne}{\textbf{$\textrm{W}_\textrm{1}$}}
\newcommand{\rTwoOne}{\textbf{$\textrm{K}_\textrm{1}$}}
\newcommand{\rTwoTwo}{\textbf{$\textrm{K}_\textrm{2}$}}
\newcommand{\rTwoThree}{\textbf{$\textrm{K}_\textrm{3}$}}
\newcommand{\rTwoFour}{\textbf{$\textrm{K}_\textrm{4}$}}
\newcommand{\rOneOne}{\textbf{$\textrm{M}_\textrm{1}$}}
\newcommand{\rOneTwo}{\textbf{$\textrm{M}_\textrm{2}$}}
\newcommand{\rOneThree}{\textbf{$\textrm{M}_\textrm{3}$}}
\newcommand{\rOneFour}{\textbf{$\textrm{M}_\textrm{4}$}}
\newcommand{\rOneFive}{\textbf{$\textrm{M}_\textrm{5}$}}
\newcommand{\rOneSix}{\textbf{$\textrm{M}_\textrm{6}$}}
\newcommand{\rOneSeven}{\textbf{$\textrm{M}_\textrm{7}$}}
\newcommand{\rOneEight}{\textbf{$\textrm{M}_\textrm{8}$}}
\newcommand{\rOneNine}{\textbf{$\textrm{M}_\textrm{9}$}}
\newcommand{\rOneTen}{\textbf{$\textrm{M}_\textrm{10}$}}
\newcommand{\rOneEleven}{\textbf{$\textrm{M}_\textrm{11}$}}
\newcommand{\righter}{$righter$}
\newcommand{\dumbSearcher}{$dumb\-Sear\-cher$}
\newcommand{\awareSearcher}{$aware\-Sear\-cher$}
\newcommand{\potentialMin}{$po\-ten\-tial\-Min$}
\newcommand{\waitingWalker}{$waiting\-Walker$}
\newcommand{\minWaitingWalker}{$min\-Waiting\-Walker$}
\newcommand{\headWalker}{$head\-Walker$}
\newcommand{\tailWalker}{$tail\-Walker$}
\newcommand{\minTailWalker}{$min\-Tail\-Walker$}
\newcommand{\leftWalker}{$left\-Walker$}
\newcommand{\towerElection}{$tower\-Min$}
\newcommand{\towerTermination}{$tower\-Ter\-mi\-na\-tion$}
\newcommand{\rdv}{{\it RDV}}
\newcommand{\AmITheMin}{{\tt M}}
\newcommand{\MinWaitToBeKnown}{{\tt K}}
\newcommand{\Walk}{{\tt W}}
\newcommand{\WaitTermination}{{\tt T}}
\newcommand{\myparagraph}[1]{{~}\\\noindent {\textbf{{#1}.}}}
\newcommand{\fixme}[1]{\fbox{\textsl{{\bf #1}}}}
\newcommand{\FIXME}[1]{\fixme{#1} \marginpar[\null\hspace{2cm} FIXME]{FIXME}}
\newcommand{\ie}{{\em i.e.}\xspace}
\newcommand{\eg}{{\em e.g.}\xspace}
\newcommand{\etc}{{\em etc.}}
\newcommand{\etal}{{\em et al.}\xspace}
\newcommand{\COT}{\ensuremath{\mathcal{COT}}}
\newcommand{\RE}{\ensuremath{\mathcal{RE}}}
\newcommand{\BRE}{\ensuremath{\mathcal{BRE}}}
\newcommand{\AC}{\ensuremath{\mathcal{AC}}}
\newcommand{\ST}{\ensuremath{\mathcal{ST}}}
\newcommand{\G}{\ensuremath{\mathbb{G}}}
\newcommand{\GE}{\ensuremath{\mathbb{G}_{E}}}
\newcommand{\GW}{\ensuremath{\mathbb{G}_{W}}}
\newcommand{\GEW}{\ensuremath{\mathbb{G}_{EW}}}
\newcommand{\GEBold}{\ensuremath{\mathbf{\mathbb{G}_{E}}}}
\newcommand{\GEWBold}{\ensuremath{\mathbf{\mathbb{G}_{EW}}}}
\title{Gracefully Degrading Gathering in Dynamic Rings}
\author{Marjorie Bournat, Swan Dubois and Franck Petit \\ \footnotesize Sorbonne Université, CNRS, Inria, LIP6, F-75005 Paris, France}
\date{}
\newif\ifJournal
\begin{document}
\maketitle

\begin{abstract}
Gracefully degrading algorithms [Biely \etal, TCS 2018] are designed to 
circumvent impossibility results in dynamic systems by adapting themselves to 
the dynamics. Indeed, such an algorithm solves a given problem under some
dynamics and, moreover, guarantees that a weaker (but related) problem is
solved under a higher dynamics under which the original problem is impossible to 
solve. The underlying intuition is to solve the problem whenever possible but to
provide some kind of quality of service if the dynamics become (unpredictably) higher.

In this paper, we apply for the first time this approach to robot networks. We 
focus on the fundamental problem of gathering a squad of autonomous robots on an 
unknown location of a dynamic ring. In this goal, we introduce a set of 
weaker variants of this problem. Motivated by a set of impossibility results
related to the dynamics of the ring, we propose a gracefully degrading gathering
algorithm.
\end{abstract}

%

\section{Introduction}

The classical approach in distributed computing consists in, first, fixing a set of assumptions that captures
the properties of the studied system (atomicity, synchrony, faults, communication modalities, \etc) and, then, focusing 
on the impact of these assumptions in terms of calculability and/or of complexity on a given problem. When coming to dynamic
systems, it is natural to adopt the same approach. Many recent works focus on defining pertinent assumptions for 
capturing the dynamics of those systems \cite{CFQS12,KLO10,XFJ03}. When these assumptions become very weak, that is, when the
system becomes highly dynamic, a somewhat frustrating but not very surprising conclusion emerge: many fundamental distributed
problems are impossible at least, in their classical form \cite{BRSSW18,BDKP16,CFMS15}.

To circumvent such impossibility results, Biely \etal recently introduced the {\em gracefully degrading} approach \cite{BRSSW18}. This approach 
relies on the definition of weaker but related variants of the considered problem. A gracefully degrading algorithm guarantees that it will
solve simultaneously the original problem under some assumption of dynamics and each of its variants under some other (hopefully weaker) assumptions.
As an example, Biely \etal provide a consensus algorithm that gracefully degrades to $k$-set agreement when the dynamics of the system increase. The
underlying idea is to solve the problem in its strongest variant when connectivity conditions are sufficient but also to provide (at the opposite of a 
classical algorithm) some minimal quality of service described by the weaker variants of the problem when those conditions degrade.

Note that, although being applied to dynamic systems by Biely \etal for the first time, this natural idea is not a new
one. Indeed, {\emph{indulgent} algorithms \cite{AGGT12,DG05,G00} provide similar graceful degradation of the problem to satisfy with respect to synchrony (not with
respect to dynamics). {\em Speculation} \cite{DG13,
KADCW09} is a related, but somewhat orthogonal, concept. A speculative algorithm solves the 
problem under some assumptions and moreover provides stronger properties (typically better complexities) whenever conditions are better.

The goal of this paper is to apply graceful degradation to robot networks where a cohort of autonomous robots have to coordinate their actions in order to
solve a global task. We focus on \emph{gathering} in a {\em dynamic ring}. In this problem, starting from any initial position, robots must meet on an 
arbitrary location in a bounded time. Note that we can classically split this specification into a liveness property (all robots terminate 
in bounded time) and a safety property (all robots that terminate do so on the same node). 

\myparagraph{Related works} Several models of dynamic graphs have been defined recently \cite{CFQS12,LVM17,XFJ03}. In this paper, we
adopt the \emph{evolving graph} model \cite{XFJ03} in which a dynamic graph is simply a sequence of static graphs on a
fixed set of nodes: each graph of this sequence contains the edges of the dynamic graph present at a given time. We
also consider the hierarchy of dynamics assumptions introduced by Casteigts \etal \cite{CFQS12}. 
The idea behind this
hierarchy is to gather all dynamic graphs that share some temporal connectivity properties within classes. 
This allows us to compare the strength of these temporal connectivity properties based on the inclusion of classes
between them. 
We are interested in the following classes: 
\COT~(\emph{connected-over-time} graphs) where edges may appear and disappear without any recurrence nor periodicity assumption but
guaranteeing that each node is infinitely often reachable from any other node; 
\RE~(\emph{recurrent-edge} graphs) where any edge that appears at least once does so recurrently; 
\BRE~(\emph{bounded-recurrent-edge} graphs) where any edge that appears at least once does so recurrently in a bounded time;
\AC~(\emph{always-connected} graphs) where the graph is connected at each instant; and
\ST~(\emph{static} graphs) where any edge that appears at least once is always present.
Note that 
$\ST\subset\BRE\subset\RE\subset\COT$ and $\ST\subset\AC\subset\COT$ by definition.

In robot networks, the gathering problem was extensively studied in the context of static graphs, \eg, \cite{FKKSS04,
KMP06,SN17}. The main 
motivation of this vein of research is to characterize the initial positions of the robots allowing gathering in each studied topology in function
of the assumptions on the robots as identifiers, communication, vision range, memory, \etc~
On the other hand, few algorithms have been designed for robots evolving in dynamic graphs. The majority of them deals with the problem of 
exploration \cite{BDD16,BDP17,FMS13,IKW14,LDFS16} (robots must visit each node of the graph at least once or infinitely often depending on the variant of the problem).
In the most related work to ours \cite{LFPPSV17}, Di Luna \etal study the gathering problem in dynamic rings. They first note the impossibility of the problem 
in the \AC~class and consequently propose a weaker variant of the problem, the near-gathering: all robots must gather in finite time on two adjacent nodes. They characterize the impact of 
chirality (ability to agree on a common orientation) and cross-detection (ability to detect whenever a robot cross the same edge in the
opposite direction) on the solvability of the problem. All their algorithms are designed for the \AC~class and are not gracefully degrading.

\begin{table}[t]
\begin{center}
\begin{tabular}{|c|cccc|}
\hline
 & \G & \GE & \GW & \GEW \\
\hline
\COT & \textcolor{red}{Impossible (Cor. \ref{GImpossible} \& \ref{GGWImpossible})} & \textcolor{red}{Impossible (Cor. \ref{GEImpossibleCOT})} & \textcolor{red}{Impossible (Cor. \ref{GGWImpossible})} & \textcolor{OliveGreen}{Possible (Th. \ref{CorollaryC5})} \\
\AC & \textcolor{red}{Impossible (Cor. \ref{GImpossible})} & \textcolor{red}{Impossible (Th. \ref{GEImpossible})} & \textcolor{OliveGreen}{Possible (Th. \ref{theoremC9})} & --- \\
\RE & \textcolor{red}{Impossible (Cor. \ref{GGWImpossible})} & \textcolor{OliveGreen}{Possible (Th. \ref{theoremC6})} & \textcolor{red}{Impossible (Cor. \ref{GGWImpossible})} & --- \\
\BRE & \textcolor{OliveGreen}{Possible (Th. \ref{theoremC7})} & --- & --- & --- \\
\ST & \textcolor{OliveGreen}{Possible (Cor. \ref{theoremStatic})} & --- & --- & --- \\
\hline
\end{tabular}
\end{center}
\caption{\label{tab:summary}Summary of our results. The symbol --- means that a stronger variant of the problem is already proved solvable under the dynamics assumption. Our 
algorithm is gracefully degrading since it solves each variant of the gathering problem as soon as dynamics assumptions allow it.}\vspace{-1cm}
\end{table}

\myparagraph{Contributions} 
By contrast with the work of Di Luna \etal \cite{LFPPSV17}, in this paper we choose to keep 
unchanged the safety of the classical gathering problem (all robots that terminate do so on the same node) and, to circumvent impossibility results, we weaken only
its liveness: at most one robot may not terminate or (not exclusively) all robots that terminate do so eventually (and not in a \emph{bounded} time as in the classical specification). 
This choice is motivated by the approach adopted with indulgent algorithms \cite{AGGT12,G00,DG05}: the safety captures the ``essence'' of the problem and should be 
preserved even in degraded variants of the problem.
Namely, we obtain the four following variants of the gathering problem: 
\G~(\emph{gathering}) all robots terminate on the same node in \emph{bounded} time;
\GE~(\emph{eventual gathering}) all robots terminate on the same node in \emph{finite} time;   
\GW~(\emph{weak gathering}) all robots but (at most) one terminate on the same node in \emph{bounded} time; and
\GEW~(\emph{eventual weak gathering}) all robots but (at most) one terminate on the same node in \emph{finite} time.

We present then a set of impossibility results, summarized in Table \ref{tab:summary}, for these specifications for different classes of dynamic rings. 
Motivated by these impossibility results, our main contribution is a gracefully degrading gathering algorithm. For
each class of dynamic rings we consider, our algorithm solves the strongest possible of our variants of the gathering problem
(refer to Table \ref{tab:summary}). Note that this challenging property is obtained without any knowledge or detection of the dynamics by the robots that always execute the same algorithm.
Our algorithm needs that robots have distincts identifiers, chirality, strong multiplicity detection (\ie ability to count the number of colocated robots), memory (of size sublinear in the size of the ring and identifiers), and communication capacities but deals with (fully) anonymous ring.
Note that these assumptions (whose necessity is left as an open question here) are incomparable with those of Di Luna \etal \cite{LFPPSV17} that assume anonymous but home-based robots (hence, non fully anonymous rings).
This algorithm brings two novelties with respect to the state-of-the-art: $(i)$ it is the first gracefully degrading algorithm dedicated to robot networks; and $(ii)$ it is the 
first algorithm solving (a weak variant of) the gathering problem in the class \COT (the largest class of dynamic graphs that guarantees an exploitable recurrent property).

\ifJournal
\myparagraph{Roadmap} The organization of the paper follows. Section \ref{sec:model} presents formally the model we consider. Section \ref{sec:impossibility} sums up impossibility 
results while Section \ref{sec:algo} presents our gracefully degrading algorithm. Section \ref{sec:proof} proves the correctness of our gracefully degrading 
algorithm. Finally, Section \ref{sec:conclu} concludes the paper with some comments. 
\else
\myparagraph{Roadmap} The organization of the paper follows. Section \ref{sec:model} presents formally the model we consider. Section \ref{sec:impossibility} sums up impossibility 
results while Section \ref{sec:algo} presents our gracefully degrading algorithm. Section \ref{sec:conclu} concludes the paper. 
\fi

\section{Model}\label{sec:model}


\myparagraph{Dynamic graphs} We consider the model of 
\emph{evolving graphs} \cite{XFJ03}. Time is
discretized and mapped to $\mathbb{N}$. An evolving graph $\mathcal{G}$ is an
ordered sequence $\{G_{0}, G_{1}, \ldots\}$ of subgraphs of a given static graph 
$G = (V,E)$ such that, for any $i \geq 0$, we call $G_{i} = (V, E_{i})$ the
snapshot of $\mathcal{G}$ at time $i$. Note that $V$ is static and $|V|$ is denoted by~$n$. 
We say that the edges of $E_{i}$ are \emph{present} in $\mathcal{G}$ at time $i$. $G$ 
is the \emph{footprint} of $\mathcal{G}$. The 
\emph{underlying graph} of $\mathcal{G}$, denoted by $U_\mathcal{G}$, is the static
graph gathering all edges that are present at least once in $\mathcal{G}$ 
(\ie $U_\mathcal{G}=(V,E_\mathcal{G})$ with
$E_\mathcal{G}=\bigcup_{i=0}^{\infty}E_i$).
An \emph{eventual missing edge} is an edge of $E$ such that there 
exists a time after which this edge is never present in $\mathcal{G}$. A 
\emph{recurrent edge} is an edge of $E$ that is not eventually 
missing. The \emph{eventual underlying graph} of $\mathcal{G}$, denoted
$U_\mathcal{G}^\omega$, is the static graph gathering all recurrent edges of
$\mathcal{G}$ (\ie $U_\mathcal{G}^\omega=(V,E_\mathcal{G}^\omega)$ where 
$E_\mathcal{G}^\omega$ is the set of recurrent edges of $\mathcal{G}$). 
We only consider graphs whose footprints are anonymous 
and unoriented rings of size $n \geq 4$.
The class \COT~(connected-over-time graphs) contains all evolving graphs such that their
eventual underlying graph is connected. 
The class \RE~(recurrent-edges graphs) gathers all
evolving graphs whose footprint contains only recurrent edges. 
The class \BRE~(bounded-recurrent-edges graphs) includes all evolving graphs
in which there exists a $\delta\in\mathbb{N}$ such that each edge of the footprint 
appears at least once every $\delta$ units of time.
The class \AC~(always-connected graphs) collects all evolving graphs where the
graph $G_{i}$ is connected for any $i \in \mathbb{N}$.
The class \ST~(static graphs) encompasses all evolving graphs where the
graph $G_{i}$ is the footprint for any $i \in \mathbb{N}$.

\myparagraph{Robots} We consider systems of $\mathcal{R}\geq 4$ autonomous mobile entities called
robots moving in a discrete and dynamic environment modeled by an evolving graph
$\mathcal{G}=\{(V,E_0),(V,E_1)\ldots\}$, $V$ being a set of nodes representing 
the set of locations where robots may be, $E_i$ being the set of bidirectional 
edges 
through which robots may move from a location to
another one at time $i$. 
Each robot knows $n$ and $\mathcal{R}$. Each robot $r$ possesses a distinct 
(positive) integer identifier $id_r$ strictly greater than 0.
Initially, a robot only knows the value of its own identifier.
Robots have a persistent memory so they can store local variables. 

Each robot $r$ is endowed with strong local multiplicity detection, meaning that it is able to 
count the exact number of robots that are co-located with it at any time $t$.
When this number equals 1, the robot $r$ is \emph{isolated} at time $t$. By opposition,
we define a \emph{tower} $T$ as a couple 
$(S, \theta)$, where $S$ is a set of robots with $|S| > 1$ and 
$\theta=[t_{s}, t_{e}]$ is an interval of $\mathbb{N}$, such that all the robots
of $S$ are located at a same node at each instant of time $t$ in $\theta$ and 
$S$ or $\theta$ is maximal for this property. We say that the robots of $S$ form
the tower at time $t_{s}$ and that they are involved in the tower between time
$t_{s}$ and $t_{e}$. Robots are able to communicate (by direct reading) the values of 
their variables to each others only when they are involved in the same tower.

Finally, all the robots have the same
chirality, \ie each robot is able to locally label the two ports of its current 
node with \emph{left} and \emph{right} consistently over the ring and time and 
all the robots agree on this labeling. Each robot $r$ has a variable 
$dir_{r}$ that stores the direction it currently \emph{considers}
(\emph{right}, \emph{left} or \emph{$\bot$}). 

\myparagraph{Algorithms and execution}
The \emph{state} of a robot at time $t$ corresponds to the values of its local variables 
at time $t$. The \emph{configuration} $\gamma_t$ of the system at time $t$ gathers the
snapshot at time $t$ of the evolving graph, the positions (\ie the nodes where 
the robots are currently located) and the state of each robot at time $t$.
The \emph{view} of a robot $r$ at time $t$ is composed from the state of $r$ at time $t$,
the state of all robots involved in the same tower as $r$ at time $t$ if any, and 
of the following local functions: $ExistsEdge(dir, round)$, with 
$dir \in \{right, left\}$ and $round \in \{current, previous\}$ which indicates
if there exists an adjacent edge to the location of $r$
at time $t$ and $t-1$ respectively in the 
direction $dir$ in $G_t$ and in $G_{t-1}$ respectively; $NodeMate()$
which gives the set of all the robots co-located with $r$
($r$ is not included in this set); $NodeMateIds()$ which gives the
set of all the identifiers of the robots co-located with $r$
(excluded the one of $r$)
; and
$HasMoved()$ which indicates if $r$ has moved between time 
$t-1$ and $t$ (see below). 

The \emph{algorithm} of a robot is written in the form of an ordered set of guarded rules
$(label)::guard\longrightarrow action$ where $label$ is the name of the rule, 
$guard$ is a predicate on the view of the robot, and $action$ is a sequence
of instructions modifying its state. Robots are uniform in the sense they share the same
algorithm. Whenever a robot has at least one rule whose guard is true at time $t$, we
say that this robot is \emph{enabled} at time $t$. The local algorithm also specifies the initial
value of each variable of the robot but cannot restrict its arbitrary initial position.

Given an evolving graph $\mathcal{G}=\{G_{0}, G_{1}, \ldots\}$ and an
initial configuration $\gamma_0$, the \emph{execution} $\sigma$ 
in $\mathcal{G}$ starting from $\gamma_0$ of an algorithm is the maximal sequence 
$(\gamma_0,\gamma_1)(\gamma_1,\gamma_2)(\gamma_2,\gamma_3)\ldots$ where, 
for any $i \geq 0$, 
the configuration $\gamma_{i+1}$ is the result of the execution of a synchronous
round by all robots from $\gamma_i$ that is composed of three atomic and synchronous phases: 
Look, Compute, Move. During the Look phase, each robot captures its view at time $i$.
During the Compute phase, each robot enabled by the algorithm executes the $action$ associated to 
the first rule of the algorithm whose $guard$ is true in its view.
In the case the direction $dir_{r}$ of a robot $r$ is 
in $\{right, left\}$, the Move phase consists of moving $r$ in the 
direction it considers if there exists an adjacent edge in that direction to its 
current node, otherwise (\ie the adjacent edge is missing) $r$ is \emph{stuck}
and hence remains on its current node. In the case where the 
direction $dir$ of a robot is $\bot$, the robot remains on its current node.

\section{Impossibility Results}\label{sec:impossibility}

In this section, we present the set of impossibility results summarized in 
Table \ref{tab:summary}. These results show that some variants of the 
gathering problem cannot be solved depending on the dynamics of the ring in which 
the robots evolve and hence motivate our gracefully degrading approach.

First, we prove in Theorem~\ref{GEImpossible} that \GE~(the eventual variant of the gathering problem) 
is impossible to solve in \AC. Note that Di Luna \etal~\cite{LFPPSV17} 
provide a similar result but show it in an informal way only. Moreover, our 
result subsumes theirs since the considered models are different: we show 
that the result remains valid even if robots are identified, able to 
communicate, and not necessarily initially all scattered
(other different assumptions exist between the 
two models but have no influence on our proof).

The proof of Theorem~\ref{GEImpossible} relies on a generic framework 
introduced by Braud-Santoni \etal~\cite{BDKP16}. Note that even though this 
generic framework is designed for another model (namely, the classical message 
passing model), it is straightforward to borrow it for our current model.
Indeed, as its proof only relies on the determinism of algorithms and 
indistinguishability of dynamic graphs, its arguments are directly 
translatable in our model. We present briefly this framework here. The 
interested reader is referred to the original work \cite{BDKP16} for more 
details. 

This framework is based on a result showing that, if we take a sequence of 
evolving graphs with ever-growing common prefixes (that hence converges to the
evolving graph that shares all these common prefixes), then the sequence of 
corresponding executions of any deterministic algorithm also converges. 
Moreover, we are able to describe the execution to which it converges as the 
execution of this algorithm in the evolving graph to which the sequence 
converges. This result is useful since it allows us to build counterexamples
in the context of impossibility results. Indeed, it is sufficient to construct
a sequence of evolving graphs with ever-growing common prefixes and prove 
that its corresponding execution violates the specification of the problem 
for ever-growing time to exhibit an execution that never satisfies the
specification of the problem.

\begin{theorem} \label{GEImpossible}
 There exists no deterministic algorithm that satisfies \GE~in rings of \AC~with size $4$ or more for $4$ robots or more.
\end{theorem}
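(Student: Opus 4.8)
The plan is to argue by contradiction. Suppose there is a deterministic algorithm $\mathcal{A}$ that solves \GE~in every ring of \AC~of size $n\geq 4$ with $\mathcal{R}\geq 4$ robots. Using the framework of Braud-Santoni \etal~\cite{BDKP16} recalled above, it suffices to exhibit one initial configuration together with a sequence of evolving graphs having ever-growing common prefixes, all lying in \AC, whose corresponding executions of $\mathcal{A}$ have not gathered by a time that goes to infinity; the limit is then an execution of $\mathcal{A}$ on a single \AC~ring that never gathers, contradicting the liveness of \GE. Concretely, I describe a \emph{reactive adversary} that, round after round, chooses the snapshot $E_t$ as a function of the current configuration; since each snapshot is obtained from the ring by deleting at most one edge, every snapshot is connected and the whole evolving graph lies in \AC.

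I fix two robots $r_1$ and $r_2$ and pick an initial configuration placing them on distinct nodes (the remaining $\mathcal{R}-2$ robots are placed arbitrarily and are irrelevant to the argument). The invariant the adversary maintains is that \emph{$r_1$ and $r_2$ are never co-located}. This already forbids \GE: for all robots to be simultaneously terminated on a common node, every pair of robots must be co-located at that moment, so if $r_1$ and $r_2$ are never on the same node, no configuration ever has all robots gathered, and the finite-time termination required by \GE~cannot hold. Since two robots at footprint-distance at least $3$ cannot become co-located in a single round, the adversary uses the full ring in that case; the only dangerous situations are distance $1$ and distance $2$, handled by deleting a single well-chosen edge.

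The core of the proof is a short case analysis that crucially exploits the \emph{locality of views}: a robot's view depends only on its own state, its own adjacent edges (current and previous), and its co-located robots; hence deleting an edge not incident to $r_2$ leaves $r_2$'s decision unchanged, even though the adversary must commit to $E_t$ \emph{before} the robots Look and Compute. At distance $1$, with $r_1$ on $u$ and $r_2$ on the adjacent node $v$, the adversary deletes the edge $uv$: neither robot can cross it, so after the round $r_1$ lies in $\{u,u'\}$ and $r_2$ in $\{v,v'\}$ (their respective other neighbors), which are four pairwise-distinct nodes as $n\geq 4$, and they stay separated. At distance $2$, with common neighbor $c$ (the unique meeting node), the adversary deletes the edge between $r_1$ and $c$, which prevents $r_1$ from reaching $c$ while, being non-incident to $r_2$, not altering $r_2$'s move; so they cannot share a node. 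For $n=4$ the two robots are antipodal with \emph{two} common neighbors, and here the locality argument is essential: the adversary first computes (independently of the edges incident to $r_1$) which common neighbor $r_2$ heads for, and then deletes the edge from $r_1$ to exactly that node, so that $r_1$ cannot follow $r_2$.

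I expect the main obstacle to be precisely this $n=4$ antipodal case combined with the Look/Compute/Move commitment order: a naive single deletion can fail to separate two antipodal robots (both common neighbors remain reachable for one of them), and the resolution is exactly the observation that blocking $r_1$ cannot redirect $r_2$ onto the node $r_1$ escapes to, because $r_2$'s decision is insensitive to edges incident to $r_1$. Once this is established, packaging the reactive construction into the ever-growing-prefix framework of \cite{BDKP16} and verifying that ``$r_1,r_2$ never co-located'' contradicts \GE~are routine, and the presence of the extra robots changes nothing since the edge deletions block movement physically, regardless of why a robot tries to move.
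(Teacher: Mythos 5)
Your proposal is correct and shares the paper's overall skeleton --- proof by contradiction, restriction to two robots $r_{1}$ and $r_{2}$, the observation that the safety of \GE~reduces the task to keeping these two robots apart forever, removal of at most one edge per round so that every snapshot stays connected and the graph stays in \AC, and an appeal to the convergence framework of Braud-Santoni \etal --- but the adversary you construct works by a genuinely different mechanism. The paper never argues about locality of views: it builds a sequence of evolving graphs $(\mathcal{G}_{i})$ with ever-growing common prefixes and, in the delicate distance-$2$ situation, decides which edge to suppress by a \emph{look-ahead}: it runs the execution forward on the full ring until some robot moves and removes one of the two intermediate edges only if that run would make the robots meet; the infinite separating execution is then obtained as the limit of the sequence. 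You instead describe a purely one-round \emph{reactive} adversary with a geometric case analysis on the footprint distance (at distance $\geq 3$ keep the full ring, at distance $1$ cut the shared edge, at distance $2$ cut the edge from $r_{1}$ toward the meeting node), and you resolve the only genuinely delicate point --- the $n=4$ antipodal configuration with two common neighbours, where the adversary must commit to $E_{t}$ before the Look phase --- by the locality of views: since $r_{2}$'s view, hence its decision and its move, cannot depend on edges not incident to its own node, its target can be predicted and the matching edge at $r_{1}$ deleted; this prediction step plays exactly the role that the paper's look-ahead branching plays. A pleasant by-product of your formulation is that the limit framework becomes dispensable: because the configuration at time $t$ is determined by $E_{0},\dots,E_{t-1}$ and the initial configuration, your reactive adversary directly defines, by induction, a single evolving graph of \AC~and a single execution in which $r_{1}$ and $r_{2}$ are never co-located, so the ever-growing-prefix packaging you invoke is harmless but not actually needed. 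In short, both arguments are sound; yours is more elementary and self-contained (at the price of an explicit $n=4$ case), while the paper's construction avoids any view-locality reasoning by simulating the future and fits the generic indistinguishability framework it imports.
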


\begin{proof}
By contradiction, assume that there exists a deterministic algorithm 
 $\mathcal{A}$ that satisfies \GE~in any
 ring of \AC~with size $4$ or more for $4$ robots or more. 
Let us choose arbitrarily two of these robots and denote them $r_{1}$ and $r_{2}$.

Note that $\mathcal{A}$ may allow the last robot to terminate only if it is
co-located with all other robots
(otherwise, we obtain a contradiction with the safety of \GE). So, 
proving the existence of an execution of $\mathcal{A}$ in a ring of \AC~where 
$r_{1}$ and $r_{2}$ are never co-located is sufficient to obtain a contradiction
with the liveness property of \GE~and to show the result. This is the goal of the 
remainder of the proof.

To help the construction of this execution, we need introduce some notation as follows.
Given an evolving graph $\mathcal{F}$, an edge $\tilde{e}$ of $\mathcal{F}$, and a time
interval $\mathbb{I}\subseteq\mathbb{N}$, the evolving graph 
$\mathcal{F}\backslash \{\tilde{e},\mathbb{I}\}$ is the evolving graph $\mathcal{F}'$ defined by:
$e\in F'_i$ if and only if $e=\tilde{e} \wedge i\notin\mathbb{I} \wedge e\in F_i$ or 
$e\neq\tilde{e} \wedge e\in F_i$.
Given an evolving graph $\mathcal{F}$ and two integers $t_{1}, t_{2}$ such that $t_{1} \leq t_{2}$, 
we denote $\mathcal{F}^{t_{1}, \ldots, t_{2}}$ the subsequence $\{F_{t_{1}}, \ldots, F_{t_{2}}\}$ 
of $\mathcal{F}$.
Given two evolving graphs, $\mathcal{F}$ and $\mathcal{H}$, and an integer $t$, 
the evolving graph $\mathcal{F}^{\{0,\ldots,t\}}\otimes\mathcal{H}^{\{t+1,\ldots,+\infty\}}$
is the evolving graph $\mathcal{F}'$ defined by: $e\in F'_i$ if and only if
$i\leq t\wedge e\in F_i$ or $i> t\wedge e\in H_i$.

 Let $\mathcal{G}= \{G_{0}, G_{1}, \ldots\}$ be a graph of \AC~
 whose footprint $G$ is a ring of size $4$ or more such that 
 $\forall i \in \mathbb{N}, G_{i} = G$.
 Consider two nodes $u$ and $v$ of $\mathcal{G}$, such that the node $v$ is the 
 adjacent node of $u$ in $G$ to the right. 
 We denote by $e_{uv}$ the edge linking the nodes $u$ and $v$. 
 Let $\mathcal{G}'$ be $\mathcal{G} \backslash \{e_{uv}, \mathbb{N}\}$.
 Let $\varepsilon$ be the execution of $\mathcal{A}$ in $\mathcal{G}'$ starting
 from the configuration where $r_{1}$ is located on node $u$ and $r_{2}$ is 
 located on node $v$. Note that the distance in the footprint of $\mathcal{G}$ 
 between $r_{1}$ and $r_{2}$ (denoted $d(r_{1}, r_{2})$) is equal to one.

 \begin{figure}

 		 \centerline{\includegraphics[scale=1]{./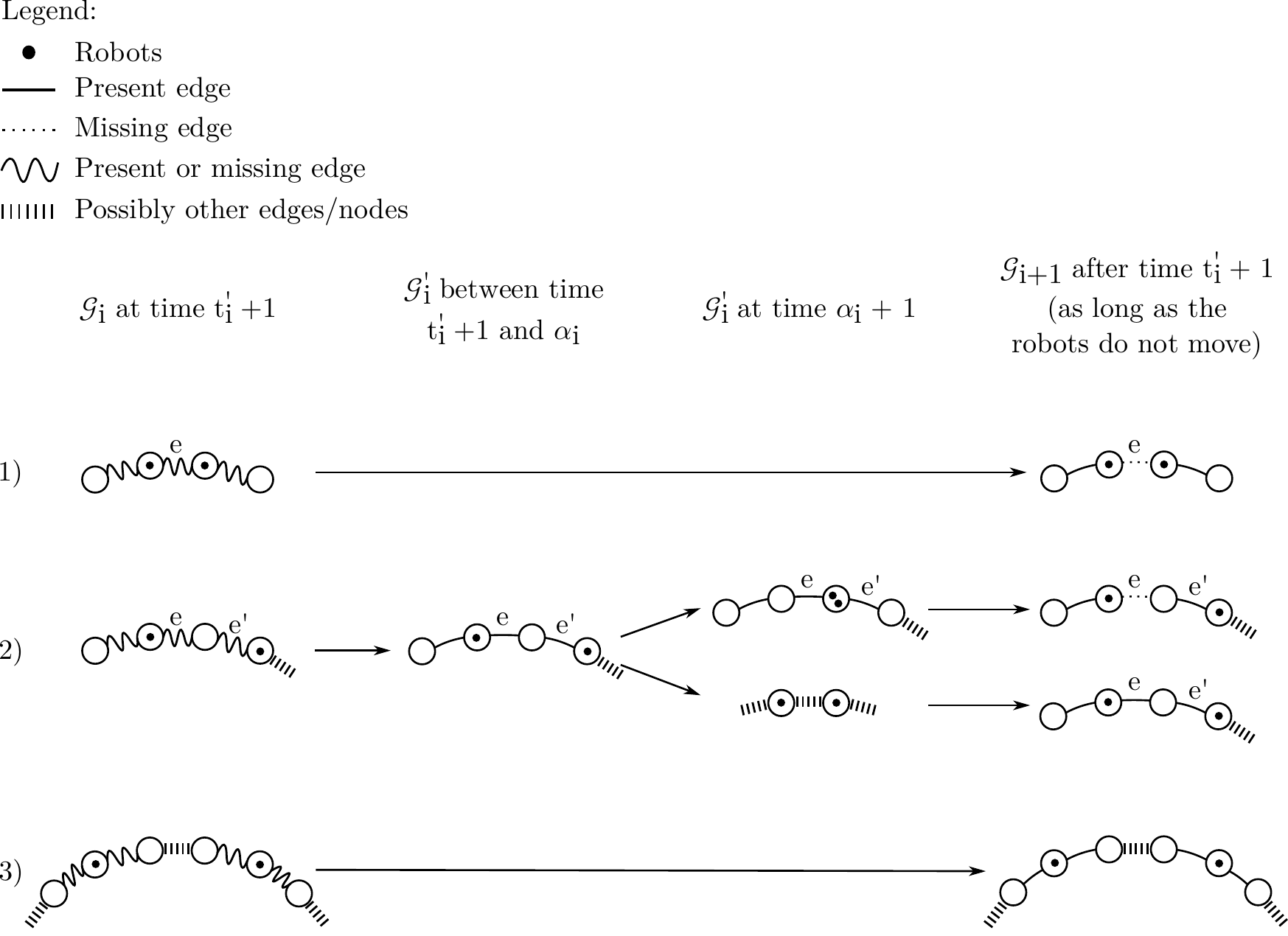}}

 	\caption{Construction of $\mathcal{G}_{i + 1}$ from $\mathcal{G}_{i}$.} \label{impossibility_bis}
 \end{figure}

 Our goal is to construct a sequence of rings of \AC~denoted $(\mathcal{G}_{m})_{m \in \mathbb{N}}$ 
 such that $\mathcal{G}_{0} = \mathcal{G}'$ and, for any $i \geq 0$, $r_{1}$ and
 $r_{2}$ are never co-located before time $t_i$ in $\varepsilon_i$ (the execution 
 of $\mathcal{A}$ in $\mathcal{G}_{i}$ starting from the same configuration as 
 $\varepsilon$), $(t_{m})_{m \in \mathbb{N}}$ being a strictly increasing sequence
 with $t_0=0$.
 First, we show in the next paragraph that, if some such  $\mathcal{G}_{i}$ exists and moreover 
 ensures the existence of a time $t_{i}' + 1>t_i$ where the two robots are 
 still on different nodes in $\varepsilon_i$, then we can construct $\mathcal{G}_{i + 1}$ as shown on Figure~\ref{impossibility_bis}.
 We then prove that our construction guarantees the existence of such a $t_{i}'$,
 implying the well-definition of $(\mathcal{G}_{m})_{m \in \mathbb{N}}$.
 

 As $r_1$ and $r_2$ are not co-located at time $t_{i}$ in $\varepsilon_i$, at least one of them 
 must move in finite time in any execution starting from $\gamma_{t_i}$
 (otherwise, we obtain a contradiction with the liveness of \GE). 
 Let $t_{i}' \geq t_{i}$ be the smallest such time in the execution where
 the topology of the graph does not evolve from time $t_{i}$ to time $t_{i}'$.
 In the following, we show how we construct the evolving graph 
 $\mathcal{G}_{i + 1}$, in function of $t_{i}'$ and $\mathcal{G}_{i}$. 
 As we assume that in $\mathcal{G}_{i}$, at time $t_{i}' + 1$, $r_{1}$ and
 $r_{2}$ are on two different nodes, \ie $d(r_{1}, r_{2}) \geq 1$, the 
 following cases are possible.


 \begin{description}
  \item[Case 1:] $d(r_{1}, r_{2}) = 1$ at time $t_{i}' + 1$.\\ 
  Let $e$ be the edge between the respective locations of $r_{1}$ and $r_{2}$
  at time $t_{i}' + 1$. 
  We define $\mathcal{G}_{i + 1}$ on the same footprint as $\mathcal{G}_{i}$ by
  $\mathcal{G}_{i + 1} = \mathcal{G}_{i}^{0, \ldots, t_{i}'} \otimes
  (\mathcal{G}^{t_{i}' + 1, \ldots, +\infty} \backslash 
  \{e, \{t_{i}'+1, \ldots, +\infty\}\})$.
 
  \item[Case 2:] $d(r_{1}, r_{2}) = 2$ at time $t_{i}' + 1$.\\ 
  Denote $e$ and $e'$ the two consecutive edges between the respective locations 
  of $r_{1}$ and $r_{2}$ at time $t_{i}' + 1$.
  We define first $\mathcal{G}'_{i}$ on the same footprint as $\mathcal{G}_{i}$ by
  $\mathcal{G}_{i}' = \mathcal{G}_{i}^{0, \ldots, t_{i}'} \otimes
  \mathcal{G}^{t_{i}' + 1, \ldots, +\infty}$. Note that $\mathcal{G}'_{i}$ belongs
  to \AC~by assumption on $\mathcal{G}_i$ and since $\mathcal{G}$ is the static ring.
  Then, to avoid a contradiction with the liveness of \GE, we know that
  there exists a time $\alpha_{i} \geq t_{i}' + 1$ 
  in the execution of $\mathcal{A}$ on $\mathcal{G}'_{i}$ where at least one of our two robots
  move (\emph{w.l.o.g.} assume that $\alpha_{i}$ is the smallest one).
  If, at time $\alpha_{i} + 1$, the two robots are on distinct nodes in $\mathcal{G}_{i}'$, then 
  we define $\mathcal{G}_{i + 1}$ on the same footprint as $\mathcal{G}_{i}$ by
  $\mathcal{G}_{i + 1} = \mathcal{G}_{i}^{0, \ldots, t_{i}'} \otimes
  \mathcal{G}^{t_{i}' + 1, \ldots, +\infty}$. 
  If, at time $\alpha_{i} + 1$, the two robots are on the same node in $\mathcal{G}_{i}'$, then 
  we define $\mathcal{G}_{i + 1}$ on the same footprint as $\mathcal{G}_{i}$ by
  $\mathcal{G}_{i + 1} = \mathcal{G}_{i}^{0, \ldots, t_{i}'} \otimes
  (\mathcal{G}^{t_{i}' + 1, \ldots, +\infty} 
  \backslash \{e, \{t_{i}' + 1, \ldots, +\infty\}\})$.

  \item[Case 3:] $d(r_{1}, r_{2}) > 2$ at time $t_{i}' + 1$.\\ 
  We define $\mathcal{G}_{i + 1}$ on the same footprint as $\mathcal{G}_{i}$ by 
  $\mathcal{G}_{i + 1} = \mathcal{G}_{i}^{0, \ldots, t_{i}'} \otimes
  \mathcal{G}^{t_{i}' + 1, \ldots, +\infty}$.
 \end{description}

 Note that $\mathcal{G}_{i}$ and $\mathcal{G}_{i + 1}$ are indistinguishable for
 robots until time $t_{i}'$. This implies that, at time $t_{i}' + 1$, 
 $r_{1}$ and $r_{2}$ are on the same nodes in $\varepsilon_{i}$ and in 
 $\varepsilon_{i + 1}$. By construction of $t_{i}'$, either $r_{1}$ or $r_{2}$
 or both of the two robots move at time $t_{i}'$ in $\varepsilon_{i + 1}$. 
 Moreover, by construction of $\mathcal{G}_{i}$, even if one or both of the 
 robots move during the Move phase of time $t_{i}'$, at time $t_{i}' + 1$ the
 robots are still on two distinct nodes
 (since, in all cases above, either 
 the distance between the robots before the move is strictly greater than 2,
 an edge between the two robots is missing before the move and prevents the meeting, or
 the two robots move in a way that prevents the meeting by indistinguishability
 between $\mathcal{G}_{i}$ and $\mathcal{G}_{i + 1}$).
 Note that, by construction, 
 $\mathcal{G}_{i + 1}$ has at most one edge missing at each instant time
 and hence belongs to \AC.
 
 Defining $t_{i + 1} = t_{i}' + 1$, we succeed to construct 
 $\mathcal{G}_{i + 1}$ with the desired properties. Note that $t_{i}'$ and $\mathcal{G}_{0}$
 trivially satisfy all our assumptions.
 In other words, ($\mathcal{G}_{m}$)$_{m \in \mathbb{N}}$ is well-defined.
 
 We can then define the evolving graph $\mathcal{G}_{\omega}$ such that
 $\mathcal{G}_{\omega}$ and $\mathcal{G}_{0}$ have the same footprint, and such
 that for all $i \in \mathbb{N}$, $\mathcal{G}_{\omega}$ shares a common prefix
 with $\mathcal{G}_{i}$ until time $t_{i}'$.
 As the sequence $(t_{m}$)$_{m \in \mathbb{N}}$ is increasing by construction,
 this implies that the sequence ($\mathcal{G}_{m}$)$_{m \in \mathbb{N}}$
 converges to $\mathcal{G}_{\omega}$.
 Applying the theorem of Braud-Santoni \etal~\cite{BDKP16}, we obtain that, until time $t_{i}'$,
 the execution of $\mathcal{A}$ in $\mathcal{G}_{\omega}$ is identical
 to the one in $\mathcal{G}_{i}$. This implies that, executing $\mathcal{A}$ in 
 $\mathcal{G}_{\omega}$ (whose footprint is a ring of size $4$ or more), $r_{1}$ 
 and $r_{2}$ are always on distinct nodes, contradicting the liveness
 of \GE~and proving the result.
\end{proof}


It is possible to derive some other impossibility results from Theorem 
\ref{GEImpossible}. Indeed, the inclusion $\AC\subset\COT$ allows us to 
state that \GE~is impossible under \COT~as well. 

\begin{corollary} \label{GEImpossibleCOT}
 There exists no deterministic algorithm that satisfies \GE~in rings of \COT~with size $4$ or more for $4$ robots or more.
\end{corollary}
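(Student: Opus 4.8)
The plan is to derive this corollary directly from Theorem~\ref{GEImpossible} by exploiting the class inclusion $\AC \subset \COT$ noted in the model section. The key observation is that an algorithm claiming to solve \GE~in \COT~must solve it in particular on every input drawn from the subclass \AC, since each \AC~ring is itself a \COT~ring. Thus the impossibility already established for the smaller class propagates to the larger one.

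Concretely, I would argue by contradiction. Suppose there exists a deterministic algorithm $\mathcal{A}$ satisfying \GE~in every ring of \COT~of size $4$ or more for $4$ robots or more. First I would recall that $\AC \subset \COT$: every always-connected evolving graph is connected-over-time, because if the snapshot is connected at every instant then its eventual underlying graph is connected as well. Hence any \AC~ring of size $4$ or more is a legitimate input on which $\mathcal{A}$ is required to satisfy \GE.

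It then follows that $\mathcal{A}$, restricted to inputs from \AC, is a deterministic algorithm satisfying \GE~in every ring of \AC~of size $4$ or more for $4$ robots or more. This directly contradicts Theorem~\ref{GEImpossible}, which asserts that no such algorithm exists. The contradiction shows that no deterministic algorithm can satisfy \GE~in \COT~under the stated parameters, establishing the corollary.

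This argument is essentially a one-line class-inclusion reduction, so I do not expect any genuine obstacle. The only point requiring minor care is to make explicit that the specification of \GE~is the same problem across both classes and that the parameter ranges (size and robot count at least $4$) match exactly, so that an input witnessing the \AC~impossibility is admissible and counted among the \COT~inputs on which $\mathcal{A}$ is assumed correct.
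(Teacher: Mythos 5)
Your proposal is correct and follows exactly the paper's reasoning: the corollary is obtained from Theorem~\ref{GEImpossible} via the inclusion $\AC \subset \COT$, since an algorithm solving \GE~on all \COT~rings would in particular solve it on all \AC~rings. Your additional care about matching parameter ranges and the identical specification across classes is sound but does not change the argument.
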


From the very definitions of \G~and \GE, it is straightforward to see that the impossibility 
of \GE~under a given class implies the one of \G~under the same class.

\begin{corollary} \label{GImpossible}
 There exists no deterministic algorithm that satisfies \G~in rings of \COT~or \AC~with size $4$ or more for $4$ robots or more.
\end{corollary}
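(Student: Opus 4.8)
The plan is to derive Corollary \ref{GImpossible} as an immediate consequence of the already-established impossibility of \GE, exploiting the fact that \G is a strictly stronger specification than \GE. The two variants share an identical safety property (all robots that terminate do so on the same node); they differ only in the timing of the liveness requirement. \G demands that all robots terminate on a common node within a \emph{bounded} number of rounds, whereas \GE merely requires that they terminate on a common node in \emph{finite} time. Since every bounded quantity is in particular finite, any execution satisfying the liveness of \G also satisfies the liveness of \GE. Hence any deterministic algorithm that solves \G in a given class of dynamic rings automatically solves \GE in that same class.

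First I would record this implication in its contrapositive form: if no deterministic algorithm satisfies \GE in a class $\mathcal{C}$ (for a fixed ring size and number of robots), then no deterministic algorithm can satisfy \G in $\mathcal{C}$ either, since such a \G-algorithm would, by the observation above, witness the solvability of \GE and contradict its assumed impossibility. Then I would instantiate this implication twice. For the class \AC, Theorem \ref{GEImpossible} establishes the impossibility of \GE in rings of size $4$ or more with $4$ robots or more, which transfers to \G. For the class \COT, Corollary \ref{GEImpossibleCOT} establishes the impossibility of \GE under the same constraints, and the implication again carries this over to \G, yielding the stated result for both classes.

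There is no genuine obstacle in this argument: all the technical content resides in Theorem \ref{GEImpossible}, whose indistinguishability construction builds the adversarial evolving graph $\mathcal{G}_\omega$ in which $r_1$ and $r_2$ never meet, and in its corollary obtained via the inclusion $\AC\subset\COT$. Once the specification inclusion (solving \G implies solving \GE) is made explicit, the corollary is a one-line deduction. The only care required is to check that the size and robot-count hypotheses match those of the cited results, which they do, and to observe that the reasoning is uniform across \AC and \COT since each already fails to admit a \GE-algorithm.
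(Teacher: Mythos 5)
Your argument is correct and matches the paper's own reasoning: since the liveness of \G~(bounded time) implies that of \GE~(finite time) while the safety is identical, any algorithm for \G~would solve \GE, so the impossibility of \GE~in \AC~(Theorem~\ref{GEImpossible}) and in \COT~(Corollary~\ref{GEImpossibleCOT}) transfers directly to \G. Nothing further is needed.
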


Finally, impossibility results for bounded variants of the gathering problem
(\ie the impossibility of \G~under \RE~and of \GW~under \COT~and \RE)
are obtained as follows.
The definition of \COT~and \RE~does not exclude the ability for all edges
of the graph to be missing initially and for any arbitrary long time, hence 
preventing the gathering of robots for any arbitrary long time
if they are initially scattered.
This observation is sufficient to prove a contradiction with the existence
of an algorithm solving \G~or \GW~in these classes.

\begin{corollary} \label{GGWImpossible}
 There exists no deterministic algorithm that satisfies \G~or \GW~in rings of \COT~or \RE~with size $4$ or more for $4$ robots or more.
\end{corollary}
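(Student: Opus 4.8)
The plan is to exploit the single structural feature shared by \COT~and \RE~that is emphasized just before the statement: both classes tolerate an arbitrarily long initial phase during which \emph{no} edge of the footprint is present, since their defining conditions constrain only the eventual (recurrent) behaviour of edges. The key point is that both \G~and \GW~demand their liveness in \emph{bounded} time, so I would first extract from the supposed algorithm $\mathcal{A}$ a bound $B$, independent of the evolving graph within the class (depending at most on $n$, $\mathcal{R}$, and the chosen initial configuration), by which the required (weak) gathering together with termination must have occurred in every execution. The whole argument then reduces to building one evolving graph, together with one initial configuration, on which the robots provably cannot meet the deadline $B$, and to checking that this single counterexample already disposes of all four cases listed in the table.

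Concretely, I would fix the ring footprint $G=(V,E)$ of size $n\geq 4$ and place the $\mathcal{R}\geq 4$ robots in a \emph{scattered} initial configuration $\gamma_0$: two robots on a node $a$ and the remaining $\mathcal{R}-2$ robots on a distinct node $b$. This configuration is chosen so that it simultaneously defeats both liveness requirements: the robots occupy two distinct nodes (so they are not all co-located, ruling out \G), and, because $\mathcal{R}\geq 4$ forces $\max(2,\mathcal{R}-2)=\mathcal{R}-2<\mathcal{R}-1$, no node carries $\mathcal{R}-1$ robots (ruling out \GW). Since the model forbids an algorithm from restricting the initial positions, $\gamma_0$ is admissible.

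Next I would define the evolving graph $\mathcal{G}$ by $E_i=\emptyset$ for every $i\leq B$ and $E_i=E$ for every $i>B$. Two verifications are then needed. First, $\mathcal{G}\in\RE$, hence $\mathcal{G}\in\COT$ as well since $\RE\subset\COT$: its footprint is the whole ring, and every ring edge, being present at all times after $B$, appears infinitely often and is therefore recurrent, so the footprint contains only recurrent edges and the eventual underlying graph is the connected ring. Second, during the rounds $0,\ldots,B$ no edge is ever present, so in each such round every robot is either oriented towards a missing edge (hence stuck) or has direction $\bot$; in either case it does not move. Consequently the positions at time $B$ coincide with those of $\gamma_0$, i.e. the robots are still scattered exactly as above.

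Finally I would close the contradiction. Because $\mathcal{A}$ is assumed to solve \G~(resp. \GW) over the class with bound $B$, in its execution on $\mathcal{G}\in\RE\subset\COT$ started from $\gamma_0$ all robots (resp. all but at most one) must be terminated on a common node by time $B$; but we have just shown that at time $B$ the configuration is still $\gamma_0$, violating the corresponding liveness. As $\mathcal{G}$ belongs to both \RE~and \COT, the same counterexample refutes \G~and \GW~in each of the two classes, which establishes the statement. The only points requiring genuine care — and hence the main obstacle — are the two routine but essential checks that the stalling interval $[0,B]$ is compatible with membership in \RE~and \COT~(which holds precisely because these classes constrain only eventual behaviour) and that the single $\gamma_0$ breaks the liveness of both \G~and \GW; everything else is immediate from the definitions.
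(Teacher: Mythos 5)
Your proposal is correct and follows essentially the same route as the paper: the paper's own (informal) justification is precisely that \COT~and \RE~allow all edges to be missing for an arbitrarily long initial prefix, so with an initially scattered configuration no bounded-time liveness (\G~or \GW) can be met. Your write-up merely formalizes this observation by extracting the bound $B$, freezing all edges up to $B$, and checking membership in \RE\ (hence \COT) and the failure of both liveness conditions, which is exactly the intended argument.
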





\section{Gracefully Degrading Gathering}\label{sec:algo} 

This section presents \Gathering, our gracefully degrading gathering algorithm, that
aims to solve different variants of the gathering problem under various dynamics (refer to Table \ref{tab:summary}). 

In Subsection~\ref{sub:overview}, we informally describe our algorithm clarifying which
variant of gathering is satisfied within which class of evolving graphs.
Next, Subsection~\ref{sub:formalAlgorithm} presents formally the algorithm.

\subsection{Overwiew}\label{sub:overview}

~Our algorithm has to overcome various difficulties. First, robots are evolving in an environment in which no node can
be distinguished.  So, 
the trivial algorithm in which the robots meet on a particular node is impossible. 
Moreover, since the footprint of the graph is a ring, (at most) one of the $n$ 
edges may be an eventual missing edge. 
This is typically the case of classes~$\COT$ and~$\AC$. 
In that case, no robot is able to distinguish an eventual missing edge from a missing edge that will appear later 
in the execution.  In particular, a robot 
stuck by a missing edge does not know whether it can wait for the missing edge to appear again or not.
Finally, despite the fact that no robot is aware of which class of dynamic 
graphs robots are evolving in, the algorithm is required to meet at least the specification of
the gathering according to the class of dynamic graphs in which it is
executed or a better specification than this one. 

The overall scheme of the algorithm consists in first detecting $r_{min}$, the robot having the minimum identifier 
so that the $\mathcal{R}$ robots eventually gather on its node (\ie, satisfying  specification~$\GE$).
Of course, depending on 
 the particular evolving graph 
in which our algorithm is executed,
$\GE$ may not achieved. 
In the weakest class (class~$\COT$) and the ``worst'' possible evolving graph, 
one can expect specification~$\GEW$ only, \ie, at least $\mathcal{R} - 1$ robots gathered.

The algorithm proceeds in four successive phases: \AmITheMin, \MinWaitToBeKnown, \Walk, and \WaitTermination.  
Actually, again depending on the class of graphs and the evolving graph 
in which our algorithm is executed,  
we will see that the four phases are not necessarily all executed since the execution can be
stopped prematurely, especially in case where $\GE$ (or $\G$) is achieved.  By contrast, they can also never be completed 
in some weak settings (namely $\AC$ or $\COT$), solving $\GEW$ (or $\GW$) only.

\begin{algorithm}[t]
 \caption{Predicates used in \Gathering} \label{predicates}
 \footnotesize
      
      \textbf{MinDiscovery()} $\equiv$
      
	    \quad \begin{tabular}{l}
			  $(state_{r} = potentialMin \wedge \exists r' \in NodeMate(), (state_{r'} = righter$ $\wedge$ $id_{r} < id_{r'})) \vee$ \\ 
			  $\exists r' \in NodeMate(), idMin_{r'} = id_{r} \vee$ \\
			  $\exists r' \in NodeMate(), (state_{r'} \in \{dumbSearcher, potentialMin\}$ $\wedge$ $id_{r} < idPotentialMin_{r'}) \vee$ \\
			  $rightSteps_{r} = 4 * id_{r} * n$ \\                         
		   \end{tabular}

      \textbf{\GEBold$()$} $\equiv$
     
	  \quad \quad $|NodeMate()| = \mathcal{R} - 1$

      \textbf{\GEWBold$()$} $\equiv$
      
	  \quad \quad $|NodeMate()| = \mathcal{R} - 2$ $\wedge$ $\exists r' \in \{r\} \cup NodeMate(), state_{r'} \in \{minWaitingWalker, minTailWalker\}$ 
      
      \textbf{HeadWalkerWithoutWalkerMate()} $\equiv$
   
	  \quad \quad $state_{r} = headWalker$ $\wedge$ $ExistsEdge(left, previous)$ $\wedge$ $\lnot HasMoved()$ 
	         $\wedge$ $NodeMateIds() \neq walkerMate_{r}$

      \textbf{LeftWalker()} $\equiv$

	  \quad \quad $state_{r} = leftWalker$ 

      \textbf{HeadOrTailWalkerEndDiscovery()} $\equiv$
      
	  \quad \quad $state_{r} \in \{headWalker, tailWalker, minTailWalker\}$ $\wedge$ $walkSteps_{r} = n$

      \textbf{HeadOrTailWalker()} $\equiv$
      
	  \quad \quad $state_{r} \in \{headWalker, tailWalker, minTailWalker\}$

      \textbf{AllButTwoWaitingWalker()} $\equiv$
       
	  \quad \quad $|NodeMate()| = \mathcal{R} - 3$ $\wedge$
				  $\forall r' \in \{r\} \cup NodeMate(), state_{r'} \in \{waitingWalker, minWaitingWalker\}$

      \textbf{WaitingWalker()} $\equiv$
     
	  \quad \quad $state_{r} \in \{waitingWalker, minWaitingWalker\}$

      \textbf{PotentialMinOrSearcherWithMinWaiting(r')} $\equiv$
    
	  \quad \quad $state_{r} \in \{potentialMin, dumbSearcher, awareSearcher\}$ $\wedge$ $state_{r'} = minWaitingWalker$

      \textbf{RighterWithMinWaiting(r')} $\equiv$
      
	  \quad \quad $state_{r} = righter$ $\wedge$ $state_{r'} = minWaitingWalker$

      \textbf{NotWalkerWithHeadWalker(r')} $\equiv$
    
	  \quad \quad $state_{r} \in \{righter, potentialMin, dumbSearcher, awareSearcher\}$ $\wedge$ $state_{r'} = headWalker$

      \textbf{NotWalkerWithTailWalker(r')} $\equiv$
    
	  \quad \quad $state_{r} \in \{righter, potentialMin, dumbSearcher, awareSearcher\}$ $\wedge$ $state_{r'} = minTailWalker$ 
      
      \textbf{PotentialMinWithAwareSearcher(r')} $\equiv$
      
	  \quad \quad $state_{r} = potentialMin$ $\wedge$ $state_{r'} = awareSearcher$ 
   
      \textbf{AllButOneRighter()} $\equiv$
      
	  \quad \quad $|NodeMate()| = \mathcal{R} - 2$ $\wedge$ $\forall r' \in \{r\} \cup NodeMate(), state_{r'} = righter$ 

      \textbf{RighterWithSearcher(r')} $\equiv$
  
	  \quad \quad $state_{r} = righter$ $\wedge$ $state_{r'} \in \{dumbSearcher, awareSearcher\}$

      \textbf{PotentialMinOrRighter()} $\equiv$
    
	  \quad \quad $state_{r} \in \{potentialMin, righter\}$

      \textbf{DumbSearcherMinRevelation()} $\equiv$
    
	  \quad \quad $state_{r} = dumbSearcher$ $\wedge$ $\exists r' \in NodeMate(), (state_{r'} = righter$ $\wedge$ $id_{r'} > idPotentialMin_{r})$

      \textbf{DumbSearcherWithAwareSearcher(r')} $\equiv$
    
	  \quad \quad $state_{r} = dumbSearcher$ $\wedge$ $state_{r'} = awareSearcher$

      \textbf{Searcher()} $\equiv$
   
	  \quad \quad $state_{r} \in \{dumbSearcher, awareSearcher\}$ 
     
\end{algorithm}

\myparagraph{Phase~\AmITheMin}
This phase leads each robot 
to know whether it possesses the minimum identifier.
Initially every robot $r$ considers the $right$ direction.  
Then $r$ always moves to the $right$ until it moves 
$4 * n * id_{r}$ steps on the right, where $id_{r}$ is the identifier of $r$ and $n$, the size of the ring. 
The first robot that succeeds to do so is necessarily $r_{min}$.  
Depending on the class of graph, one eventual missing edge may exist, preventing 
$r_{min}$ to move on the $right$ direction during $4 * n * id_{r_{min}}$ steps.

However, in that case at least $\mathcal{R} - 1$ robots succeed
to be located on a same node, but not necessarily the node where $r_{min}$ is located. 
Note that the weak form of gathering ($\GEW$) could be solved in that case.  
However, the $\mathcal{R} - 1$ robots gathered cannot stop their execution. Indeed, 
our algorithm aims at gathering the robots on the node occupied by $r_{min}$. However, 
$r_{min}$ may not be part of the $\mathcal{R} - 1$ robots that gathered. Further, it is possible for 
$\mathcal{R} - 1$ robots to gather (without $r_{min}$) even when $r_{min}$ succeeds in moving 
$4 * n * id_{r_{min}}$ steps to the right (\ie even when $r_{min}$ stops to move because it 
completed Phase~\AmITheMin). In that case, if the $\mathcal{R} - 1$ robots that 
gathered stop their execution, \GE~cannot be solved in \RE, \BRE~and \ST~rings,
as \Gathering~should do. Note that, it is also possible for $r_{min}$ to be part 
of the $\mathcal{R} - 1$ robots that gathered. 


Recall that robots can communicate when they are both located in the same node. So, the
$\mathcal{R} - 1$ robots may be aware of the identifier of the robot with the minimum 
identifier among them. Since it can or cannot be the actual $r_{min}$, let us call this robot \potentialMin.  
Then, driven by \potentialMin, a search phase starts during which the
$\mathcal{R} - 1$ robots try to visit all the nodes of the ring infinitely often in both
directions by subtle round trips. Doing so, 
$r_{min}$ eventually knows that it possesses the actual minimum identifier.

\myparagraph{Phase~\MinWaitToBeKnown}
The goal of the second phase consists in spreading the identifier of $r_{min}$ among 
the other robots.  The basic idea is that during this phase, $r_{min}$ stops moving and waits until 
$\mathcal{R} - 3$ other robots join it on its node so that its identifier is known by at least $\mathcal{R} - 3$ other robots. 
The obvious question arises:``{\em Why waiting for $\mathcal{R} - 3$ extra robots only?}''.
A basic idea to gather could be that once $r_{min}$ is aware that it possesses the minimum identifier, 
it can just stop to move and just wait for the other robots to eventually reach its location, just 
by moving toward the right direction. 
Actually, depending on 
the particular evolving graph considered
one missing edge $e$ may eventually appear, preventing robots from reaching $r_{min}$ by moving toward the same direction
only.  That is why the gathering of the $\mathcal{R} - 2$ robots is eventually achieved by the same search phase
as in Phase~\AmITheMin.  
However, by doing this, it is possible to have $2$ robots stuck on each extremity of $e$. 
Further, these two robots cannot change the directions they consider since a robot is not able to distinguish an eventual 
missing edge from a missing edge that will appear again later. This is why
during Phase~\MinWaitToBeKnown, $r_{min}$ stops to move until 
$\mathcal{R} - 3$ other robots join it to form a tower of $\mathcal{R} - 2$ robots. 
In this way these $\mathcal{R} - 2$ robots start the third phase simultaneously.

\myparagraph{Phase~\Walk}
The third phase is a {\em walk} made by the tower of $\mathcal{R} - 2$ robots. 
The $\mathcal{R} - 2$ robots are split into two distinct groups, {\em Head} and {\em Tail}. 
Head is the unique robot
with the maximum identifier of the tower.
Tail, composed of $\mathcal{R} - 3$ robots, is made of the other robots of the tower, led by
$r_{min}$.
Both move alternatively in the $right$ direction during $n$ steps 
such that between two movements of a given group the two groups are again 
located on a same node. This movement permits to prevent the two robots that do 
not belong to any of these two groups to be both stuck on different extremities
of an eventual missing edge (if any) once this walk is finished. Since 
there exists at most one eventual missing edge, we are 
sure that if the robots that have executed the walk stop moving forever, then 
at least one robot can join them during the next and last phase.

As noted, it can exist an eventual missing edge, therefore, Head and Tail may not 
complete Phase \Walk. Indeed, one of the two situations below may occur: $(i)$ Head and Tail together form a tower of $\mathcal{R} - 2$ robots but an eventual missing 
edge on their right prevents them to complete Phase \Walk; $(ii)$ Head and Tail are located on neighboring node and the edge between them is an eventual missing edge
that prevents Head and Tail to continue to move alternatively. 

Call $u$ the node where Tail is stuck on an eventual missing edge.
In the two situations described even if Phase \Walk~is not complete by both Head
and Tail, either $\GE$ or $\GEW$ is solved. Indeed, in the first situation, 
necessarily at least one robot $r$ succeeds to join $u$. In fact, either 
$r$ considers the good direction to reach $u$ or it meets a robot on the other 
extremity of the eventual missing edge that makes it considers the good direction 
to reach $u$. In the second situation, necessarily at least two robots $r$ and
$r'$ succeed to join $u$. This is done either because $r$ and $r'$ consider the
good direction to reach $u$ or because they reach the node where Head is located 
without Tail making them consider the good direction to reach $u$.

Once a tower of $\mathcal{R} - 1$ robots including $r_{min}$ is formed, 
$\GEW$ is solved. Then, the latter robot tries to reach the tower to 
eventually solve $\GE$ in favorable cases.


\myparagraph{Phase~\WaitTermination}
The last phase starts once the robots of Head have completed Phase \Walk. 
If it exists a time at which the robots of Tail complete Phase \Walk, then 
Head and Tail form a tower of $\mathcal{R} - 2$ robots and stop moving. As 
explained in the previous phase, {Phase~\Walk} ensures that at least one extra 
robot eventually joins the node where Head and Tail are located to form a tower of $\mathcal{R} - 1$ robots. 
Once a tower of $\mathcal{R} - 1$ robots including $r_{min}$ is formed, 
$\GEW$ is solved. Then,
the latter robot tries to reach the tower to eventually solve $\GE$ in favorable cases. 
In the case the robots of Tail never complete the phase \Walk, then this implies
that Head and Tail are located on neighboring node and that the edge between them is an 
eventual missing edge. As described in Phase \Walk~either 
\GEW~or \GE~is solved.

\begin{algorithm}[h!]
 \caption{Functions used in \Gathering} \label{functions}
 \footnotesize
 
    \textbf{Function StopMoving()} \label{algo:StopMoving}
    \begin{algorithmic} 
	\State $dir_{r} := \bot$ \smallbreak
    \end{algorithmic}
    
    \textbf{Function MoveLeft()} \label{algo:MoveLeft}
    \begin{algorithmic} 
	\State $dir_{r} := left$ \smallbreak
    \end{algorithmic}
    
    \textbf{Function BecomeLeftWalker()} \label{algo:BecomeLeftWalker}
    \begin{algorithmic}
	 \State $(state_{r}, dir_{r}) := (leftWalker, \bot)$ 
    \end{algorithmic}

    \textbf{Function Walk()} \label{algo:Walk_functions}
    \begin{algorithmic} 
	\State $dir_{r} := 
        \left\{
	       \begin{array}{ll}
		 \bot & $ if $(id_{r} = idHeadWalker_{r} \wedge walkerMate_{r} \neq NodeMateIds()) \vee \\ 
				     & \ \ \ \ (id_{r} \neq idHeadWalker_{r} \wedge idHeadWalker_{r} \in NodeMateIds())\\
		 right & $ otherwise $ 
	       \end{array}
	       \right.$ \smallbreak
	\State $walkSteps_{r} := walkSteps_{r} + 1$ if $dir_{r} = right \wedge ExistsEdge(right, current)$
    \end{algorithmic}
    
   \textbf{Function InitiateWalk()} \label{algo:InitiateWalk}
    \begin{algorithmic} 
	\State $idHeadWalker_{r} :=$ \Call{max}{$\{id_{r}\} \cup NodeMateIds()$}
	\State $walkerMate_{r} := NodeMateIds()$
	\State $state_{r} := 
	       \left\{
	       \begin{array}{ll}
		 headWalker & $ if $id_{r} = idHeadWalker_{r}\\
		 minTailWalker & $ if $state_{r} = minWaitingWalker\\
		 tailWalker & $ otherwise $ 
	       \end{array}
	       \right.$ \smallbreak
    \end{algorithmic} 
     
    \textbf{Function BecomeWaitingWalker(r')} \label{algo:BecomeWaitingWalker} 
    \begin{algorithmic} 
	\State $(state_{r}, idPotentialMin_{r}, idMin_{r}, dir_{r}) := (waitingWalker, id_{r'}, id_{r'}, \bot)$ \smallbreak
    \end{algorithmic}
    
    \textbf{Function BecomeMinWaitingWalker()} \label{algo:BecomeMinWaitingWalker}
    \begin{algorithmic} 
	\State $(state_{r}, idPotentialMin_{r}, idMin_{r}, dir_{r}) := (minWaitingWalker, id_{r}, id_{r}, \bot)$ \smallbreak
    \end{algorithmic}
    
    \textbf{Function BecomeAwareSearcher(r')} \label{algo:BecomeAwareSearcher}
    \begin{algorithmic} 
	 \State $(state_{r}, dir_{r}) := (awareSearcher, right)$
	 \State $(idPotentialMin_{r}, idMin_{r}) := 
	       \left\{
	       \begin{array}{ll}
	           (idPotentialMin_{r'}, idPotentialMin_{r'}) & $ if $state_{r'} = dumbSearcher \\ 
		   (idMin_{r'}, idMin_{r'}) & $ otherwise $ 
	       \end{array}
	       \right.$ \smallbreak
    \end{algorithmic}
        
    \textbf{Function BecomeTailWalker(r')} \label{algo:UpdateWalkDatas}
    \begin{algorithmic} 
	\State $(state_{r}, idPotentialMin_{r}, idMin_{r}) := (tailWalker, idPotentialMin_{r'}, idMin_{r'})$
	\State $(idHeadWalker_{r}, walkerMate_{r}, walkSteps_{r}) := (idHeadWalker_{r'}, walkerMate_{r'}, walkSteps_{r'})$ \smallbreak
    \end{algorithmic}
    
    \textbf{Function MoveRight()} \label{algo:MoveRight}
    \begin{algorithmic} 
	\State $dir_{r} := right$
	\State $rightSteps_{r} := rightSteps_{r} + 1$ if $ExistsEdge(dir, current)$ \smallbreak
    \end{algorithmic}
    
    \textbf{Function InitiateSearch()} \label{algo:InitiateSearch}
    \begin{algorithmic} 
    
	\State $idPotentialMin_{r} :=$ \Call{min}{$\{id_{r}\} \cup NodeMateIds()$}
	\State $state_{r} :=
	       \left\{
	       \begin{array}{ll}
		 potentialMin & $ if $id_{r} = idPotentialMin_{r}\\
		 dumbSearcher & $ otherwise $ 
	       \end{array}
	       \right.$ 
	\State $rightSteps_{r} := rightSteps_{r} + 1$ if $state_{r} = potentialMin$ $\wedge$ $ExistsEdge(dir, current)$ \smallbreak
    \end{algorithmic}

    \textbf{Function Search()} \label{algo:Search}
    \begin{algorithmic}
		\State $dir_{r} :=
		      \left\{
		      \begin{array}{ll}
			left & $ if $|NodeMate()| \geq 1 \wedge id_{r} =$ \Call{max}{$\{id_{r}\} \cup NodeMateIds()$}$\\
			right & $ if $|NodeMate()| \geq 1 \wedge id_{r} \neq$ \Call{max}{$\{id_{r}\} \cup NodeMateIds()$}$\\
			dir_{r} & $ otherwise $
		      \end{array}
		      \right.$ \smallbreak
    \end{algorithmic}
\end{algorithm}

\subsection{Algorithm} \label{sub:formalAlgorithm}

Before presenting formally our algorithm, we first describe the set of
variables of each robot. We recall that each robot $r$ knows $\mathcal{R}$, 
$n$ and $id_{r}$ as constants.

In addition to the variable $dir_{r}$ (initialized to $right$), each robot $r$
possesses seven variables described below. 
Variable $state_{r}$ allows 
the robot $r$ to know which phase of the algorithm
it is performing and (partially) indicates which movement the robot has to
execute. The possible values for this variable are \righter, \dumbSearcher, 
\awareSearcher, \potentialMin, \waitingWalker, \minWaitingWalker, \headWalker,
\tailWalker, \minTailWalker~and \leftWalker. 
Initially, $state_{r}$ is equal to
\righter. Initialized with $0$, $rightSteps_{r}$ counts the number of steps done by
$r$ in the $right$ direction when $state_{r} \in \{$\righter, \potentialMin$\}$.
The next variable is $id\-Po\-ten\-tial\-Min_{r}$.  Initially equals to $-1$, $id\-Po\-ten\-tial\-Min_{r}$
contains the identifier of 
the robot that possibly possesses the minimum 
identifier (a positive integer) of the system. This variable is especially set when $\mathcal{R} - 1$
\righter~are located on a same node. In this case, the variable 
$id\-Po\-ten\-tial\-Min_{r}$ of each robot $r$ that is involved in the tower of 
$\mathcal{R} - 1$ robots is set to the value of the minimum identifier possessed 
by these robots. 
The variable $idMin_{r}$ indicates the identifier of the robot that possesses the actual
minimum identifier among all the robots of the system. This variable is
initially set to $-1$. Let $walkerMate_{r}$ be the set of all the 
identifiers of the $\mathcal{R} - 2$ robots that initiate the Phase \Walk. 
Initially this variable is set to $\emptyset$. The counter $walkSteps_{r}$, initially $0$, 
maintains the number of steps done in the right direction while $r$ 
performs the Phase \Walk.  Finally, the 
variable $idHeadWalker_{r}$ contains the identifier of the robot that
plays the part of Head during the Phase \Walk.

\begin{algorithm}[t!]
\caption{\Gathering \label{algo:main}}
 \footnotesize
     
    \textbf{Rules for Termination} \label{algo:Termination}
    \begin{algorithmic} 
	\State \TerminationOne~:: \GE$()$ $\longrightarrow$ \textbf{terminate}
	\State \TerminationTwo~:: \GEW$()$ $\longrightarrow$ \textbf{terminate} \smallbreak
    \end{algorithmic}

    \textbf{Rules for Phase \WaitTermination} \label{algo:WaitTermination}
    \begin{algorithmic} 
	\State \rFourOne~:: $LeftWalker()$ $\longrightarrow$ \Call{MoveLeft}{}()
	\State \rFourTwo~:: $HeadWalkerWithoutWalkerMate()$ $\longrightarrow$ \Call{BecomeLeftWalker}{}()
	\State \rFourThree~:: $HeadOrTailWalkerEndDiscovery()$ $\longrightarrow$ \Call{StopMoving}{}() \smallbreak 
    \end{algorithmic}
    
    \textbf{Rules for Phase \Walk} \label{algo:Walk_rules}
    \begin{algorithmic} 
	\State \rThreeOne~:: $HeadOrTailWalker()$ $\longrightarrow$ \Call{Walk}{}() \smallbreak
    \end{algorithmic}

    \textbf{Rules for Phase \MinWaitToBeKnown} \label{algo:LeaderWaitToBeKnown}
    \begin{algorithmic} 
	\State \rTwoOne~:: $AllButTwoWaitingWalker()$ $\longrightarrow$ \Call{InitiateWalk}{}()
	\State \rTwoTwo~:: $WaitingWalker()$ $\longrightarrow$ \Call{StopMoving}{}()
		
	\State \rTwoThree~:: $\exists r' \in NodeMate(), PotentialMinOrSearcherWithMinWaiting(r')$ \newline $\text{~~~~~~~~~~~}\longrightarrow$ \Call{BecomeWaitingWalker}{r'} 
	
	\State \rTwoFour~:: $\exists r' \in NodeMate(), RighterWithMinWaiting(r')$ $\wedge$ $ExistsEdge(right, current)$ \newline $\text{~~~~~~~~~~~}\longrightarrow$ \Call{BecomeAwareSearcher}{r'} \smallbreak
	
    \end{algorithmic}

    \textbf{Rules for Phase \AmITheMin} \label{algo:AmITheLeader?}
    \begin{algorithmic} 
	\State \rOneOne~:: $PotentialMinOrRighter()$ $\wedge$ $MinDiscovery()$ $\longrightarrow$ \Call{BecomeMinWaitingWalker}{r}
	
	\State \rOneTwo~:: $\exists r' \in NodeMate(), NotWalkerWithHeadWalker(r')$ $\wedge$ $ExistsEdge(right, current)$ \newline $\text{~~~~~~~~~~~}\longrightarrow$ \Call{BecomeAwareSearcher}{r'}
	\State \rOneThree~:: $\exists r' \in NodeMate(), NotWalkerWithHeadWalker(r')$ \newline$\text{~~~~~~~~~~~}\longrightarrow$ \Call{BecomeAwareSearcher}{r'};  \Call{StopMoving}{}()
	\State \rOneFour~:: $\exists r' \in NodeMate(), NotWalkerWithTailWalker(r')$ $\longrightarrow$ \Call{BecomeTailWalker}{r'}; \Call{Walk}{}()

	\State \rOneFive~:: $\exists r' \in NodeMate(), PotentialMinWithAwareSearcher(r')$ \newline$\text{~~~~~~~~~~~}\longrightarrow$ \Call{BecomeAwareSearcher}{r'}; \Call{Search}{}()
		
	\State \rOneSix~:: $AllButOneRighter()$ $\longrightarrow$ \Call{InitiateSearch}{}()
	\State \rOneSeven~:: $\exists r' \in NodeMate(), RighterWithSearcher(r')$ $\longrightarrow$ \Call{BecomeAwareSearcher}{r'}; \Call{Search}{}() 
	\State \rOneEight~:: $PotentialMinOrRighter()$ $\longrightarrow$ \Call{MoveRight}{}() 
		
	\State \rOneNine~:: $DumbSearcherMinRevelation()$ $\longrightarrow$ \Call{BecomeAwareSearcher}{r}; \Call{Search}{}()
	\State \rOneTen~:: $\exists r' \in NodeMate(), DumbSearcherWithAwareSearcher(r')$\newline $\text{~~~~~~~~~~~}\longrightarrow$ \Call{BecomeAwareSearcher}{r'}; \Call{Search}{}()
	\State \rOneEleven~:: $Searcher()$ $\longrightarrow$ \Call{Search}{}()
    \end{algorithmic}
\end{algorithm}

Moreover, we assume the existence of
a specific instruction: \textbf{terminate}. By executing this instruction, a robot 
stops executing the cycle Look-Compute-Move forever.

To ease the writing of our algorithm, we define a set of predicates 
(presented in Algorithm~\ref{predicates}) and functions (presented in 
Algorithm~\ref{functions}), that are used in our gracefully degrading algorithm 
\Gathering.
Recall that, during the Compute phase, only the first rule whose $guard$ is
true in the view of an enabled robot is executed.

\section{Proofs of correctness of \Gathering} \label{sec:proof}

In this section, we first prove, in Subsection~\ref{GatheSolveCOT}, that \Gathering~solves \GEW~in 
\COT~rings. Then, in Subsection~\ref{gracefull}, we consider \AC, \RE, 
\BRE~and \ST~rings and for each of these classes of 
dynamic rings, we give the problem \Gathering~solves in it.

We want to prove that, while executing \Gathering, at least $\mathcal{R} - 1$
robots terminate their execution on the same node. Therefore, in the proofs of 
correctness, we show that our algorithm forces the robots to execute either
Rule \TerminationOne~or Rule \TerminationTwo~whatever the harsh situation.
Hence, the proofs are given in the case where these rules are not executed 
accidentally.

In the following, for ease of reading, we abuse the various values of the variable $state$ to qualify the robots. 
For instance, if the current value of variable $state$ of a robot is \righter, 
then we say that the robot is a \righter~robot. Let us call $min$ a robot such that
its variable $state$ is equal either to \minWaitingWalker~or to \minTailWalker.

\subsection{\Gathering~solves \GEW~in \COT~rings} \label{GatheSolveCOT}

In this subsection, we prove that \Gathering~solves \GEW~in 
\COT~rings. Since \Gathering~is divided into four phases, we prove
each of these phases hereafter.

\subsubsection{Proofs of Correctness of Phase \AmITheMin}

We recall that the goal of Phase \AmITheMin~of our algorithm is to make the
robot with the minimum identifier aware that it possesses the minimum
identifier among all the robots of the system. In our algorithm a robot is aware 
that it possesses the minimum identifier when it is $min$. Therefore, in this 
section we have to prove that only $r_{min}$ can become $min$, and that $r_{min}$
effectively becomes $min$ in finite time. We prove this respectively in 
Lemmas~\ref{atMostOneLeader} and \ref{atLeastOneLeader}.

%
%
%
%

First we give two observations that help us all along the proves of each phase.

\begin{observation} \label{noMoreMovingRightAndPLeader}
 By the rules of \Gathering, a robot whose $state$ is not either \righter~or
 \potentialMin~cannot become a \righter~or a \potentialMin.
\end{observation}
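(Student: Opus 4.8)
The plan is to establish this by direct inspection of the transition structure of \Gathering, since the statement amounts to saying that the set of states \{\righter, \potentialMin\} can be entered only from within itself. It suffices to prove the single-round version of the claim --- that a robot whose $state$ lies outside \{\righter, \potentialMin\} at the beginning of a round still lies outside it at the end --- and then to extend it to the whole execution by a trivial induction over the rounds: a robot that is at some time neither a \righter~nor a \potentialMin~would, by the single-round claim, remain outside \{\righter, \potentialMin\} at every subsequent time.

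First I would enumerate every function of Algorithm~\ref{functions} that writes to $state_{r}$, recording the value it assigns. The function \emph{BecomeLeftWalker} sets it to \leftWalker; \emph{InitiateWalk} to one of \headWalker, \minTailWalker, or \tailWalker; \emph{BecomeWaitingWalker} to \waitingWalker; \emph{BecomeMinWaitingWalker} to \minWaitingWalker; \emph{BecomeAwareSearcher} to \awareSearcher; \emph{BecomeTailWalker} to \tailWalker; and \emph{InitiateSearch} to either \potentialMin~or \dumbSearcher. The remaining functions (\emph{StopMoving}, \emph{MoveLeft}, \emph{Walk}, \emph{MoveRight}, and \emph{Search}) modify only $dir_{r}$, $rightSteps_{r}$, or $walkSteps_{r}$, leaving $state_{r}$ untouched.

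Two observations then finish the argument. The value \righter~never occurs on the right-hand side of an assignment to $state_{r}$, so no robot can ever \emph{become} a \righter; as the only source of the value \righter~is the initial state, a robot that is not currently a \righter~will never be one. The value \potentialMin~is produced by exactly one function, \emph{InitiateSearch}, which in turn is invoked by exactly one rule, \rOneSix, whose guard is the predicate \emph{AllButOneRighter()}. This predicate demands that every robot of $\{r\}\cup NodeMate()$ --- in particular $r$ itself --- be a \righter. Consequently, any robot that acquires state \potentialMin~was a \righter~in the immediately preceding configuration, so it could not have come from outside \{\righter, \potentialMin\}.

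Putting the two together, no transition moves a robot from a state outside \{\righter, \potentialMin\} into \righter~or \potentialMin, which is precisely the single-round claim, and the observation follows. I expect the only real effort to be one of exhaustive bookkeeping: verifying that the list of $state$-writing functions is complete and, crucially, that \emph{InitiateSearch} is called in no rule other than \rOneSix, so that \potentialMin~genuinely has the unique entry point claimed. Beyond this completeness check there is no conceptual difficulty.
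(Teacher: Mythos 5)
Your proof is correct, and it is exactly the rule-by-rule inspection that the paper relies on: the paper states this as an observation with no written proof, precisely because the only entry points into the states \righter~(never assigned by any function) and \potentialMin~(assigned only by \textsc{InitiateSearch}, invoked only by Rule \rOneSix~whose guard $AllButOneRighter()$ forces the executing robot to be a \righter) make the claim immediate. Nothing further is needed.
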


\begin{observation} \label{noMoreRighter}
 By the rules of \Gathering, a robot whose $state$ is not \righter~cannot 
 become a \righter~robot.
\end{observation}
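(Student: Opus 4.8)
The plan is to establish the claim by direct inspection of the algorithm, relying on the fact that a robot's variable $state_{r}$ is modified only through the functions of Algorithm~\ref{functions} invoked within the actions of the rules of Algorithm~\ref{algo:main}. First I would observe that the value \righter~is written into $state_{r}$ exactly once in the whole algorithm, namely at initialization (as specified when the variables are introduced). Consequently, it suffices to check that none of the functions appearing in the actions of \Gathering~ever assigns \righter~to $state_{r}$.

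Next I would go through the functions that touch $state_{r}$ and record the value each of them writes. The function \textbf{BecomeLeftWalker()} assigns \leftWalker; \textbf{InitiateWalk()} assigns one of \headWalker, \minTailWalker, or \tailWalker; \textbf{BecomeWaitingWalker()} assigns \waitingWalker; \textbf{BecomeMinWaitingWalker()} assigns \minWaitingWalker; \textbf{BecomeAwareSearcher()} assigns \awareSearcher; \textbf{BecomeTailWalker()} assigns \tailWalker; and \textbf{InitiateSearch()} assigns \potentialMin~or \dumbSearcher. All remaining functions (\textbf{StopMoving()}, \textbf{MoveLeft()}, \textbf{Walk()}, \textbf{MoveRight()}, and \textbf{Search()}) modify only $dir_{r}$ and the step counters, leaving $state_{r}$ unchanged. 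In particular, \righter~never appears on the right-hand side of an assignment to $state_{r}$ within any function.

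Finally I would conclude by combining these two facts: since the only assignment producing the value \righter~is the initial one and no function of the algorithm ever writes \righter~back into $state_{r}$, a robot whose current state differs from \righter~can never return to \righter~in any subsequent round. This yields the observation. The only point requiring care---hardly a genuine obstacle---is making the enumeration exhaustive: one must confirm that the action of every rule (the termination rules \TerminationOne~and \TerminationTwo, together with \rFourOne--\rFourThree, \rThreeOne, \rTwoOne--\rTwoFour, and \rOneOne--\rOneEleven) either executes the \textbf{terminate} instruction or invokes solely functions from the list above, so that no assignment writing \righter~into $state_{r}$ is overlooked.
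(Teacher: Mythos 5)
Your proposal is correct and matches the paper's justification: the paper states this as an observation backed only by inspection of the rules, and your explicit enumeration (the value \righter~is written only at initialization, and no function invoked by any rule's action ever assigns \righter~to $state_{r}$) is precisely that inspection carried out in full.
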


While executing \Gathering, once a robot knows that it possesses the minimum 
identifier, it remembers this information. In other words, once a robot becomes
$min$ it stays $min$ during the rest of the execution. We prove this statement 
in the following lemma.

\begin{lemma} \label{LeaderTheWholeExecution}
 $min$ is a closed state under \Gathering.
\end{lemma}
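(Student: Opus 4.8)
The plan is to prove closure by a direct, exhaustive inspection of the guarded rules of \Gathering, showing that no rule can move a robot out of the set of \emph{min} states (those with $state_r$ equal to \minWaitingWalker{} or \minTailWalker). Since rules are evaluated in order and exactly one action fires per round, it suffices to argue, rule by rule, that either the guard is unsatisfiable for a min robot $r$, or, when the guard may hold, the associated action leaves $state_r$ inside the min set. Observations~\ref{noMoreMovingRightAndPLeader} and~\ref{noMoreRighter} will serve only as sanity checks, since the real work is the guard analysis.

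First I would dispose of every rule whose guard forces $state_r$ to a fixed non-min value. Inspecting the predicates of Algorithm~\ref{predicates}, all of \rOneOne--\rOneEleven, \rTwoThree, \rTwoFour, \rFourOne{} and \rFourTwo{} are guarded (through $PotentialMinOrRighter$, $Searcher$, $RighterWithMinWaiting$, $NotWalkerWithHeadWalker$, $NotWalkerWithTailWalker$, $LeftWalker$, $HeadWalkerWithoutWalkerMate$, and similar) by the requirement that $state_r$ be \righter, \potentialMin, \dumbSearcher, \awareSearcher, \leftWalker{} or \headWalker; none is a min value, so these rules are disabled for $r$. This leaves only \rFourThree{} and \rThreeOne{} (whose guards accept $state_r=$\minTailWalker), \rTwoOne{} and \rTwoTwo{} (whose guards accept $state_r=$\minWaitingWalker), and the termination rules \TerminationOne{} and \TerminationTwo. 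The termination rules only execute \textbf{terminate}, which halts $r$ without altering its variables, so $state_r$ is preserved; moreover the proof convention sets them aside as not accidentally triggered.

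Three of the remaining four rules are immediately harmless: \rFourThree{} and \rTwoTwo{} invoke \emph{StopMoving} and \rThreeOne{} invokes \emph{Walk}, and neither function assigns $state_r$ (they touch only $dir_r$ and the step counters). Hence the single reachable state-changing action on a min robot is \emph{InitiateWalk}, fired by \rTwoOne. Note that \rTwoOne{} is disabled for a \minTailWalker{} robot, because its guard $AllButTwoWaitingWalker$ requires every robot of the group (including $r$) to be \waitingWalker{} or \minWaitingWalker; so we only need to treat the case $state_r=$\minWaitingWalker. For such $r$, the case analysis inside \emph{InitiateWalk} assigns \minTailWalker{} unless $id_r=idHeadWalker_r=\max(\{id_r\}\cup NodeMateIds())$, in which event $r$ would become \headWalker{} and closure would fail.

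Ruling out this \headWalker{} branch is the step I expect to be the main obstacle, as it is the only point where closure hinges on identifiers rather than on the syntactic form of the rules. I would discharge it by noting that $AllButTwoWaitingWalker$ forces $|NodeMate()|=\mathcal{R}-3\geq 1$ (since $\mathcal{R}\geq 4$), so the walking group contains at least one other robot, together with the fact that a robot in a min state carries the strictly smallest identifier of any tower it joins---i.e.\ it is the genuine minimum around which the \waitingWalker{} robots were gathered through \rTwoThree. Consequently $id_r<idHeadWalker_r$, the \headWalker{} branch is never selected, and \emph{InitiateWalk} maps $r$ to \minTailWalker. Combining the cases, every reachable transition keeps $state_r$ in the min set, so min is a closed state.
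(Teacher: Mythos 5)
Your overall strategy --- a rule-by-rule closure check --- is exactly the paper's (its proof is three sentences: a \minTailWalker~can only execute \rFourThree~and \rThreeOne, a \minWaitingWalker~only \rTwoOne~and \rTwoTwo, and none of these actions leaves the $min$ set). You go further than the paper in noticing that the function \textsc{InitiateWalk} fired by \rTwoOne~has a \headWalker~branch, selected when $id_{r} = idHeadWalker_{r} = \max(\{id_{r}\}\cup NodeMateIds())$, and you correctly single out the exclusion of that branch as the crux of the closure argument.

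However, your discharge of that crux is not justified. You assert ``the fact that a robot in a min state carries the strictly smallest identifier of any tower it joins'' and attribute it to the way \waitingWalker~robots are gathered through \rTwoThree. Rule \rTwoThree~performs no identifier comparison at all: any \potentialMin, \dumbSearcher~or \awareSearcher~co-located with the \minWaitingWalker~becomes a \waitingWalker, whether its identifier is smaller or larger than that of the \minWaitingWalker. The only available way to conclude that the \minWaitingWalker~is not the maximum of the tower is to know that it is $r_{min}$, i.e., that only the robot with the globally minimum identifier can become $min$ --- which is precisely Lemma~\ref{atMostOneLeader} (Corollary~\ref{uniqueLeader}), a result the paper establishes \emph{after} the present lemma and whose proof (Case 4) itself invokes Lemma~\ref{LeaderTheWholeExecution}. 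So, as written, your argument either rests on an unproven claim or becomes circular if that later result is imported. (To be fair, the paper's own proof silently skips the \headWalker~branch and implicitly relies on the same fact, so you did spot a real subtlety; but closing it would require an independent argument that the \minWaitingWalker~cannot hold the maximum identifier of the tower when \rTwoOne~fires, not an appeal to \rTwoThree.)
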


\begin{proof}
 A robot is a $min$ when its state is either equal to \minWaitingWalker~or to 
 \minTailWalker. A \minTailWalker~robot can only execute the rules 
 \rFourThree~and \rThreeOne~that do not update the variable $state$. A 
 \minWaitingWalker~robot can only execute the rules \rTwoOne~and \rTwoTwo~that 
 respectively makes it become a \minTailWalker~and does not change its state.
\end{proof}

In the following lemma, we prove that \righter~and \potentialMin~are robots that
always consider the right direction. This lemma helps us to prove the correctness of 
Phase \AmITheMin, as well as the correctness of Phase 
\MinWaitToBeKnown.

\begin{lemma} \label{rightDirection}
 If, at a time $t$, a robot is a \righter~or a \potentialMin, then it considers 
 the $right$ direction from the beginning of the execution until the Look phase 
 of time $t$.
\end{lemma}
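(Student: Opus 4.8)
The plan is to prove the statement by induction on $t$, combining a backward-closure property of the state set $\{$\righter$,$\potentialMin$\}$ (call it $S$) with a short case analysis of the few rules that can fire without leaving $S$. Throughout I use that the Move phase never alters $dir_{r}$: the value of $dir_{r}$ in a configuration is entirely determined either by the initialization or by the action executed during the preceding Compute phase.

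First I would record the backward-closure property. By Observation \ref{noMoreMovingRightAndPLeader}, a robot not in $S$ at some time cannot be in $S$ at the next time; contrapositively, if $r\in S$ at time $t$ then $r\in S$ at every time $t'\le t$. Hence it suffices to prove the per-time statement ``if $r\in S$ at time $t$ then $dir_{r}=right$ at time $t$''. Applying this statement at each $t'\le t$ (all of which satisfy $r\in S$ by backward-closure) yields the full claim that $r$ considers the $right$ direction from the beginning of the execution until the Look phase of time $t$.

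The per-time statement is proved by induction on $t$. For $t=0$ it holds because $dir_{r}$ is initialized to $right$. For the inductive step, assume $r\in S$ at time $t+1$; by backward-closure $r\in S$ at time $t$, so by the induction hypothesis $dir_{r}=right$ at time $t$. Since $r\in S$ at time $t$, the guard of \rOneEight ($PotentialMinOrRighter()$) is true, so $r$ is enabled and executes exactly the first applicable rule. The key observation is that every rule whose guard can be true for a robot in $S$, other than \rOneSix~and \rOneEight, sets $state_{r}$ to a value outside $S$: indeed \rOneOne, \rOneTwo, \rOneThree, \rOneFour, \rOneFive, \rOneSeven~and the Phase \MinWaitToBeKnown~rules \rTwoThree~and \rTwoFour~all act through BecomeMinWaitingWalker, BecomeAwareSearcher, BecomeTailWalker, or BecomeWaitingWalker, each of which exits $S$; all the \WaitTermination~and \Walk~rules, together with \rTwoOne~and \rTwoTwo, have guards testing states disjoint from $S$ and so cannot fire. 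Since $r\in S$ at time $t+1$, the rule executed during this round must therefore be \rOneSix~or \rOneEight. If it is \rOneEight, then MoveRight() sets $dir_{r}:=right$. If it is \rOneSix, then (in the only sub-case keeping $r$ in $S$) InitiateSearch() modifies $state_{r}$, $idPotentialMin_{r}$ and $rightSteps_{r}$ but never touches $dir_{r}$, so $dir_{r}$ retains the value $right$ it had at time $t$ by the induction hypothesis. In both cases $dir_{r}=right$ at time $t+1$, closing the induction.

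The only delicate point is the exhaustive rule bookkeeping in the inductive step: I must confirm that, for a robot in $S$, the sole transitions preserving membership in $S$ are via \rOneSix~and \rOneEight, and that neither resets $dir_{r}$ to $left$ or $\bot$. The $dir_{r}=right$ conclusion for \rOneSix~is inherited from the induction hypothesis precisely because InitiateSearch() is silent on $dir_{r}$, which is why the backward-closure step is essential rather than cosmetic. Termination via Rules \TerminationOne~and \TerminationTwo~is set aside by the standing convention that these are not triggered accidentally, so it does not interfere with the argument.
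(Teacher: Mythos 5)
Your proof is correct and follows essentially the same route as the paper's: backward-closure of $\{$\righter, \potentialMin$\}$ via Observations~\ref{noMoreMovingRightAndPLeader} and \ref{noMoreRighter}, the initialization $dir_{r}=right$, and the fact that the only transitions staying in this state set are Rule~\rOneEight~(which re-sets $dir_{r}$ to $right$) and Rule~\rOneSix~(whose InitiateSearch leaves $dir_{r}$ untouched). You merely make explicit, as a formal induction with exhaustive rule bookkeeping, what the paper's proof states more informally.
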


\begin{proof}
 Robots that are \righter~robots in a configuration $\gamma_{i}$ at time $i$ and
 that are still \righter~in the configuration $\gamma_{i + 1}$, consider the 
 $right$ direction during the move Phase of time $i$ (Rule \rOneEight). 
 Moreover, by Observation~\ref{noMoreRighter} and since initially all the robots
 are \righter~robots and consider the $right$ direction, if a robot is a
 \righter~during the Look phase of a time $t$, this implies that it considers 
 the $right$ direction from the beginning of the execution until the Look phase 
 of time $t$. 
 
 Similarly, robots that are \potentialMin~robots in a configuration $\gamma_{i}$
 at time $i$ and that are still \potentialMin~in the configuration 
 $\gamma_{i + 1}$, consider the $right$ direction during the move Phase of time 
 $i$ (Rule \rOneEight). The only way for a robot to become a \potentialMin~is to
 be a \righter~and to execute Rule \rOneSix. While executing Rule 
 \rOneSix, a \righter~that becomes \potentialMin~does not change the direction 
 it considers. Therefore, by Observations~\ref{noMoreMovingRightAndPLeader} and 
 \ref{noMoreRighter}, and by the arguments of the first paragraph, this implies
 that if a robot is a \potentialMin~during the Look phase of a time $t$, then it
 considers the $right$ direction from the beginning of the execution until the
 Look phase of time $t$.
\end{proof}

Now we prove one of the two main lemmas of this phase: we prove that only 
$r_{min}$ can be aware that it possesses the minimum identifier among all the 
robots of the system. 

\begin{lemma} \label{atMostOneLeader}
 Only $r_{min}$ can become $min$.
\end{lemma}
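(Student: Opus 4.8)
The plan is to reduce the statement to a single transition and then prove a strengthened invariant by induction over the execution. First I would observe that the only way to enter the state $min$ is through \minWaitingWalker: the function \emph{InitiateWalk} (the only writer of \minTailWalker, called by Rule~\rTwoOne) turns a robot into a \minTailWalker only if it was already a \minWaitingWalker, and by Lemma~\ref{LeaderTheWholeExecution} the state $min$ is closed. Hence it suffices to show that whenever a robot $r$ executes Rule~\rOneOne (the unique rule calling \emph{BecomeMinWaitingWalker}), we have $r=r_{min}$. Since Rule~\rOneOne is guarded by $PotentialMinOrRighter()\wedge MinDiscovery()$, the whole difficulty becomes: no disjunct of $MinDiscovery()$ can hold for a \righter~or \potentialMin~robot $r\neq r_{min}$.

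I would then argue by a minimal-counterexample over time, maintaining simultaneously the invariants: (a) every robot in state $min$ is $r_{min}$; (b) for every robot $r$, $idMin_{r}\in\{-1,id_{r_{min}}\}$; and (c) $idPotentialMin_{r}$ is either $-1$ or the identifier of an actual robot, so $idPotentialMin_{r}\geq id_{r_{min}}$ when positive. Invariant (c) is immediate by inspecting every writer of $idPotentialMin$: \emph{InitiateSearch} takes a minimum of genuine identifiers, while the remaining functions only copy an existing value or the identifier of a $min$ robot. Invariant (b) is the delicate one: $idMin_{r}$ is set to $id_{r_{min}}$ when copied from a \minWaitingWalker~(legitimate by (a)), to the robot's own identifier only inside \emph{BecomeMinWaitingWalker} (legitimate by the case analysis below), and to a value of $idPotentialMin$ only through the min-revelation Rule~\rOneNine.

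With these invariants available up to the supposed first time $t$ at which some $r\neq r_{min}$ becomes $min$, I would dispatch the four disjuncts of $MinDiscovery()$ for $r$. For the disjunct $idMin_{r'}=id_{r}$, invariant (b) gives $id_{r}=id_{r_{min}}$ directly, hence $r=r_{min}$, a contradiction. For the two ``search'' disjuncts (one comparing $id_{r}$ to a co-located \dumbSearcher/\potentialMin's $idPotentialMin_{r'}$, the other comparing a \potentialMin~$r$ to a co-located \righter~of larger identifier), I would use that $\mathcal{R}\geq 4$ forces at most one group of $\mathcal{R}-1$ robots, hence a single search group with a single absent robot, together with Observation~\ref{noMoreRighter}: a \righter~met by the search can only be that unique absent robot, and the strict inequality places $r$ strictly below the minimum of the whole group, identifying $r$ as $r_{min}$. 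For the disjunct $rightSteps_{r}=4\,id_{r}\,n$, I would use Lemma~\ref{rightDirection} (such an $r$ has always considered $right$) to argue that $r_{min}$, whose threshold $4\,id_{r_{min}}\,n$ is strictly smaller, reaches its own threshold and becomes $min$ before $r$ can accumulate $4\,id_{r}\,n$ successful right-steps while remaining a \righter~or \potentialMin, again contradicting minimality of $t$.

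The main obstacle I expect is closing the induction around invariant (b): the correctness of the min-revelation (Rule~\rOneNine) and of the two search disjuncts all rest on the structural fact that the search involves exactly $\mathcal{R}-1$ robots sharing a common $idPotentialMin$ equal to the true minimum of that group, with a single robot outside it. I would therefore isolate this ``single search group'' property as an auxiliary claim, proved jointly with (a)--(c) using $\mathcal{R}\geq 4$ and Observations~\ref{noMoreMovingRightAndPLeader} and~\ref{noMoreRighter}, so that Rule~\rOneNine writes $idPotentialMin=id_{r_{min}}$ into $idMin$ only when the absent robot genuinely has a larger identifier than the group minimum. The disjunct $rightSteps_{r}=4\,id_{r}\,n$ is the other subtle point, since in classes with an eventual missing edge robots may stall and bunch up; there I would lean on the gap/lockstep behaviour of synchronously right-moving robots to guarantee that $r_{min}$ settles (or the search forms) before $r$ reaches its larger threshold.
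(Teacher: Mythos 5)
Your overall route is essentially the paper's: a first-counterexample argument dispatched over the four disjuncts of $MinDiscovery()$, resting on the facts that Rule~\rOneSix~can fire only once (hence a single search group of $\mathcal{R}-1$ robots whose common $idPotentialMin$ is that group's minimum, with exactly one robot left outside), that every non-default $idMin$ value equals $id_{r_{min}}$, and a lockstep bound on the $rightSteps$ counters; your invariants (a)--(c) are an explicit packaging of what the paper proves inline, and your handling of the two ``search'' disjuncts and of the $idMin$ disjunct is sound. Two minor bookkeeping omissions: $idMin$ is also written with a \dumbSearcher's $idPotentialMin$ through Rule~\rOneSeven~(not only through Rule~\rOneNine; the two fire in the same revelation scenario, and the guard ordering of \rOneOne~before \rOneSeven~is what forces $id_{r_r}>idPotentialMin_{r_d}$ there), and it is copied from \awareSearcher/\headWalker~sources by Rules~\rOneFive, \rOneTwo, \rOneThree~and \rOneTen; both are covered by the same revelation/copy argument, so this is completeness of enumeration rather than a flaw.

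The genuine gap is in the disjunct $rightSteps_{r}=4\,id_{r}\,n$. Establishing that $r_{min}$ reaches its smaller threshold and becomes $min$ first does not ``contradict the minimality of $t$'' ($t$ is the first time a robot \emph{other than} $r_{min}$ becomes $min$, so $r_{min}$ becoming $min$ earlier contradicts nothing), and it does not by itself prevent $r$ from reaching its own threshold later. What closes this case in the paper is quantitative: (i) while two robots are both \righter/\potentialMin~their counters differ by at most $n$, plus one extra $n$ for the round in which $r_{min}$'s transition to $min$ becomes visible; (ii) once $r_{min}$ is a stationary \minWaitingWalker, a robot still \righter~or \potentialMin~that keeps moving right reaches $r_{min}$'s node within at most $n$ further successful steps, and because Rules~\rTwoThree~and \rTwoFour~are ranked before \rOneOne, the meeting converts it (or it is stuck without incrementing $rightSteps$ until \rTwoFour~fires), so it gains at most $n$ more right steps in a state eligible for \rOneOne; (iii) hence its counter never exceeds $4\,id_{r_{min}}\,n+3n$, which is strictly less than $4\,id_{r}\,n$ since $id_{r}\geq id_{r_{min}}+1$ --- contradicting the case hypothesis itself. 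Your proposal names the lockstep ingredient but stops at ``$r_{min}$ settles first,'' so as written the fourth case is not closed; you need the forced-meeting-and-conversion step (ii) together with the explicit inequality (iii).
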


\begin{proof}
 Assume that there exists a robot $r \neq r_{min}$ that becomes $min$. Assume 
 also that $r$ is the first robot different from $r_{min}$ that becomes $min$.
 By definition of $r_{min}$, $id_{r} > id_{r_{min}}$.
 
 A robot that is a $min$ is a robot such that its variable $state$ is 
 either equal to \minWaitingWalker~or to \minTailWalker. A robot becomes 
 \minTailWalker~only if it executes Rule \rTwoOne. A robot can execute
 Rule \rTwoOne~only if it is a \minWaitingWalker. A robot becomes 
 \minWaitingWalker~only if it executes Rule \rOneOne. Only \righter~robots
 or \potentialMin~robots can execute Rule \rOneOne~(refer to predicate
 $Po\-ten\-tial\-Min\-Or\-Righter()$). Then by 
 Observation~\ref{noMoreMovingRightAndPLeader}, we conclude that each robot can
 execute Rule \rOneOne~at most once. $(*)$ 
 
 In the following, let us consider the different conditions of the predicate 
 $Min\-Disco\-very()$ of Rule \rOneOne~that permits $r$ to become $min$.

 \begin{description}
  \item [Case 1:] \textbf{$\mathbf{r}$ becomes $\mathbf{min}$ 
  because the condition ``$\mathbf{state_{r} = po\-ten\-tial\-Min}$
  $\mathbf{\wedge}$ $\mathbf{\exists r' \in Node\-Mate(),}$ 
  $\mathbf{(state_{r'} = righter}$ $\mathbf{\wedge}$ 
  $\mathbf{id_{r} < id_{r'})}$'' is true.}
  
  The only way for a robot to have its variable $state$ set to \potentialMin~is 
  to execute Rule \rOneSix. This rule is executed when $\mathcal{R} - 1$ 
  \righter~robots are on a same node. Among these $\mathcal{R} - 1$ 
  \righter~robots, the one with the minimum identifier sets its variable $state$
  to \potentialMin~while the other robots set their variables $state$ to
  \dumbSearcher.
  By Observation~\ref{noMoreMovingRightAndPLeader}, a robot that becomes a 
  \dumbSearcher~robot after the execution of Rule \rOneSix~can never become
  \righter~robot or \potentialMin~robot. Moreover, by 
  Observation~\ref{noMoreRighter}, a
  robot that becomes a \potentialMin~can never become a \righter. Since
  $\mathcal{R} - 1$ \righter~are needed to execute Rule \rOneSix, this rule
  can be executed only once during the execution. Therefore if $r$ is a
  \potentialMin, it is necessarily the robot that possesses the minimum 
  identifier among the $\mathcal{R} - 1$ robots that execute Rule \rOneSix.
  Moreover, if there exists a \righter~robot $r'$ when $r$ is \potentialMin,
  this implies that $r'$ has not executed Rule \rOneSix. Hence if
  $id_{r} < id_{r'}$, this necessarily implies that $r = r_{min}$, therefore
  there is a contradiction with the fact that $r \neq r_{min}$. 
  
  \item [Case 2:] \textbf{$\mathbf{r}$ becomes $\mathbf{min}$ because the
  condition ``$\mathbf{\exists r' \in Node\-Mate(),}$
  $\mathbf{id\-Min_{r'} = id_{r}}$'' is true.}
  
  By $(*)$, $r$ is not yet $min$ at the time of its meeting with $r'$. A 
  robot $r'$ can update its variable $id\-Min$ with the identifier (other than
  its) of a robot that is not $min$ only when it executes Rules \rOneFive, 
  \rOneSeven, \rOneNine~or \rOneTen. Among these rules only the rules 
  \rOneSeven~(in the case a \righter~is located with a \dumbSearcher) and
  \rOneNine~permit a robot to update its variable $id\-Min$ with the identifier
  of a robot without copying the value of the variable $id\-Min$ of another
  robot. Therefore at least one of these rules is necessarily executed at a 
  time, since initially the variables $id\-Min$ of the robots are equal to
  $-1$. To execute Rule \rOneSeven~(in the case a \righter~is located with
  a \dumbSearcher) or Rule \rOneNine, a \dumbSearcher~robot must be present
  in the execution. Only the execution of Rule \rOneSix~permits to have 
  \dumbSearcher~robots in the execution. This rule is executed when
  $\mathcal{R} - 1$ \righter~robots are on a same node. The $\mathcal{R} - 1$ 
  robots that execute this rule, set their variables $id\-Po\-ten\-tial\-Min$ to
  the identifier of the robot that becomes \potentialMin~while executing this
  rule. Moreover if a robot is a \dumbSearcher~in a configuration
  $\gamma_{t}$ at time $t$ and is still a \dumbSearcher~in the configuration
  $\gamma_{t + 1}$ then it does not update its variable $id\-Po\-ten\-tial\-Min$ 
  during time $t$ (since it executes Rule \rOneEleven). 
  
  In the case Rule \rOneSeven~is executed because a \righter~$r_{r}$ is 
  located with a \dumbSearcher~$r_{d}$ necessarily 
  $id_{r_{r}} > idPotentialMin_{r_{d}}$, otherwise it is not possible for 
  $r_{r}$ to execute Rule \rOneSeven, since it would have executed Rule 
  \rOneOne~at the same round (since the predicate $Min\-Disco\-very()$ is true 
  because $(state_{r_{d}} \in \{dumb\-Sear\-cher, 
  po\-ten\-tial\-Min\} \wedge id_{r_{r}} < id\-Po\-ten\-tial\-Min_{r_{d}})$).
  Therefore if Rule \rOneSeven~is executed at round $t$ because a 
  \righter~$r_{r}$ is located with a \dumbSearcher~$r_{d}$, this implies, by the
  predicate $Dumb\-Sear\-cher\-Min\-Re\-ve\-la\-tion()$ of Rule \rOneNine, that 
  Rule \rOneNine~is also executed at round $t$. Indeed, $r_{r}$ executes
  Rule \rOneSeven, while $r_{d}$ executes Rule \rOneNine. The reverse is 
  also true: if a \dumbSearcher~$r_{d}$ executes Rule \rOneNine~at round 
  $t$, then necessarily a \righter~$r_{r}$, such that 
  $id_{r_{r}} > idPotentialMin_{r_{d}}$, executes Rule \rOneSeven~at round
  $t$. While executing respectively these rules the two robots update their 
  variables $id\-Min$ with the value of the variable $id\-Po\-ten\-tial\-Min$ of
  the \dumbSearcher. By using the same arguments as the one used in case 1, we
  know that $id\-Po\-ten\-tial\-Min$ is the identifier of $r_{min}$. Therefore
  the variables $id\-Min$ are either set with the identifier of $r_{min}$ while
  Rules \rOneSeven~and \rOneNine~are executed, or copied from another robots
  while Rules \rOneFive~or \rOneTen~are executed. However whatever the rule 
  executed the value of $id\-Min$ is set with the identifier of $r_{min}$. 
  
  \item [Case 3:] \textbf{$\mathbf{r}$ becomes $\mathbf{min}$ because the 
  condition ``$\mathbf{\exists r' \in Node\-Mate(),}$ 
  $\mathbf{(state_{r'} \in \{dumb\-Sear\-cher,}$ $\mathbf{po\-ten\-tial\-Min\}}$ 
  $\mathbf{\wedge}$ $\mathbf{id_{r} < id\-Po\-ten\-tial\-Min_{r'}})$'' is true.}  
  
  Only the execution of Rule \rOneSix~permits to have \dumbSearcher~or 
  \potentialMin~in the execution. This rule is executed when $\mathcal{R} - 1$ 
  \righter~robots are on a same node. When executing this rule, the 
  $\mathcal{R} - 1$ robots set their variables $id\-Po\-ten\-tial\-Min$ to the 
  identifier of the robot that possesses the minimum identifier among them.
  Moreover among the $\mathcal{R} - 1$ robots that execute Rule \rOneSix,
  one robot becomes \potentialMin~while the other become \dumbSearcher. Besides 
  if a robot is a \dumbSearcher~(resp. a \potentialMin) in a configuration 
  $\gamma_{t}$ at time $t$ and is still a \dumbSearcher~(resp. a \potentialMin)
  in the configuration $\gamma_{t + 1}$ then it does not update its variable 
  $id\-Po\-ten\-tial\-Min$ during time $t$ since it executes Rule 
  \rOneEleven~(resp. \rOneEight). As Rule \rOneSix~can only
  be executed once (see the arguments of case 1), if $r$ meets a
  \dumbSearcher~or a \potentialMin~$r'$, such that 
  $id_{r} < id\-Po\-ten\-tial\-Min_{r'}$, this necessarily implies that $r'$ is 
  issued of the execution of Rule \rOneSix~while $r$ has not executed this
  rule, and therefore $r = r_{min}$, which is a contradiction. 
 
  \item [Case 4:] \textbf{$\mathbf{r}$ becomes $\mathbf{min}$ because 
  $\mathbf{right\-Steps_{r} = 4 * id_{r} * n}$.}
  
  At the time where $r$ becomes $min$, $r_{min}$ is either a \righter~robot, a
  \potentialMin~robot or $min$, otherwise this implies that there already exists
  a $min$ (other than $r_{min}$) in the execution, which is a contradiction with 
  the fact that $r$ is the first robot different from $r_{min}$ that becomes 
  $min$.
  
  By the predicate $Po\-ten\-tial\-Min\-Or\-Righter()$ of Rule \rOneOne, only 
  \righter~robots or \potentialMin~robots can become $min$. By 
  Lemma~\ref{rightDirection}, if, at a time $t$, a robot is a \righter~or a 
  \potentialMin, then it considers the $right$ direction from the beginning of 
  the execution until the Look phase of time $t$.
  Robots that are \righter~robots or \potentialMin~robots in a configuration
  $\gamma_{t}$ at time $t$ and that are either \righter~or \potentialMin~in the
  configuration $\gamma_{t + 1}$ increase from 1 their variables $right\-Steps$
  each time an adjacent edge in the right direction to their positions is 
  present (Rules \rOneSix~and \rOneEight). Therefore, by the predicate
  $Min\-Discovery()$ of Rule \rOneOne~a robot $r"$ moves at most during
  $4 * id_{r"} * n$ steps in the right direction before being $min$.
  
  By Lemma~\ref{LeaderTheWholeExecution}, from the time a robot becomes $min$, 
  it is either a \minWaitingWalker~or a \minTailWalker. Therefore it can only 
  execute Rules \TerminationOne, \TerminationTwo, \rTwoOne, \rTwoTwo, 
  \rThreeOne~and \rFourThree. This 
  implies that once a robot is $min$, it considers only either the $right$ or 
  the $\bot$ direction, and can move during at most $n$ steps in the right 
  direction before stopping to move definitively (by executing the following 
  rules in the order: \rTwoTwo, \rTwoOne, \rThreeOne~and \rFourThree). Therefore
  by the previous paragraph, a $min$ $r"$ considers the right or the $\bot$
  direction from the beginning of the execution until the end of the execution, 
  and can move during at most $4 * id_{r"} * n + n$ steps in the right direction
  during the whole execution.
  
  Because of the dynamism of the ring, by 
  Observation~\ref{noMoreMovingRightAndPLeader} and since when a \righter~or a 
  \potentialMin~robot stops to be a \righter~or a \potentialMin~robot, it 
  stops to update the value of its variable $right\-Steps$, we have:  
  $\forall r_{1}, r_{2} \in \mathcal{R}^{2}, 
  state_{r_{1}}, state_{r_{2}} \in \{righter, po\-ten\-tial\-Min\}^{2},
  |right\-Steps_{r_{1}} - right\-Steps_{r_{2}}| \leq n$. 
  
  Because it takes one round for a robot to update its variable $state$ to
  $min$, a \righter~or a \potentialMin~can be located with a robot $r$ just the 
  round before $r$ becomes $min$. Therefore this \righter~or \potentialMin~can
  move again in the right direction during at most $n$ steps without meeting the 
  $min$.
  
  We know that $id_{r_{min}} < id_{r}$, therefore we have
  $4 * id_{r_{min}} * n + n + n + n < 4 * id_{r} * n$. Hence there exists a time
  at which $r$ meets $r_{min}$ while $r_{min}$ is $min$ and $r$ is not yet 
  $min$. At this time, by the rules of \Gathering, $r$ stops being a \righter~or
  a \potentialMin~robot, and hence by 
  Observation~\ref{noMoreMovingRightAndPLeader}, $r$ cannot be anymore a
  \righter~robot or a \potentialMin~robot and therefore it cannot become $min$,
  which leads to a contradiction.
 \end{description}
\end{proof}

The following lemma helps us to prove the Lemma~\ref{atLeastOneLeader}. This 
lemma is true only if there is no $min$ in the execution. In other words, it 
is true only if all the robots are executing Phase \AmITheMin.

\begin{lemma} \label{NotBotDirection}
 If there is no $min$ in the execution, if, at time $t$, a robot $r$ is such 
 that $state_{r} \in \{dumb\-Searcher, aware\-Searcher\}$, then, during the Move
 phase of time $t - 1$, it does not consider the $\bot$ direction.
\end{lemma}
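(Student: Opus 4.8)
The plan is to argue by strong induction on $t$, for every robot $r$, under the standing hypothesis that no $min$ ever appears in the execution. I would first determine which rule $r$ could have executed during the Compute phase of round $t-1$ in order to be a \dumbSearcher~or an \awareSearcher~at time $t$. Inspecting Algorithm~\ref{algo:main} together with Algorithm~\ref{functions}, the only rules whose action yields such a state are \rOneSix~(producing a \dumbSearcher~via \textbf{InitiateSearch}), \rOneTwo, \rOneThree, \rOneFive, \rOneSeven, \rOneNine, \rOneTen~and \rTwoFour~(producing an \awareSearcher~through \textbf{BecomeAwareSearcher}), and \rOneEleven~(which merely runs \textbf{Search} and leaves a \dumbSearcher/\awareSearcher~state unchanged). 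Every other rule produces a ``walker''-type state, a $min$ state, or keeps a non-searcher state, hence is irrelevant to the conclusion.

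The first key step is a preliminary observation: if no $min$ exists in the execution, then no robot is ever a \headWalker~(nor a \waitingWalker, nor a \minWaitingWalker). Indeed, a \headWalker~is created only by \rTwoOne, whose guard requires robots in state \waitingWalker~or \minWaitingWalker; a \waitingWalker~is produced only by \rTwoThree, whose guard requires a colocated \minWaitingWalker; and a \minWaitingWalker~is a $min$, which by assumption never appears. Consequently the guards of \rOneTwo~and \rOneThree~(both needing a colocated \headWalker) and of \rTwoFour~(needing a colocated \minWaitingWalker) are never satisfied. This is the crucial point, because \rOneThree~is the unique rule that both turns a robot into an \awareSearcher~and, through \textbf{StopMoving}, sets $dir_{r} := \bot$; eliminating it removes the only $\bot$-setting transition compatible with the target state.

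It then remains to check the surviving rules. For \rOneSix, the robot is a \righter~just before (its guard $AllButOneRighter()$), so by Lemma~\ref{rightDirection} it considers $right$, and \textbf{InitiateSearch} does not modify $dir_{r}$; hence $dir_{r} = right$. For \rOneFive, \rOneSeven, \rOneNine~and \rOneTen, the action is \textbf{BecomeAwareSearcher} (setting $dir_{r} := right$) followed by \textbf{Search}; since each of their guards requires a colocated robot, so that $|NodeMate()| \geq 1$, \textbf{Search} resets $dir_{r}$ to $right$ or $left$, and in all cases $dir_{r} \neq \bot$. The delicate case is \rOneEleven, which runs only \textbf{Search}: if $r$ is isolated, \textbf{Search} preserves the previous value of $dir_{r}$, namely its value in $\gamma_{t-1}$. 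Here $r$ is already a \dumbSearcher~or an \awareSearcher~at time $t-1$ (the guard $Searcher()$), so the induction hypothesis applied at $t-1$ gives that the direction considered during the Move phase of time $t-2$, which is precisely the value of $dir_{r}$ in $\gamma_{t-1}$, is not $\bot$, closing the argument.

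The main obstacle I expect is exactly this self-referential branch of \rOneEleven: a searcher that stays isolated simply carries over its stored direction, which is why the statement must be proved by induction rather than by a one-shot case analysis. The base case is harmless, since at round $0$ every robot is a \righter~and can therefore only become a \dumbSearcher~via \rOneSix~(the rules producing an \awareSearcher, as well as \rOneEleven, all presuppose a previously existing searcher, \headWalker, or $min$). The second, more conceptual, point to get right is the no-\headWalker~observation of the second paragraph: carefully establishing that every walker-type and waiting state ultimately depends on the prior creation of a $min$ is what lets us discard \rOneThree~and thereby rule out the only direction-resetting transition that could otherwise contradict the claim.
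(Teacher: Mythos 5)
Your proof is correct and follows essentially the same route as the paper's: a case analysis on which rule was executed at round $t-1$ (the just-became-searcher rules set or keep a non-$\bot$ direction via \textsc{BecomeAwareSearcher}, \textsc{InitiateSearch} plus Lemma~\ref{rightDirection}, or \textsc{Search} with co-located robots), with the isolated \rOneEleven~case closed by induction — exactly the ``recurrence on the previous cases'' the paper invokes. Your explicit preliminary observation that no $min$ implies no \headWalker/\waitingWalker/\minWaitingWalker~(hence \rOneTwo, \rOneThree~and \rTwoFour~never fire, eliminating the only $\bot$-setting transition into a searcher state) is just a more explicit statement of the paper's remark that only \righter, \potentialMin, \dumbSearcher~and \awareSearcher~robots can exist in a $min$-free execution.
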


\begin{proof}
 Consider a robot $r$ such that, at time $t$,
 $state_{r} \in \{dumb\-Searcher, aware\-Searcher\}$.
 
 While executing \Gathering, since initially all
 the robots are \righter, if there is no $min$, only \righter, \potentialMin, 
 \dumbSearcher~and \awareSearcher~robots can be present in the execution. 
 
 Consider then the two following cases.
 
 \begin{description}
  \item [Case 1:] \textbf{At time $\mathbf{t - 1}$, $\mathbf{r}$ is neither a 
  \textit{dumb\-Searcher} nor an \textit{aware\-Searcher}.}
  
  Whatever the state of $r$ at time $t - 1$ (\righter~or \potentialMin), to have 
  its variable state at time $t$ equals either to \dumbSearcher~or to
  \awareSearcher, $r$ executes at time $t - 1$ either Rule \rOneFive, 
  \rOneSix~or \rOneSeven.
  
  Consider first the case where $r$ executes Rule \rOneSix~at time $t - 1$. 
  Only \righter~robots can execute Rule \rOneSix. While executing Rule
  \rOneSix, $r$ becomes a \dumbSearcher~(since while executing this rule a 
  \righter~can become either a \dumbSearcher~or a \potentialMin). Moreover,
  while executing Rule \rOneSix, a \righter~that becomes \dumbSearcher~does 
  not change the direction it considers. By Lemma~\ref{rightDirection}, during
  the Look phase of time $t - 1$, $r$ considers the right direction and since 
  $r$ does not change its direction during the Compute phase of time $t - 1$,
  this implies that the lemma is proved in this case. 
 
  Consider now the case where $r$ executes either Rule \rOneFive~or 
  \rOneSeven. While executing these rules the function \textsc{Search} is 
  called.
  
  While executing the function \textsc{Search}, if there are multiple robots on
  the current node of $r$ at time $t - 1$, it considers either the right or the
  left direction. Therefore, in this case the lemma is proved.
 
  In the case $r$ is alone on its node at time $t - 1$, while executing the
  function \textsc{Search} it does not change its direction. Moreover, while 
  executing Rules \rOneFive~or \rOneSeven, before calling the function 
  \textsc{Search} the robot calls the function \textsc{BecomeAwareSearcher} that
  sets its direction to the right direction. Therefore, in these cases, even if 
  $r$ is alone on its node, it considers a direction different from $\bot$
  during the Move phase of time $t - 1$, hence the lemma is proved.
  
  \item [Case 2:] \textbf{At time $\mathbf{t - 1}$, $\mathbf{r}$ is a
  \textit{dumb\-Searcher} or an \textit{aware\-Searcher}.}

  Whatever the state of $r$ at time $t - 1$ (\dumbSearcher~or \awareSearcher),
  to have its variable state at time $t$ equals either to \dumbSearcher~or to
  \awareSearcher, $r$ executes at time $t - 1$ either Rule \rOneNine, 
  \rOneTen~or \rOneEleven. While executing these rules the function 
  \textsc{Search} is called.
 
  As highlighted in the case 1, if there are multiple robots on the current node
  of $r$ at time $t - 1$, the lemma is proved.
 
  Moreover, while executing Rules \rOneNine~and \rOneTen, before calling the
  function \textsc{Search} the robot calls the function
  \textsc{BecomeAwareSearcher} that sets its direction to the right direction. 
  Therefore, in these cases, even if $r$ is alone on its node, it considers a 
  direction different from $\bot$ during the Move phase of time $t - 1$, hence
  the lemma is proved.
 
  It remains the case where $r$ executes Rule \rOneEleven~at time
  $t - 1$ while it is alone on its node. In this case, while executing Rule 
  \rOneEleven, $r$ does not change its direction (refer to the function 
  \textsc{Search}). Since at time $t - 1$, $r$ is already a \dumbSearcher~or an
  \awareSearcher, and since initially all the robots are \righter, by recurrence
  on all the cases treated previously (Case 1 and 2), the direction $r$
  considers during the Move phase of time $t - 1$ cannot be equal to $\bot$. 
 \end{description}
\end{proof}

Finally, we prove the other main lemma of this phase: we prove that $r_{min}$ is 
aware, in finite time, that it possesses the minimum identifier among all the
robots of the system. 

\begin{lemma} \label{atLeastOneLeader}
 In finite time $r_{min}$ becomes $min$.
\end{lemma}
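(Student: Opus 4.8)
The plan is to argue by contradiction, under the paper's standing convention that the termination rules \TerminationOne~and \TerminationTwo~are not triggered accidentally (if they were, gathering would already be achieved). So suppose $r_{min}$ never becomes $min$. By Lemma~\ref{atMostOneLeader} no robot ever becomes $min$, hence throughout the whole execution every robot stays in one of the states \righter, \potentialMin, \dumbSearcher~or \awareSearcher: the other states are produced only by rules whose guards require a $min$ neighbour (\eg \rTwoThree, \rTwoOne, \rOneFour), so none of them can ever fire. I will reason entirely inside this ``Phase~\AmITheMin~only'' regime.

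The pivotal structural fact I would establish first is that, under this hypothesis, $r_{min}$ remains a \righter~or a \potentialMin~forever. The only rules that could turn $r_{min}$ into a searcher are \rOneFive~(a \potentialMin~meeting an \awareSearcher) and \rOneSeven~(a \righter~meeting a searcher). In each such meeting the predicate $MinDiscovery()$ already holds for $r_{min}$: meeting a \dumbSearcher~not issued from $r_{min}$'s own group makes the third disjunct true (then $id_{r_{min}} < idPotentialMin$), while meeting an \awareSearcher~makes the second disjunct true, using the invariant proved inside Lemma~\ref{atMostOneLeader} that any value of $idMin$ different from $-1$ equals $id_{r_{min}}$. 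Since \rOneOne~precedes \rOneFive~and \rOneSeven~in the ordered rule set, $r_{min}$ would instead execute \rOneOne~and become $min$, a contradiction. Thus $r_{min}$ stays in $\{$\righter$,$ \potentialMin$\}$, and by Lemma~\ref{rightDirection} it considers the $right$ direction at every round.

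I would then split on whether $r_{min}$ advances to the right infinitely often. If it does, its counter $rightSteps_{r_{min}}$ grows without bound (Rules~\rOneSix~and~\rOneEight) and reaches $4*id_{r_{min}}*n$; the fourth disjunct of $MinDiscovery()$ then holds while $r_{min}$ is still a \righter~or \potentialMin, so \rOneOne~fires and $r_{min}$ becomes $min$, a contradiction. Otherwise $r_{min}$ is eventually permanently blocked; since \COT~forces the eventual underlying graph to be connected, a ring has at most one eventual missing edge $e^{*}$, so $r_{min}$ must be blocked precisely by $e^{*}$ and sits forever on its left endpoint $u$, every other edge being recurrent.

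The blocked case is the heart of the argument and the step I expect to be the main obstacle. Here I would use that no robot ever considers $\bot$ (Lemma~\ref{rightDirection} for \righter/\potentialMin, Lemma~\ref{NotBotDirection} for searchers), so every other robot keeps moving, and a robot persistently directed to the $right$ (in particular every \righter~and every \potentialMin, which move right through \rOneEight) traverses only recurrent edges and hence reaches $u$ in finite time. I would then show that the first robot $r' \neq r_{min}$ co-located with $r_{min}$ at $u$ forces $MinDiscovery()$: if $r_{min}$ is a \potentialMin~and $r'$ is a \righter, the first disjunct holds since $id_{r_{min}} < id_{r'}$; if $r'$ is a searcher or a \potentialMin~stemming from a group not containing $r_{min}$, then $idPotentialMin_{r'} > id_{r_{min}}$ and the third disjunct holds; and the only remaining possibility, namely $r_{min}$ still a \righter~meeting further \righter~robots, leads to the accumulation of $\mathcal{R}-1$ co-located \righter~robots at $u$ (the count cannot reach $\mathcal{R}$, which would make \GE$()$ hold and fire \TerminationOne), whereupon \rOneSix~makes $r_{min}$ --- the global minimum of that group --- a \potentialMin, reducing to the earlier cases; note that one robot always remains outside this group since it has $\mathcal{R}-1 < \mathcal{R}$ members. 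In every branch \rOneOne~eventually fires for $r_{min}$, contradicting the hypothesis and proving the lemma. The delicate points to discharge are (i) guaranteeing that such a meeting at $u$ really occurs, using the recurrence of all edges other than $e^{*}$ together with the rightward and search movements, and (ii) cleanly separating the meeting that turns $r_{min}$ into a $min$ from the genuine $\mathcal{R}$-robot gathering, which would instead trigger the termination rules excluded by our convention.
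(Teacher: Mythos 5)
Your overall strategy coincides with the paper's: assume $r_{min}$ never becomes $min$, observe (via Lemma~\ref{atMostOneLeader}) that then only \righter, \potentialMin, \dumbSearcher~and \awareSearcher~robots ever exist, show that $r_{min}$ itself stays a \righter~or \potentialMin~(your observation that \rOneOne~precedes \rOneFive~and \rOneSeven~in the rule order, so any meeting that could turn $r_{min}$ into a searcher would first trigger $MinDiscovery()$, is a clean repackaging of what the paper proves case by case), and then split on whether $rightSteps_{r_{min}}$ reaches $4\, id_{r_{min}} n$ or $r_{min}$ is eventually stuck forever at the left endpoint $u$ of the unique eventual missing edge of the \COT~ring. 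Up to that point the proposal is sound.

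The gap is the blocked case, exactly the point (i) you defer: it is the substance of the paper's proof (its Cases 2 and 3), and your sketch does not close it. The claim ``a robot persistently directed to the right reaches $u$'' only settles the sub-case in which some \righter~or \potentialMin~other than $r_{min}$ keeps that state forever (the paper's Case 2, where the surviving \potentialMin~reaches $u$). In the generic situation after \rOneSix~has fired with $r_{min}$ inside the group, the other robots are $\mathcal{R}-2$ \dumbSearcher~robots plus one \righter; that \righter~may well meet a \dumbSearcher~before reaching $u$, and after \rOneSeven/\rOneNine~fire nobody is persistently right-moving any more: searchers executing \rOneEleven~bounce (on a shared node the maximum identifier turns left, the others right), and left-moving searchers accumulate, stuck, at the other endpoint $w$ of the eventual missing edge. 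You still have to argue that this dynamics delivers some searcher to $u$ — the paper does so by noting that \rOneSeven/\rOneNine~create at least two \awareSearcher~robots with opposite directions, that isolated searchers keep a non-$\bot$ direction (Lemma~\ref{NotBotDirection}), and that two searchers meeting at $w$ cannot both remain there, since \textsc{Search} sends one of them rightward over recurrent edges all the way around to $u$. Without an argument of this kind (ruling out, e.g., all searchers ending up on the wrong side of the missing edge) the contradiction is not obtained, so the proof is incomplete precisely where the work lies. A minor additional slip: in your enumeration of the meeting at $u$ you invoke the third disjunct of $MinDiscovery()$ for an \awareSearcher, but that disjunct requires $state_{r'} \in \{dumbSearcher, potentialMin\}$; for an \awareSearcher~one must use the second disjunct via $idMin_{r'} = id_{r_{min}}$, as you correctly did earlier in the proposal.
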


\begin{proof}
 Assume that $r_{min}$ does not become $min$. By Lemma~\ref{atMostOneLeader},
 only $r_{min}$ can be $min$. While executing \Gathering, since initially all
 the robots are \righter, if there is no $min$, only \righter, \potentialMin, 
 \dumbSearcher~and \awareSearcher~robots can be present in the execution. 
 
 Initially all the robots are \righter. In the case where there is no $min$ in
 the execution, by the rules of \Gathering, from a configuration $\gamma_{t}$ at
 a time $t$ where there are only \righter~robots, it is not possible to have
 \awareSearcher~in the configuration $\gamma_{t + 1}$. A robot can become a 
 \dumbSearcher~or a \potentialMin~only when Rule \rOneSix~is executed. This
 rule is executed when $\mathcal{R} - 1$ \righter~robots are on a same node
 (refer to predicate $All\-But\-One\-Righter()$). 
 
 Let us now consider the three following cases that can occur in the execution.
 
 \begin{description}
  \item [Case 1:] \textbf{Rule \rOneSix~is never executed.}
 
   In this case all the robots are \righter~robots during the whole execution, 
   and execute therefore Rule \rOneEight~at each instant time. While
   executing Rule \rOneEight, a robot always considers the $right$ direction
   and increments its variable $rightSteps$ by one each time there exists an 
   adjacent $right$ edge to its location. Since by assumption $r_{min}$ does
   not become $min$, then by Rule \rOneOne~and predicate $Min\-Disco\-ve\-ry()$,
   $r_{min}$ cannot succeed to have its variable $right\-Steps$ equals to 
   $4 * id_{r_{min}} * n$, otherwise the lemma is true. Therefore it exists a 
   time at which $r_{min}$ is on a node such that its adjacent $right$ edge is 
   missing forever. Since it can exist at most one eventual missing edge in a
   \COT ring, and since all the robots always move in the $right$ 
   direction when there is an adjacent right edge to their location (since they
   execute Rule \rOneEight), it exists a time 
   at which $\mathcal{R} - 1$ \righter~robots are on a same node, cases 2 and 3 
   are then considered.
  
  \item [Case 2:] \textbf{Rule \rOneSix~is executed but $\mathbf{r_{min}}$ 
  is not among the $\mathcal{R} - 1$ $\mathbf{righter}$ robots that execute it.}
  
   While executing Rule \rOneSix, among the $\mathcal{R} - 1$
   \righter~located on a same node that execute this rule, the robot with the 
   minimum identifier $r_{p}$ becomes \potentialMin~while the other robots
   become 
   \dumbSearcher, and all update their variables $id\-Po\-ten\-tial\-Min$ to
   $id_{r_{p}}$. By definition we have $id_{r_{p}} > id_{r_{min}}$. By 
   Observation~\ref{noMoreMovingRightAndPLeader}, a robot that becomes a
   \dumbSearcher~can never become \righter~robot or \potentialMin~robot. 
   Moreover, by Observation~\ref{noMoreRighter}, a robot
   that becomes a \potentialMin~can never become a \righter. Since 
   $\mathcal{R} - 1$ \righter~are needed to execute Rule \rOneSix, this rule
   can be executed only once. Note that if a robot is a \dumbSearcher~(resp. a 
   \potentialMin) in a configuration $\gamma_{t}$ at time $t$ and is still a
   \dumbSearcher~(resp. a \potentialMin) in the configuration $\gamma_{t + 1}$ 
   then it does not update its variable $id\-Po\-ten\-tial\-Min$ during time $t$
   since it executes Rule \rOneEleven~(resp. \rOneEight)
   
   At the time of the execution of Rule \rOneSix, $r_{min}$ is a \righter, 
   since it is not among the robots that execute this rule. After the execution 
   of this rule $r_{min}$, as a \righter, cannot meet a \potentialMin~robot.
   Indeed the only way for a robot to become \potentialMin~is to execute
   Rule \rOneSix. Therefore only $r_{p}$ can be \potentialMin, and we know that
   $idPotentialMin_{r_{p}} = id_{r_{p}} > id_{r_{min}}$. Hence if $r_{min}$ 
   meets a \potentialMin, then by Rule \rOneOne~and predicate 
   $Min\-Disco\-ve\-ry()$ the lemma is true, which is a contradiction.
   
   Similarly, $r_{min}$ as a \righter~cannot meet a \dumbSearcher~$r_{d}$. 
   Indeed, only Rule \rOneSix~permits a robot to become a \dumbSearcher. 
   Therefore, since $idPotenialMin_{r_{d}} = id_{r_{p}} > id_{r_{min}}$, if
   $r_{min}$ meets a \dumbSearcher, then by Rule \rOneOne~and predicate
   $Min\-Disco\-ve\-ry()$ the lemma is true, which is a contradiction.
   
   Moreover it cannot exist \awareSearcher~in this execution. Indeed, as said
   previously, from a configuration $\gamma_{t}$ at a time $t$ where there are 
   only \righter~robots, it is not possible to have \awareSearcher~in the 
   configuration $\gamma_{t + 1}$. Therefore \awareSearcher~can be present in 
   the execution only after the execution of Rule \rOneSix. In the case 
   where
   there is not yet \awareSearcher, a robot can become an \awareSearcher~only if
   a \righter~meets a \dumbSearcher~(Rules \rOneNine~and \rOneSeven). However 
   after the execution of Rule \rOneSix, only $r_{min}$ is a \righter, and 
   as explained in the previous paragraph, if $r_{min}$ as a \righter~meets a
   \dumbSearcher~there is a contradiction. 
  
   Since there is no \awareSearcher~and since $r_{min}$ as a \righter~cannot 
   meet neither \potentialMin~nor \dumbSearcher, this implies that $r_{min}$
   stays a \righter~during the whole execution and therefore executes Rule 
   \rOneEight~at each instant time. By the same arguments as the one used 
   in case 1, necessarily it exists a time at which $r_{min}$ is on node such 
   that its adjacent $right$ edge is missing forever, otherwise the lemma is 
   true. However since there is no $min$ is the
   execution, and there is no \awareSearcher, $r_{p}$ stays a \potentialMin~and
   executes Rule \rOneEight~at each instant time, therefore it always
   considers the $right$
   direction. Since it can only exist one eventual missing edge and since this
   edge is the adjacent $right$ edge to the position of $r_{min}$, all the other
   edges are infinitely often present. Therefore, in finite time, the 
   \potentialMin~is located on the same node as $r_{min}$, which is a 
   contradiction.
 
  \item [Case 3:] \textbf{Rule \rOneSix~is executed and $\mathbf{r_{min}}$ 
  is among the $\mathcal{R} - 1$ $\mathbf{righter}$ robots that execute it.}
  
   We use the same arguments as the one used in case 2. Therefore we know that 
   while executing Rule \rOneSix, $r_{min}$ becomes \potentialMin, since 
   $r_{min}$ possesses the minimum identifier among all the robots of the 
   system. 
   
   Moreover, since $r_{min}$ does not become $min$, as a \potentialMin, it
   cannot meet a \righter~robot otherwise by Rule \rOneOne~and predicate 
   $Min\-Disco\-ve\-ry()$ the lemma is true. 
   
   Similarly, $r_{min}$ as a \potentialMin~cannot meet \awareSearcher. Indeed in
   the case there is not yet \awareSearcher, a robot can become an 
   \awareSearcher~only if a \righter~meets a \dumbSearcher~(Rules \rOneNine~and
   \rOneSeven). While executing these rules a robot that becomes an
   \awareSearcher~sets its variable $id\-Min$ to the identifier of the variable
   \potentialMin~of the \dumbSearcher, which is in this case $id_{r_{min}}$. An
   \awareSearcher~never
   updates the value of its variable $id\-Min$. Once there is at least
   one \awareSearcher~in the execution, it is possible to have other robots that
   become \awareSearcher~thanks to the execution of Rule \rOneTen. However
   while executing this rule, a robot that becomes \awareSearcher~copies the 
   value of the variable $id\-Min$ of the 
   \awareSearcher~it is located with. Therefore if $r_{min}$, as a
   \potentialMin, meets an 
   \awareSearcher, by Rule \rOneOne~and predicate $Min\-Disco\-ve\-ry()$, the 
   lemma is true, which is a contradiction. 
   
   Therefore, as a \potentialMin, $r_{min}$ executes Rule \rOneEight~at each
   instant time. By the same arguments as the one used in case 1, necessarily it
   exists a time at which $r_{min}$ is on node such that its adjacent $right$ 
   edge is missing forever, otherwise the lemma is true.  

   By Observation~\ref{noMoreMovingRightAndPLeader}, \dumbSearcher~and 
   \awareSearcher~robots cannot become \righter~or \potentialMin. As
   explained, if there is no meeting between a \dumbSearcher~robot and a
   \righter~robot, it cannot exist \awareSearcher~robots in the execution.
   As seen previously, no \righter~robot can meet $r_{min}$. At the time
   where Rule 
   \rOneSix~is executed there is a \righter~robot $r$ in the execution. In
   the case $r$ never meets a \dumbSearcher~robot, it executes Rule
   \rOneEight~at each instant time. Hence, using the arguments as the one used 
   in case 2, in 
   finite time, $r$ can be located on the same node as $r_{min}$, which is a 
   contradiction. This implies that there exists a time at which $r$, as a
   \righter~robot, meets at least a \dumbSearcher~robot $r'$. In this case $r$
   executes Rule \rOneSeven~(refer to the predicate
   $Righter\-With\-Sear\-cher()$) and all the \dumbSearcher~robots located with 
   $r$ including $r'$ execute Rule \rOneNine~(by the predicate 
   $Dumb\-Sear\-cher\-Min\-Re\-ve\-la\-tion()$ and since 
   $id_{r} > id_{r_{min}}$). Hence $r$ and all the 
   \dumbSearcher~robots located with $r$ become \awareSearcher~robots and
   execute the function \textsc{Search}. When a robot executes the function 
   \textsc{Search} while there are multiple robots on its node, if it possesses 
   the maximum identifier among the robots of its node, it considers the left
   direction, otherwise it considers the right direction. Therefore, once 
   \rOneSeven~and \rOneNine~are executed, there are at least two
   \awareSearcher~considering two opposite directions. Moreover once 
   \rOneSeven~and \rOneNine~are executed, except
   $r_{min}$ there are only \dumbSearcher~and \awareSearcher~robots in the 
   execution. When a \dumbSearcher~robot meets an \awareSearcher~robot, it 
   executes Rule \rOneTen~and therefore becomes \awareSearcher~robot and 
   executes the 
   function \textsc{Search}. An \awareSearcher~executes Rule \rOneEleven~at 
   each instant time, therefore it calls the function \textsc{Search} at each
   instant time. While executing the function \textsc{Search}, if an 
   \awareSearcher~robot 
   is alone on its node, it considers the last direction it considers (this
   direction cannot be equal to $\bot$ by Lemma~\ref{NotBotDirection}). All this
   implies that in finite time an \awareSearcher~robot is located on the same
   node as $r_{min}$. Therefore by Rule \rOneOne~and
   predicate $Min\-Disco\-ve\-ry()$, $r_{min}$ becomes $min$.
 \end{description} 
\end{proof}

By Lemmas~\ref{atMostOneLeader} and \ref{atLeastOneLeader}, we can deduce 
the following corollary which proves the correctness of Phase \AmITheMin.

\begin{corollary} \label{uniqueLeader}
 Only $r_{min}$ becomes $min$ in finite time.
\end{corollary}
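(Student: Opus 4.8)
The plan is to obtain this corollary as the immediate conjunction of the two main lemmas of Phase \AmITheMin, with no further argument of substance required. The statement ``only $r_{min}$ becomes $min$ in finite time'' decomposes into two claims: a uniqueness part (no robot other than $r_{min}$ ever reaches a $min$ state) and an existence-with-timing part ($r_{min}$ does reach such a state, and after only finitely many rounds). Each half is supplied verbatim by one of the preceding lemmas.

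First I would invoke Lemma~\ref{atMostOneLeader} to discharge the uniqueness part: it asserts that only $r_{min}$ can become $min$, so in any execution of \Gathering~no robot with identifier strictly larger than $id_{r_{min}}$ ever has its $state$ set to \minWaitingWalker~or \minTailWalker. Next I would invoke Lemma~\ref{atLeastOneLeader} to discharge the existence part: it guarantees that $r_{min}$ itself reaches the $min$ state, and that this happens in finite time. Combining the two, the unique robot that ever becomes $min$ is $r_{min}$, and it does so within a finite number of rounds, which is exactly the corollary. I would also note, if one wants the stronger reading that $r_{min}$ remains the sole $min$ for the rest of the execution, that Lemma~\ref{LeaderTheWholeExecution} (establishing that $min$ is a closed state) makes this persistence automatic, so the property, once achieved, is never lost.

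I do not expect any genuine obstacle here, since all the difficulty has already been absorbed into Lemmas~\ref{atMostOneLeader} and~\ref{atLeastOneLeader}; the only care needed is in parsing the quantifier structure of the statement correctly, so that the finite-time qualifier is attached to the existence of $r_{min}$'s transition rather than mistakenly to the uniqueness claim (which holds unconditionally and requires no timing). The proof is therefore essentially a one-line appeal to the two lemmas, and its role is simply to package the correctness of Phase~\AmITheMin~as a single reusable statement for the analysis of the subsequent phases.
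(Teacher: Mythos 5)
Your proposal is correct and matches the paper exactly: the corollary is stated there as an immediate consequence of Lemma~\ref{atMostOneLeader} (uniqueness) and Lemma~\ref{atLeastOneLeader} (existence in finite time), which is precisely your decomposition. The optional appeal to Lemma~\ref{LeaderTheWholeExecution} for persistence is a harmless addition beyond what the paper needs here.
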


\subsubsection{Proofs of Correctness of Phase \MinWaitToBeKnown}

Once $r_{min}$ completes Phase \AmITheMin, it stops to move and waits for the 
completion of Phase \MinWaitToBeKnown. We recall that, during Phase 
\MinWaitToBeKnown~of 
\Gathering, $\mathcal{R} - 3$ robots must join $r_{min}$ on the node where it is 
waiting. More precisely, while executing \Gathering, Phase \MinWaitToBeKnown~is
achieved when $\mathcal{R} - 3$ \waitingWalker~robots are located on the node 
where $r_{min}$, as $min$, is waiting. In the previous subsection, we prove that, in finite 
time, only $r_{min}$ becomes $min$ (Corollary~\ref{uniqueLeader}) and that once 
$r_{min}$ is $min$ it stays $min$ for the rest of the execution (Lemma~\ref{LeaderTheWholeExecution}).
Note that, by the rules of \Gathering, the
$min$ is necessarily a \minWaitingWalker~robot before being a \minTailWalker~(since
only a \minWaitingWalker~can become a \minTailWalker~while executing Rule \rTwoOne). 
Moreover, by Rule \rTwoTwo, $r_{min}$, as a \minWaitingWalker, does not move 
until $\mathcal{R} - 3$ \waitingWalker~robots are on its node.
Therefore, 
as \minWaitingWalker, $r_{min}$ is, as expected, always on the same node. Let $u$ be the node
on which $r_{min}$, as a \minWaitingWalker, is located. Let $t_{min}$ be the
time at which $r_{min}$ becomes a \minWaitingWalker~robot. In this subsection,
we consider the execution from time $t_{min}$.

To simplify the proofs, we introduce the notion of \towerElection~as follows.

\begin{definition} [\towerElection]
 A \towerElection~corresponds to a configuration of the execution in which 
 $\mathcal{R} - 3$ \waitingWalker~robots are located on the same node as 
 the \minWaitingWalker. 
\end{definition}

To prove the correctness of Phase \MinWaitToBeKnown, we hence have 
to prove that, in finite time, a \towerElection~is formed.

As noted previously, by the rules of \Gathering, as long as there is no 
\towerElection, $r_{min}$ stays a \minWaitingWalker~robot. 

The following observation is useful to prove the correctness of this phase.

\begin{observation} \label{noMoreWaitingWalk}
 There exists no rule in \Gathering~permitting a robot that stops being either
 \minWaitingWalker~or \waitingWalker~robot to be again a \minWaitingWalker~or 
 \waitingWalker~robot.
\end{observation}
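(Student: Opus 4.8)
The plan is to prove this observation by a purely syntactic inspection of the rules of \Gathering, in the same style as Observations~\ref{noMoreMovingRightAndPLeader} and~\ref{noMoreRighter}. First I would locate every place where a robot can acquire the state \waitingWalker~or \minWaitingWalker. The function \textsc{BecomeWaitingWalker} is the only one that sets $state_{r}$ to \waitingWalker, and it is invoked solely by Rule~\rTwoThree; likewise \textsc{BecomeMinWaitingWalker} is the only one setting $state_{r}$ to \minWaitingWalker, and it is invoked solely by Rule~\rOneOne. Reading off the guards of these two rules, a robot can execute them only if its current state lies in the set $S = \{$\righter, \potentialMin, \dumbSearcher, \awareSearcher$\}$. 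Hence, to ever (re)become a \waitingWalker~or a \minWaitingWalker, a robot must first be in a state of $S$.

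The core of the argument is then to show that once a robot leaves $S$ it can never re-enter $S$. I would check each of the four states of $S$ in turn. By Observation~\ref{noMoreRighter} a non-\righter~robot never becomes \righter, and by Observation~\ref{noMoreMovingRightAndPLeader} a robot outside $\{$\righter, \potentialMin$\}$ never becomes \righter~or \potentialMin. It remains to handle \dumbSearcher~and \awareSearcher. The state \dumbSearcher~is produced only by \textsc{InitiateSearch} inside Rule~\rOneSix, whose guard requires the executing robot to be a \righter. The state \awareSearcher~is produced only by \textsc{BecomeAwareSearcher}, which is called by Rules~\rTwoFour, \rOneTwo, \rOneThree, \rOneFive, \rOneSeven, \rOneNine~and~\rOneTen; inspecting the guards of all seven rules shows that each requires the executing robot's state to already belong to $S$. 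Consequently, no rule lets a robot whose state is outside $S$ enter $S$, so $S$ is never re-entered after being left.

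Finally I would connect this to the stated claim. A robot stops being a \waitingWalker~or a \minWaitingWalker~only by executing a state-changing rule; going through the rule list, the only rules whose guard can hold for such a robot are the two termination rules, Rule~\rTwoOne~and Rule~\rTwoTwo. Of these, Rule~\rTwoTwo~only calls \textsc{StopMoving} (no state change) and the termination rules halt the robot, so the sole state-changing option is Rule~\rTwoOne, which through \textsc{InitiateWalk} turns the robot into a \headWalker, a \tailWalker, or a \minTailWalker---none of which lie in $S$. Thus a robot that ceases to be \waitingWalker~or \minWaitingWalker~has left $S$, and by the previous paragraph it can never return to $S$; since membership in $S$ is a prerequisite for Rules~\rOneOne~and~\rTwoThree, it can never again become a \waitingWalker~or a \minWaitingWalker.

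The main obstacle I anticipate is one of completeness rather than of depth: one must verify, without omission, that \emph{every} rule producing an \awareSearcher~(there are seven) requires the acting robot to already be in $S$, and that no rule anywhere else in Algorithm~\ref{algo:main} or Algorithm~\ref{functions} silently writes \waitingWalker, \minWaitingWalker, \dumbSearcher~or \awareSearcher~into $state_{r}$. This is routine case-checking, but it is precisely the step where an overlooked rule would break the trap argument.
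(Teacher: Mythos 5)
Your proof is correct: the paper states this as an observation without proof, relying exactly on the routine syntactic inspection you carry out (the only rules writing \waitingWalker~or \minWaitingWalker~are \rTwoThree~and \rOneOne, whose guards require a state in $\{$\righter, \potentialMin, \dumbSearcher, \awareSearcher$\}$, a set that is never re-entered once left). Your case-checking of the seven rules producing \awareSearcher~and of the exit rule \rTwoOne~is complete and matches the algorithm, so the argument goes through.
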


To prove the correctness of this phase, we prove, first, that 
if a \potentialMin~is present in the execution then, in finite time, a 
\towerElection~is present in the execution, next, we prove that if there
is no \potentialMin~in the execution then, in finite time, a \towerElection~is
also present in the execution. We prove this respectively in 
Lemmas~\ref{PotentialMinAndTowerElection} and 
\ref{NoPotentialMinAndTowerElection}. To simplify the proofs of these two 
lemmas, we need to prove the nine following lemmas. 

In the following lemma we prove that it can exist at most one \towerElection~in
the whole execution.

\begin{lemma} \label{uniqueTowerElection}
 It can exist at most one \towerElection~in the whole execution. 
\end{lemma}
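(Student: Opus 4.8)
The plan is to argue by contradiction: suppose that two distinct configurations of the execution are both \towerElection~configurations, occurring at times $t_1 < t_2$. The fact I would lean on throughout is that, by the very definition of \towerElection, such a configuration requires a \minWaitingWalker~robot to be colocated with $\mathcal{R} - 3$ \waitingWalker~robots, and that by Corollary~\ref{uniqueLeader} the only robot that can ever be $min$ --- hence the only robot that can ever be a \minWaitingWalker --- is $r_{min}$. So every \towerElection~configuration is one in which $r_{min}$, as a \minWaitingWalker, sits together with $\mathcal{R} - 3$ \waitingWalker~robots.

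First I would pin down what the algorithm does at time $t_1$. Since $\gamma_{t_1}$ is a \towerElection, at the node of $r_{min}$ we have $|NodeMate()| = \mathcal{R} - 3$ with every colocated robot in $\{waitingWalker, minWaitingWalker\}$, so the guard $AllButTwoWaitingWalker()$ of Rule~\rTwoOne~holds for $r_{min}$. I would then inspect the fixed priority order of Algorithm~\ref{algo:main} to confirm that no rule listed before \rTwoOne~is enabled for $r_{min}$ in this configuration: the termination rules need $|NodeMate()| \in \{\mathcal{R}-1, \mathcal{R}-2\}$ and therefore fail when $|NodeMate()| = \mathcal{R}-3$ (and in any case are excluded by the standing assumption of Section~\ref{sec:proof} that they are not fired accidentally), while the Phase~\WaitTermination~and Phase~\Walk~rules all require $r_{min}$ to be in a walker state, which it is not. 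Hence $r_{min}$ executes \rTwoOne, and since it carries the minimum identifier of the tower it becomes a \minTailWalker.

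Next I would invoke Observation~\ref{noMoreWaitingWalk}: once $r_{min}$ stops being a \minWaitingWalker~at time $t_1$, no rule can ever turn it back into a \minWaitingWalker~(nor into a \waitingWalker). Combined with Lemma~\ref{LeaderTheWholeExecution} (it remains $min$ forever after) and Corollary~\ref{uniqueLeader} (no robot other than $r_{min}$ is ever $min$), this yields that for every time $t > t_1$ the configuration $\gamma_t$ contains no \minWaitingWalker~at all. As a \towerElection~requires a \minWaitingWalker, no configuration after $t_1$ can be a \towerElection, contradicting the assumed second occurrence at $t_2 > t_1$ and establishing the claim.

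The only delicate step is the middle one, namely confirming that the formation of the \towerElection~genuinely triggers \rTwoOne~and is not preempted by some earlier rule. This reduces to a routine reading of the guard predicates against the priority order of Algorithm~\ref{algo:main}, under the standing convention on the termination rules; everything else is a direct chaining of the already-established closedness and uniqueness properties of the $min$ role.
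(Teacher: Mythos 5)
Your proof is correct and follows essentially the same route as the paper's: the formation of a \towerElection~forces Rule~\rTwoOne, turning the \minWaitingWalker~into a \minTailWalker, and then Observation~\ref{noMoreWaitingWalk} together with Corollary~\ref{uniqueLeader} rules out any later \towerElection. Your extra check that no higher-priority rule preempts \rTwoOne~is a welcome detail the paper leaves implicit, but it does not change the argument.
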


\begin{proof}
 By definition a \towerElection~is composed of one \minWaitingWalker~and
 $\mathcal{R} - 3$ \waitingWalker~robots. Once a \towerElection~is formed, the 
 $\mathcal{R} - 2$ ($\mathcal{R} - 2 \geq 2$) robots involved in the 
 \towerElection~execute Rule \rTwoOne. While 
 executing this rule the robot with the maximum identifier among the 
 $\mathcal{R} - 2$ robots involved in the \towerElection~becomes 
 \headWalker~while the \minWaitingWalker~becomes \minTailWalker~and 
 the other robots involved in the \towerElection~become \tailWalker.
 
 Then by Observation~\ref{noMoreWaitingWalk} and since by
 Corollary~\ref{uniqueLeader} only $r_{min}$ can be \minWaitingWalker, the lemma
 is proved.
\end{proof}

In the following lemma, we prove that all the \waitingWalker~as well as the
\minWaitingWalker~are located on node $u$ and do not move. This is important to
prove that, in finite time, a \towerElection~is formed.

\begin{lemma} \label{waitingCautiousWalkOnSameNode}
 All \waitingWalker~robots are located on the same node as $r_{min}$ when
 $state_{r_{min}} = min\-Waiting\-Walker$ and neither the \waitingWalker~robots
 nor $r_{min}$, as a \minWaitingWalker, move.
\end{lemma}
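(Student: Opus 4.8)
The plan is to track, by inspection of the rules of \Gathering, exactly how a robot can acquire and keep the state \waitingWalker, and to combine this with the fact that the unique \minWaitingWalker~sits on a fixed node. First I would settle the \emph{location} claim. The only rule whose action assigns the state \waitingWalker~is Rule~\rTwoThree~(through the function \textsc{BecomeWaitingWalker}); in particular, by Observation~\ref{noMoreWaitingWalk} a robot enters this state at most once. The guard of Rule~\rTwoThree~requires the presence in $NodeMate()$ of a robot $r'$ whose state is \minWaitingWalker. By Corollary~\ref{uniqueLeader} the only robot that ever becomes $min$ is $r_{min}$, and by Lemma~\ref{LeaderTheWholeExecution} this $min$-state is persistent, so such an $r'$ is necessarily $r_{min}$; since $r_{min}$, as a \minWaitingWalker, is located on node $u$, any robot that executes Rule~\rTwoThree~does so while located on $u$.

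Next I would show that a \waitingWalker~neither leaves $u$ nor moves while it keeps this state. In the round where a robot becomes \waitingWalker, \textsc{BecomeWaitingWalker} sets $dir_r := \bot$, so the robot does not move during the following Move phase and stays on $u$. For every later round, as long as no \towerElection~is formed, I would argue from the ordering of the rules that the first rule with a true guard for a \waitingWalker~is Rule~\rTwoTwo: the two termination rules are set aside by our standing convention, Rules~\rFourOne--\rFourThree~and Rule~\rThreeOne~require a \headWalker/\tailWalker/\minTailWalker~state, and the guard of Rule~\rTwoOne~(the \towerElection~guard $AllButTwoWaitingWalker()$) is false by assumption. Rule~\rTwoTwo~calls \textsc{StopMoving}, which again sets $dir_r := \bot$, so the robot keeps $dir_r = \bot$ and remains on $u$. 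The symmetric argument applies to $r_{min}$ itself: since the predicate $WaitingWalker()$ also covers \minWaitingWalker, $r_{min}$ falls through to Rule~\rTwoTwo~as well, and in the round where it became \minWaitingWalker~via Rule~\rOneOne~the function \textsc{BecomeMinWaitingWalker} had already set $dir_r := \bot$; hence $r_{min}$ does not move either, which re-confirms that $u$ is well defined.

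The main obstacle is the rule-ordering bookkeeping: I must be certain that no rule placed before Rule~\rTwoTwo~can ever fire for a \waitingWalker~(or for the \minWaitingWalker) and drag it off $u$ before a \towerElection~has actually formed. This reduces to checking that, absent a \towerElection, the only states co-located with a \waitingWalker~on $u$ are \waitingWalker~and \minWaitingWalker, so that the guard $AllButTwoWaitingWalker()$ of Rule~\rTwoOne~stays false and the walker and termination guards stay false; Observation~\ref{noMoreWaitingWalk}, together with Corollary~\ref{uniqueLeader} and Lemma~\ref{LeaderTheWholeExecution}, is precisely what makes this bookkeeping go through. Finally, the boundary round in which a \towerElection~does form is harmless: there the guard of Rule~\rTwoOne~becomes true, the robots execute \textsc{InitiateWalk} and leave the \waitingWalker/\minWaitingWalker~states, so that round lies outside the window covered by the statement.
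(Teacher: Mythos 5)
Your proof is correct and follows essentially the same route as the paper's: Rule~\rTwoThree~is the only entry into the \waitingWalker~state and its guard forces co-location with the unique \minWaitingWalker~$r_{min}$ on node $u$ (Corollary~\ref{uniqueLeader}), after which both the \waitingWalker~robots and $r_{min}$ keep executing Rule~\rTwoTwo~(direction $\bot$) until a \towerElection~triggers Rule~\rTwoOne~and they leave these states. Your extra rule-ordering bookkeeping merely makes explicit what the paper's proof states implicitly, so there is no substantive difference.
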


\begin{proof}
 By the rules of \Gathering, as long as there is no \towerElection, $r_{min}$ is
 \minWaitingWalker. While $r_{min}$ is the \minWaitingWalker, it executes
 Rule \rTwoTwo~at each instant time. While executing this rule, $r_{min}$ 
 considers the $\bot$ direction and therefore does not move.

 Only Rule \rTwoThree~permits a robot
 $r$ to become a \waitingWalker~robot. For this rule to be executed $r$ must be
 located with a \minWaitingWalker~(refer to predicate 
 $Po\-ten\-tial\-Min\-Or\-Sear\-cher\-With\-Min()$). By 
 Corollary~\ref{uniqueLeader}, only $r_{min}$ can be \minWaitingWalker. While 
 executing Rule \rTwoThree, $r$ considers the $\bot$ direction and therefore 
 at the time of the execution of this rule, $r$ does not move and is on the node
 where $r_{min}$, as a \minWaitingWalker, is located. 
 
 While $r$ is a \waitingWalker~robot, as long as there is no 
 \towerElection~in the execution, it executes Rule \rTwoTwo~at each instant
 time. Therefore $r$ does not move. As noted previously, the location where $r$ 
 stops moving is the location where $r_{min}$, as the \minWaitingWalker, is 
 located.
 
 Once a \towerElection~is present in the execution the \waitingWalker~robots
 and the \minWaitingWalker~composing this \towerElection~execute Rule
 \rTwoOne. While executing this rule the robots do not change the direction they 
 consider and stop being \waitingWalker/\minWaitingWalker~robots. Therefore, by 
 Observation~\ref{noMoreWaitingWalk} and since by Corollary~\ref{uniqueLeader}
 only $r_{min}$ can be \minWaitingWalker, all \waitingWalker~robots are located
 on the same node as $r_{min}$ when $state_{r_{min}} = min\-Waiting\-Walker$ and
 neither the \waitingWalker~robots nor $r_{min}$, as a \minWaitingWalker, move.
\end{proof}

%

Now we prove a property on \potentialMin.

\begin{lemma} \label{OnePotentialMin}
 It can exist at most one \potentialMin~robot in the whole execution.
\end{lemma}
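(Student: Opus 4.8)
The plan is to show that the state \potentialMin~is produced by a single rule that can fire at most once. First I would inspect Algorithm~\ref{functions} and observe that the only function assigning the value \potentialMin~to the variable $state$ is \textsc{InitiateSearch}; consulting Algorithm~\ref{algo:main}, this function is invoked only inside Rule \rOneSix. Hence a robot acquires the state \potentialMin~exclusively by executing Rule \rOneSix, so it suffices to bound the number of \potentialMin~robots created by that rule.

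Next I would recall that the guard of Rule \rOneSix~is the predicate $AllButOneRighter()$, which (see Algorithm~\ref{predicates}) requires $\mathcal{R}-1$ robots, all in state \righter, to be colocated. When these $\mathcal{R}-1$ \righter~robots execute \textsc{InitiateSearch} in the same (synchronous) round, the robot whose identifier equals the minimum of the colocated identifiers becomes \potentialMin~and the remaining $\mathcal{R}-2$ become \dumbSearcher; since identifiers are pairwise distinct, the minimum is unique and so a single firing of Rule \rOneSix~creates exactly one \potentialMin~robot.

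The key step is to argue that Rule \rOneSix~can fire at most once over the whole execution. By Observation~\ref{noMoreMovingRightAndPLeader}, the $\mathcal{R}-2$ robots that became \dumbSearcher~can never again be \righter~nor \potentialMin, and by Observation~\ref{noMoreRighter} the robot that became \potentialMin~can never again be \righter. Consequently, immediately after the first execution of Rule \rOneSix, at most one robot of the entire system---namely the unique robot that did not take part in that execution---can still be a \righter. Since $\mathcal{R}\geq 4$, we have $\mathcal{R}-1\geq 3 > 1$, so the guard $AllButOneRighter()$ can never be satisfied again and Rule \rOneSix~cannot fire a second time. Combining the two previous paragraphs, at most one robot ever acquires the state \potentialMin, which is the claim. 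I expect the only delicate point to be the bookkeeping of the count of \righter~robots after the first firing of Rule \rOneSix---specifically, verifying that exactly one robot stays outside the colocated group and that no robot can re-enter the \righter~state---but this follows directly from Observations~\ref{noMoreMovingRightAndPLeader} and~\ref{noMoreRighter}.
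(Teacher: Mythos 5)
Your proof is correct and follows essentially the same route as the paper: both identify Rule \rOneThree-style tracing back to Rule \rOneSix~(via \textsc{InitiateSearch}) as the unique source of the \potentialMin~state, note that a single firing elects exactly one \potentialMin~thanks to distinct identifiers, and use Observations~\ref{noMoreMovingRightAndPLeader} and~\ref{noMoreRighter} to conclude that Rule \rOneSix~cannot fire twice. Your explicit count of remaining \righter~robots merely spells out the step the paper states tersely, so no substantive difference.
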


\begin{proof}
 Only the execution of Rule \rOneSix~permits a robot to become a 
 \potentialMin~robot. Rule \rOneSix~is executed when 
 $\mathcal{R} - 1$ \righter~robots are located on a same node. When these
 $\mathcal{R} - 1$ \righter~robots execute Rule \rOneSix, one becomes a
 \potentialMin, and the others become \dumbSearcher. Therefore, by 
 Observations~\ref{noMoreMovingRightAndPLeader} and \ref{noMoreRighter} this
 rule can be executed only once. Moreover, by the rules of \Gathering, once a
 \potentialMin~stops to be a \potentialMin, it cannot be again a \potentialMin.
 Hence the lemma is proved.
\end{proof}

The following lemma demonstrates a property on $min$.

\begin{lemma} \label{OnlyMovingRightAndPLeaderCanBeMin}
 Before being $min$, $r_{min}$ is either a \righter~robot or a
 \potentialMin~robot.
\end{lemma}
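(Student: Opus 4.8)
The plan is to trace backward through the rules of \Gathering~to determine how a robot can first acquire a $min$ state---namely \minWaitingWalker~or \minTailWalker---and to show that every such transition is guarded by a predicate forcing the robot to already be a \righter~or a \potentialMin. First I would recall, from Lemma~\ref{LeaderTheWholeExecution}, that $min$ is a closed state; together with the fact that $r_{min}$ does become $min$ (Lemma~\ref{atLeastOneLeader}), this yields a well-defined \emph{first} round at which $r_{min}$ enters the set of $min$ states. It then suffices to inspect the state of $r_{min}$ in the view it used during the Compute phase of that round.

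The central step is to argue that this first transition lands on \minWaitingWalker~rather than \minTailWalker. Indeed, the only function assigning the value \minTailWalker~is \textsc{InitiateWalk} (invoked solely by Rule~\rTwoOne), and its case analysis produces \minTailWalker~only when the robot's current state is already \minWaitingWalker~---hence the robot is already $min$, contradicting the minimality of the chosen round. Consequently $r_{min}$ first becomes $min$ by acquiring state \minWaitingWalker, and the only place this value is written is \textsc{BecomeMinWaitingWalker}, which is called exclusively by Rule~\rOneOne. Since the guard of Rule~\rOneOne~contains the conjunct $PotentialMinOrRighter()$, the robot executing it satisfies $state_{r}\in\{potentialMin, righter\}$ in its view, which is precisely the claim.

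The main obstacle I anticipate is completeness rather than depth: I must be certain that no other rule or function silently writes \minWaitingWalker~or \minTailWalker~into $state_{r}$, since a single overlooked assignment would break the argument. I would therefore perform an exhaustive scan of Algorithm~\ref{functions} and Algorithm~\ref{algo:main}, confirming that \minWaitingWalker~originates only from \textsc{BecomeMinWaitingWalker} and \minTailWalker~only from \textsc{InitiateWalk}, and that these functions are reached only through Rules~\rOneOne~and \rTwoOne~respectively. Once this bookkeeping is settled, the guard of Rule~\rOneOne~delivers the conclusion immediately, so the proof reduces to a short guard-inspection argument with no further machinery required.
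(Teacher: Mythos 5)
Your proposal is correct and follows essentially the same argument as the paper: a robot can only be \minTailWalker~by first being \minWaitingWalker~and executing Rule~\rTwoOne, can only become \minWaitingWalker~via Rule~\rOneOne, and the guard of Rule~\rOneOne~(predicate $PotentialMinOrRighter()$) forces the robot to be a \righter~or a \potentialMin~at that moment. The extra scaffolding you add (closedness of $min$ and the existence of a first transition round) is harmless but not needed, since the paper's proof is exactly this rule-inspection argument.
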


\begin{proof}
 A robot that is a $min$ is a robot such that its variable $state$ is either 
 equal to \minWaitingWalker~or to \minTailWalker. The only way to be a
 \minTailWalker~robot is to be a \minWaitingWalker~robot and to execute
 Rule \rTwoOne. The only way to be a \minWaitingWalker~is to execute Rule 
 \rOneOne. Only \righter~robots or \potentialMin~robots can execute Rule
 \rOneOne~(refer to predicate $Po\-ten\-tial\-Min\-Or\-Righter()$). 
\end{proof}

The three following lemmas give properties on the execution, when $r_{min}$ is 
$min$. Indeed, they indicate the presence or absence of 
\righter/\potentialMin~in the execution while $r_{min}$ is $min$.

\begin{lemma} \label{OnlyOnePLeaderOrOneMovingRight}
 In the suffix of the execution starting from the time where $r_{min}$ is $min$,
 it is not possible to have a \potentialMin~robot and a \righter~robot present 
 at the same time.
\end{lemma}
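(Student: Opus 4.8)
I would argue by contradiction. Suppose that at some time $t\ge t_{min}$ both a \potentialMin~robot $p$ and a \righter~robot $w$ are present. By Lemma~\ref{OnePotentialMin} there is at most one \potentialMin~in the whole execution, and the only rule that turns a robot into a \potentialMin~is Rule~\rOneSix; hence $p$ is created by the unique execution of Rule~\rOneSix, say at time $t_{6}\le t$.

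The core of the plan is a counting argument at time $t_{6}$. Since $w$ is a \righter~at time $t$, Observation~\ref{noMoreRighter} forces $w$ to have been a \righter~at every earlier time, in particular at $t_{6}$. Moreover $w$ cannot be one of the robots that execute Rule~\rOneSix~at $t_{6}$: those robots leave the \righter~state at $t_{6}$ and, by Observations~\ref{noMoreMovingRightAndPLeader} and~\ref{noMoreRighter}, never become \righter~again, which would contradict $w$ being a \righter~at $t\ge t_{6}$. Now Rule~\rOneSix~requires exactly $\mathcal{R}-1$ colocated \righter~robots. Together with the extra \righter~$w$, this yields at least $\mathcal{R}$ distinct \righter~robots at time $t_{6}$, so in fact \emph{all} $\mathcal{R}$ robots are \righter~at $t_{6}$, the $\mathcal{R}-1$ executors of Rule~\rOneSix~are all robots but $w$, and $w$ is the unique non-executor.

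It then remains to locate $r_{min}$. As $t\ge t_{min}$ and $min$ is a closed state (Lemma~\ref{LeaderTheWholeExecution}, Corollary~\ref{uniqueLeader}), $r_{min}$ is $min$ at time $t$; since $w$ is a \righter~at $t$, we get $r_{min}\ne w$. But all robots are \righter~at $t_{6}$ and $w$ is the only one that does not execute Rule~\rOneSix, so $r_{min}$ is among the executors. Having the globally smallest identifier, $r_{min}$ is also the smallest among those executors and therefore becomes the \potentialMin~created at $t_{6}$, i.e. $r_{min}=p$. This is the desired contradiction: $p=r_{min}$ is $min$ at time $t$, yet we assumed $p$ to be a \potentialMin~at $t$.

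I expect the main obstacle to be the counting step that forces every robot to be a \righter~at $t_{6}$: it relies on back-propagating the \righter~status of $w$ through Observation~\ref{noMoreRighter} and on the precise cardinality $\mathcal{R}-1$ imposed by the guard of Rule~\rOneSix. Once that is in place, identifying the minimum executor with the newly created \potentialMin~(hence with $r_{min}$) closes the argument immediately.
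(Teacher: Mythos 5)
Your proof is correct, but it is organized differently from the paper's. The paper first invokes Lemma~\ref{OnlyMovingRightAndPLeaderCanBeMin} to split on whether $r_{min}$ was a \potentialMin~or a \righter~just before becoming $min$: in the first case Lemma~\ref{OnePotentialMin} rules out any \potentialMin~after $t_{min}$, and in the second case it argues that $r_{min}$ is \emph{not} among the $\mathcal{R}-1$ executors of Rule~\rOneSix, so after that rule fires no robot other than $r_{min}$ (which is $min$) can ever be a \righter. You instead run a single pigeonhole-style contradiction with no case split: you back-propagate the witness \righter~$w$ to the time Rule~\rOneSix~fires via Observation~\ref{noMoreRighter}, count $\mathcal{R}-1$ colocated \righter~executors plus $w$ to conclude that $w$ is the unique non-executor, deduce $r_{min}\neq w$ from closedness of $min$ (Lemma~\ref{LeaderTheWholeExecution}), and hence that $r_{min}$ \emph{is} an executor and, having the smallest identifier, is the created \potentialMin~--- contradicting that $r_{min}$ is $min$ at time $t$. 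The two arguments are dual (yours lands in the situation the paper treats as Case~1, the paper's Case~2 is the complementary placement of $r_{min}$), and both rest on the same ingredients: the guard of Rule~\rOneSix, the closedness observations, and the fact that the \potentialMin~is the minimum identifier of the colocated group. Your version is shorter, avoids Lemma~\ref{OnlyMovingRightAndPLeaderCanBeMin} altogether, and uses Lemma~\ref{OnePotentialMin} only for the inessential ``unique execution'' phrasing; the paper's case split, on the other hand, records per-case facts (e.g., that once $r_{min}$ is $min$ and a \potentialMin~exists there is no \righter~at all) in the style reused by later lemmas. One small point of care: a robot in the colocated all-\righter~group could in principle fire Rule~\rOneOne~(fourth disjunct of $MinDiscovery()$) rather than Rule~\rOneSix~at that round, so your sentence ``$w$ cannot be one of the robots that execute Rule~\rOneSix'' should really be ``$w$ cannot be in the colocated group,'' which still holds because every member of that group leaves the \righter~state whichever of the two rules it executes; this is at the same level of granularity the paper itself adopts, and the conclusion $p=r_{min}$ survives in all variants since $p$'s own execution of $InitiateSearch$ forces $id_p$ to be minimal in the group containing $r_{min}$.
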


\begin{proof}
 By Lemma~\ref{OnlyMovingRightAndPLeaderCanBeMin}, $r_{min}$ is either a 
 \righter~or a \potentialMin~before being $min$. In the case where $r_{min}$ is
 a \potentialMin~before being $min$, then by Lemma~\ref{OnePotentialMin}, it 
 cannot exist a \potentialMin~in the execution after $r_{min}$ becomes
 $min$. Therefore the lemma is proved in this case. 
 
 Consider now the case where $r_{min}$ is a \righter~before being $min$. For a 
 robot to become a \potentialMin~Rule \rOneSix~must be executed. This rule
 is executed when $\mathcal{R} - 1$ \righter~are located on a same node. While 
 executing Rule \rOneSix, among the $\mathcal{R} - 1$ \righter~located on a
 same node, the one with the minimum identifier becomes \potentialMin~while the
 others become \dumbSearcher. By Observation~\ref{noMoreRighter}, $r_{min}$
 cannot be among the $\mathcal{R} - 1$ \righter~that execute Rule \rOneSix,
 otherwise it cannot be a \righter~before being $min$. Similarly thanks to 
 Observation~\ref{noMoreRighter}, the $\mathcal{R} - 1$ robots that execute
 Rule \rOneSix, cannot be \righter~anymore after the execution of this rule.
 therefore, it is not possible to have a \potentialMin~and a \righter~in the 
 execution once $r_{min}$ is $min$.
\end{proof}

%

\begin{lemma} \label{righterOtherAndMinNoMorePotentialMin}
 If there exists a time $t$ at which a \righter, a robot $r$ ($r \neq r_{min}$)
 such that $state_{r} \neq righter$ and $r_{min}$, as $min$, are present in the
 execution, then there is no more \potentialMin~in the suffix of the execution 
 starting from $t$.
\end{lemma}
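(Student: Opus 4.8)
The plan is to rule out a \potentialMin~after time $t$ by showing that the only rule able to create one, Rule~\rOneSix, can no longer fire once $r_{min}$ is $min$ and a second robot $r$ is frozen out of the \righter~state, and then to dispose of a possibly pre-existing \potentialMin~using the uniqueness results already established.

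First I would pin down the population of \righter~robots from time $t$ on. By Lemma~\ref{LeaderTheWholeExecution}, $r_{min}$ stays $min$ for the rest of the execution, so it is never a \righter~at any time $\geq t$. Since $state_{r}\neq righter$ at time $t$ and $r\neq r_{min}$, Observation~\ref{noMoreRighter} guarantees that $r$ is never a \righter~at any time $\geq t$ either. As $r_{min}$ and $r$ are two distinct robots, at most $\mathcal{R}-2$ of the $\mathcal{R}$ robots are \righter~at any time $\geq t$. Recalling (as in the proof of Lemma~\ref{OnePotentialMin}) that the guard of Rule~\rOneSix~requires $\mathcal{R}-1$ colocated \righter~robots and that this is the sole rule turning a robot into a \potentialMin, I conclude that Rule~\rOneSix~is never executed at a time $\geq t$; hence no \potentialMin~is created at or after $t$.

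It remains to exclude a \potentialMin~born before $t$ that survives into the suffix. Suppose for contradiction that some robot $p$ is a \potentialMin~at a time $t'\geq t$. By Lemma~\ref{OnePotentialMin} this $p$ is the unique \potentialMin~of the whole execution, produced by the single firing of Rule~\rOneSix, which by the previous paragraph occurred strictly before $t$. A \potentialMin~cannot become a \righter~(Observation~\ref{noMoreRighter}), so once $p$ leaves the \potentialMin~state it is neither \righter~nor \potentialMin, and Observation~\ref{noMoreMovingRightAndPLeader} then forbids it from ever returning to \potentialMin. Consequently $p$ is continuously a \potentialMin~from its creation up to $t'$, and in particular at time $t$. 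But at time $t$ there is a \righter~present and $r_{min}$ is $min$, so Lemma~\ref{OnlyOnePLeaderOrOneMovingRight} forbids the simultaneous presence of a \potentialMin~and a \righter---a contradiction. Therefore there is no \potentialMin~in the suffix starting from $t$.

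The main obstacle I anticipate is this last step: the counting argument only kills \potentialMin~robots created after $t$, and one must argue carefully that a \potentialMin~appearing at some $t'>t$ would have to be present already at $t$ (using the irreversibility of leaving the \potentialMin~state together with its uniqueness) before Lemma~\ref{OnlyOnePLeaderOrOneMovingRight} can be invoked to reach the contradiction.
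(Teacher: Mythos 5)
Your proof is correct and rests on exactly the same two pillars as the paper's: the counting argument that Rule~\rOneSix~can no longer fire once $r_{min}$ (as $min$) and $r$ are permanently excluded from the \righter~state, and Lemma~\ref{OnlyOnePLeaderOrOneMovingRight} to rule out a \potentialMin~coexisting with the \righter~at time $t$. The paper simply applies these in the opposite order (no \potentialMin~at $t$ first, then no creation afterwards), leaving implicit the persistence step you spell out, so the two arguments are essentially identical.
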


\begin{proof}
 By Lemma~\ref{OnlyOnePLeaderOrOneMovingRight}, since there is a \righter~at 
 time $t$, there is no \potentialMin~in the execution at time $t$.

 Since at time $t$, $r_{min}$ and $r$ are not \righter~and can never be 
 \righter~anymore (refer to Observation~\ref{noMoreRighter}), it is not possible
 to have $\mathcal{R} - 1$ \righter~located on a same node after time $t$. 
 However, in order to have a \potentialMin~in the execution, Rule 
 \rOneSix~must be executed. This rule is executed only if $\mathcal{R} - 1$ 
 \righter~are located on a same node. Therefore there is no \potentialMin~in the
 execution after time $t$.
\end{proof}

\begin{lemma} \label{potentialMinAndMinNoMoreRighter}
 If there is a \potentialMin~at a time $t$, and if before being $min$, $r_{min}$
 is a \righter, then there is no more \righter~in the suffix of the execution
 starting from time $t' = max\{t, t_{min}\}$.  
\end{lemma}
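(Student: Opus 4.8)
The plan is to trace back where the \potentialMin~at time $t$ came from and to argue that $r_{min}$ is the \emph{only} robot that can ever carry the state \righter~once this \potentialMin~exists. First I would recall that the sole way to create a \potentialMin~is Rule \rOneSix, so the hypothesis ``there is a \potentialMin~at time $t$'' forces Rule \rOneSix~to have been executed at some earlier time $s$; since the produced \potentialMin~first appears in the configuration following that firing, we have $s<t$. As Rule \rOneSix~fires exactly when $\mathcal{R}-1$ \righter~robots share a node and turns the minimum-identifier one into a \potentialMin~and the other $\mathcal{R}-2$ into \dumbSearcher~robots, exactly one robot of the system is left out of this firing, and by Observation~\ref{noMoreRighter} the $\mathcal{R}-1$ participants can never be \righter~again. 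Hence, from time $s$ onward, the only robot that can possibly be a \righter~is the single left-out robot.

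Next I would identify this left-out robot as $r_{min}$. The key leverage is the hypothesis that $r_{min}$ is a \righter~just before becoming $min$ (consistent with Lemma~\ref{OnlyMovingRightAndPLeaderCanBeMin}): if $r_{min}$ had instead been among the $\mathcal{R}-1$ participants of Rule \rOneSix, then, owning the global minimum identifier, it would have become the \potentialMin~during that firing, and by Observation~\ref{noMoreRighter} it could never again be a \righter---contradicting that it is a \righter~up to time $t_{min}$. Therefore $r_{min}$ is precisely the left-out robot, which combined with the previous paragraph shows that after time $s$ the \emph{only} robot that may bear the state \righter~is $r_{min}$ itself.

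Finally I would close the argument with the behaviour of $r_{min}$ at $t_{min}$. At $t_{min}$, $r_{min}$ becomes $min$ (a \minWaitingWalker), so it stops being a \righter, and by Observation~\ref{noMoreRighter} it is never a \righter~again; thus $r_{min}$ no longer carries the state \righter~at any time $\ge t_{min}$. Combining the two restrictions---no participant of Rule \rOneSix~is ever \righter~again (valid from time $s$ on) and $r_{min}$ is never \righter~from $t_{min}$ on---no robot is a \righter~at any time $\ge\max\{s,t_{min}\}$. Since $s<t$ yields $t'=\max\{t,t_{min}\}\ge\max\{s,t_{min}\}$, and since Observation~\ref{noMoreRighter} prevents a fresh \righter~from ever reappearing, no \righter~exists in the suffix of the execution starting at $t'$, which is the claim.

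The step I expect to be the main obstacle is pinning down that $r_{min}$ is the unique robot left out of Rule \rOneSix: this requires carefully reconciling the timing (that the \potentialMin~witnessed at $t$ was produced strictly before $t$, and that $r_{min}$ is demonstrably a \righter~at least until $t_{min}$) with the irreversibility guaranteed by Observation~\ref{noMoreRighter}. Everything else is bookkeeping over the finitely many robots and the single firing of Rule \rOneSix.
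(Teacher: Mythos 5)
Your proposal is correct and follows essentially the same route as the paper's proof: both rest on the facts that a \potentialMin~can only arise from the (unique) firing of Rule \rOneSix, that by Observation~\ref{noMoreRighter} its $\mathcal{R}-1$ participants can never be \righter~again, that the hypothesis ``$r_{min}$ is a \righter~before being $min$'' excludes $r_{min}$ from that firing, and that $r_{min}$ itself ceases to be a \righter~at $t_{min}$. The paper merely organizes this as a case split on $t$ versus $t_{min}$ instead of reasoning through the firing time $s$, and your off-by-one at time $s$ (participants are still \righter~at the firing round) is harmless since $t' \geq s+1$.
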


\begin{proof}
 Assume that before being $min$, $r_{min}$ is a \righter. Moreover assume that
 there is a \potentialMin~in the execution at time $t$. 
 
 $(*)$ For a robot to become a \potentialMin~Rule \rOneSix~must be executed.
 This rule is executed when $\mathcal{R} - 1$ \righter~are located on a same 
 node. While executing Rule \rOneSix, among the $\mathcal{R} - 1$ 
 \righter~located on a same node, the one with the minimum identifier becomes
 \potentialMin~while the others become \dumbSearcher. By 
 Observation~\ref{noMoreRighter} none of these $\mathcal{R} - 1$ robots can 
 become \righter~anymore after time $t$. 
 
 Consider then the two following cases.
 
 \begin{description}
  \item [Case 1:] \textbf{$\mathbf{t > t_{min}}$.}
 
  By Observation~\ref{noMoreRighter}, $r_{min}$ cannot be a \righter~after time
  $t_{min}$. Therefore $r_{min}$ is not among the $\mathcal{R} - 1$ robots that 
  execute Rule \rOneSix, and hence, by $(*)$, after time $t$, there is no
  more \righter~in the execution.
  
  \item [Case 2:] \textbf{$\mathbf{t \leq t_{min}}$.}
  
  By $(*)$, $r_{min}$ cannot be among the $\mathcal{R} - 1$ \righter~that 
  execute Rule \rOneSix, otherwise it cannot be a \righter~before being 
  $min$. Therefore, by $(*)$ and since after time $t_{min}$, by 
  Observation~\ref{noMoreRighter}, $r_{min}$ cannot be a \righter~anymore, there
  is no more \righter~in the execution after time $t_{min}$.
 \end{description}
\end{proof}

%
%
%
%

The following lemma is an extension of Lemma~\ref{NotBotDirection}. While 
Lemma~\ref{NotBotDirection} is true only when all the robots are executing 
Phase \AmITheMin, the following lemma is true whether the robots are 
executing Phase \AmITheMin~or Phase \MinWaitToBeKnown.

\begin{lemma} \label{lemma4Bis}
 If there is no \towerElection~in the execution, if, at time $t$, a robot $r$ is 
 such that 
 $state_{r} \in \{po\-ten\-tial\-Min, dumb\-Searcher, aware\-Searcher\}$, then, 
 during the Move phase of time $t - 1$, it does not consider the $\bot$ 
 direction.
\end{lemma}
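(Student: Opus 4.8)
The plan is to re-run the argument of Lemma~\ref{NotBotDirection} under the weaker hypothesis that no \towerElection~occurs, and to extend it to the extra state \potentialMin. The \potentialMin~case is immediate from Lemma~\ref{rightDirection}: if $r$ is a \potentialMin~at time $t$, then it considers the $right$ direction from the beginning of the execution up to the Look phase of time $t$; since the Move phase of time $t-1$ precedes that Look phase, $r$ considers $right \neq \bot$ there. It thus remains to treat the states \dumbSearcher~and \awareSearcher, exactly the scope of Lemma~\ref{NotBotDirection}, but now without assuming that no $min$ is present.

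The key preliminary step is to identify which walk-related states can appear when no \towerElection~occurs. The states \headWalker, \tailWalker, \minTailWalker, and \leftWalker~all originate, directly or transitively, from Rule \rTwoOne, whose guard $AllButTwoWaitingWalker()$ requires $\mathcal{R} - 2$ co-located robots in the states \waitingWalker~or \minWaitingWalker; by Corollary~\ref{uniqueLeader} (at most one \minWaitingWalker) and Lemma~\ref{waitingCautiousWalkOnSameNode} (every \waitingWalker~sits on that robot's node), such a configuration is exactly a \towerElection. Hence, absent a \towerElection, Rule \rTwoOne~never fires and no robot ever becomes a \headWalker. This is the crucial consequence, and the heart of the proof: since the predicate $NotWalkerWithHeadWalker(r')$ can then never hold, no robot ever executes Rule \rOneThree---the unique rule that turns a robot into a searcher while having it consider $\bot$ during its Move phase, as it calls \textsc{StopMoving} right after \textsc{BecomeAwareSearcher}. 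The no-$min$ hypothesis of Lemma~\ref{NotBotDirection} is a stronger assumption that equally forbids \headWalker~states, so relaxing it to the mere absence of a \towerElection~is precisely the generalisation achieved here.

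With Rule \rOneThree~excluded, I would finish by the same recurrence on time as in Lemma~\ref{NotBotDirection}, splitting on the state of $r$ at time $t-1$. If $r$ is not yet a searcher at time $t-1$, then to be a \dumbSearcher~or an \awareSearcher~at time $t$ it must execute, at time $t-1$, one of Rules \rOneSix, \rOneFive, \rOneSeven, or \rTwoFour (the remaining entry points \rOneTwo~and \rOneThree~being ruled out by the absence of \headWalker): in \rOneSix~the direction is inherited unchanged from a \righter, hence is $right$ by Lemma~\ref{rightDirection}, while each of the others first calls \textsc{BecomeAwareSearcher}, setting the direction to $right$ before any call to \textsc{Search}, so the direction during the Move phase of time $t-1$ is never $\bot$. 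If $r$ is already a searcher at time $t-1$, it executes \rOneNine, \rOneTen, or \rOneEleven; the first two reset the direction to $right$, and \rOneEleven~calls only \textsc{Search}, which returns a $left$ or $right$ direction whenever $r$ shares its node. The sole remaining sub-case---$r$ isolated and executing \rOneEleven---leaves the direction unchanged, so it coincides with the direction used during the Move phase of time $t-2$; since $r$ is a searcher at time $t-1$, the induction hypothesis (this very lemma applied at time $t-1$) shows this direction is not $\bot$. The base case holds because every robot is initially a \righter~considering $right$, which closes the recurrence and establishes the lemma.
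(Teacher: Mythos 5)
Your proposal is correct and follows essentially the same route as the paper's proof: the \potentialMin~case is dispatched by Lemma~\ref{rightDirection}, the absence of a \towerElection~is used to rule out walker states (hence Rules \rOneTwo~and \rOneThree, the only entry into a searcher state with direction $\bot$), and the remaining cases reduce to the case analysis of Lemma~\ref{NotBotDirection} augmented with Rule \rTwoFour~and closed by recurrence on time. Your only departure is that you justify explicitly (via Rule \rTwoOne, Corollary~\ref{uniqueLeader} and Lemma~\ref{waitingCautiousWalkOnSameNode}) the state restriction that the paper simply asserts, which is a harmless strengthening of the same argument.
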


\begin{proof}
 Consider a robot $r$ such that at time $t$, $state_{r} = po\-ten\-tial\-Min$.
 By Lemma~\ref{rightDirection}, $r$ considers the right direction during the 
 Move phase of time $t - 1$. Hence the lemma is proved in this case. 
  
 Consider now a robot $r$ such that, at time $t$, 
 $state_{r} \in \{dumb\-Searcher, aware\-Searcher\}$. Since there is no 
 \towerElection~in the execution, by the rules of \Gathering~and knowing that 
 initially all the robots are \righter, there are only \righter, \potentialMin,
 \dumbSearcher, \awareSearcher, \waitingWalker~and \minWaitingWalker~robots in
 the execution. Note that there is no rule in \Gathering~permitting a 
 \waitingWalker~or a \minWaitingWalker~to become either a \dumbSearcher~or an
 \awareSearcher. 
 
 Consider then the two following cases.
 
 \begin{description}
  \item [Case 1:] \textbf{At time $\mathbf{t - 1}$, $\mathbf{r}$ is neither a 
  \textit{dumb\-Searcher} nor an \textit{aware\-Searcher}.}
 
  Whatever the state of $r$ at time $t - 1$ (\righter~or
  \potentialMin), to have its variable state at time $t$ equals either to 
  \dumbSearcher~or to \awareSearcher, $r$ executes at time $t - 1$ either
  Rule \rTwoFour, \rOneFive, \rOneSix~or \rOneSeven.
 
  When a robot executes Rule \rTwoFour, it calls the function
  \textsc{BecomeAwareSearcher}. When a robot executes the function 
  \textsc{BecomeAwareSearcher}, it sets its direction to the $right$ direction, 
  therefore the lemma is also true in this case.
  
  Then, we can use the arguments of the proof of Lemma~\ref{NotBotDirection} to
  prove that the current lemma is true for the remaining cases. Indeed, even if
  in Lemma~\ref{NotBotDirection} the context is such that there is no $min$ in 
  the execution, the arguments used in its proof are still true in the context
  of the current lemma.
  
  \item [Case 2:] \textbf{At time $\mathbf{t - 1}$, $\mathbf{r}$ is a
  \textit{dumb\-Searcher} or an \textit{aware\-Searcher}.}
  
  Whatever the state of $r$ at time $t - 1$ (\dumbSearcher~or \awareSearcher),
  to have its variable state at time $t$ equals either to \dumbSearcher~or to
  \awareSearcher, $r$ executes at time $t - 1$ either Rule \rOneNine, 
  \rOneTen~or \rOneEleven. Similarly as for the case 1, we can use the arguments
  of the proof of Lemma~\ref{NotBotDirection} to prove that the current 
  lemma is true in these cases.
 \end{description}
\end{proof}

The following lemma proves that in the case where there are at least 3 
robots in the execution such that they are either \potentialMin,
\dumbSearcher~or \awareSearcher, then, in finite time, at least one of this 
kind of robots is located on node $u$. A \potentialMin, a \dumbSearcher~or
an \awareSearcher~located with the \minWaitingWalker~becomes a 
\waitingWalker~(Rule \rTwoThree). Therefore, this lemma permits to prove that in
the case where there are at least 3 robots in the execution (after time $t_{min}$) such that they are
either \potentialMin, \dumbSearcher~or \awareSearcher, then, in finite time, a
supplementary \waitingWalker~is located on node $u$. 

To prove the following lemma, we need to introduce a new notion. We call
$Seg(u, v)$ the set of nodes (of the footprint of the dynamic ring) between node
$u$ not included and $v$ not included 
considering the right direction.

\begin{lemma} \label{search}
 If there is no \towerElection~in the execution but there exists at a time $t$ 
 at least 3 robots such that they are either \potentialMin, \dumbSearcher~or
 \awareSearcher, then it exists a time $t' \geq t$ at which at least a 
 \potentialMin, a \dumbSearcher~or an \awareSearcher, reaches the node $u$.
\end{lemma}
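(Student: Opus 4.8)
The plan is to argue by contradiction: suppose that, from time $t$ on, no \potentialMin, \dumbSearcher, or \awareSearcher~ever reaches $u$. First I would record the consequences of this assumption. A searcher can leave the searcher set only in three ways, and I would rule out two of them: since the execution contains no \towerElection, no searcher can turn into a walker (the walker rules such as \rTwoOne~and \rOneFour~require a \minTailWalker~or \headWalker, which exist only after a \towerElection), and by Corollary~\ref{uniqueLeader} no robot other than $r_{min}$ becomes $min$. Hence the only remaining exit is Rule~\rTwoThree, which fires exactly when a searcher is co-located with the \minWaitingWalker~sitting on $u$ (Lemma~\ref{waitingCautiousWalkOnSameNode}); under our assumption this never happens, so the set of searchers is frozen from time $t$ on, contains at least three robots, and has fixed, pairwise distinct identifiers. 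By Lemma~\ref{lemma4Bis} none of them ever considers $\bot$, so each is always trying to move, and by Lemma~\ref{waitingCautiousWalkOnSameNode} there are no non-searcher robots off $u$; thus away from $u$ the motion is purely that of the searchers under \textsc{Search}: a lone searcher keeps its current direction, whereas among co-located searchers the maximum identifier turns $left$ and all others turn $right$.

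Next I would use the class. Since the ring is \COT, at most one edge $e^{*}$ is eventually missing, and a searcher can be blocked forever only while standing on an endpoint of $e^{*}$ and considering the direction of $e^{*}$; so at most two searchers are permanently stuck, and with at least three searchers some searcher moves infinitely often. I would then split on $e^{*}$. If no edge is eventually missing, every edge recurs, and I would track the global-maximum-identifier searcher $M$: being the maximum of every group it ever joins, either $M$ never meets anyone, keeps a fixed direction, crosses each recurrent edge infinitely often, and so circles the ring onto $u$; or $M$ meets another searcher once, thereafter considers $left$ forever (company $\Rightarrow left$, alone $\Rightarrow$ keep), and sweeps leftward across recurrent edges until it lands on $u$. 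Either way some searcher reaches $u$, a contradiction.

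The delicate case is a single eventually missing edge $e^{*}$, which eventually confines the searchers to the path obtained by deleting $e^{*}$. Here I would linearise along $e^{*}$ and isolate the two ``gateways'' to $u=0$, the recurrent edges $\{n-1,0\}$ and $\{0,1\}$ incident to $u$: a searcher reaches $u$ iff it stands on a neighbour of $u$ considering the gateway direction when that edge is present. I would track the searcher closest to $u$ on a given side via the segment $Seg(u,v)$ (with $v$ its node) as a monovariant, and show it cannot stay bounded away from $u$: a lone searcher heading toward $u$ walks monotonically until it either crosses a gateway (done) or meets another searcher, while the identifier rule always keeps some searcher ``pushed'' toward $u$; moreover a searcher stuck on an endpoint of $e^{*}$ is eventually reached by another (by the same closest-searcher argument on that side), and the collision restarts its motion. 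The main obstacle is exactly this last point, namely ruling out a stable bouncing pattern in which every searcher reverses just short of a gateway. I expect to dispatch it with the extremal argument on $M$ combined with the counting bound that $e^{*}$ has only two endpoints: with three or more searchers the perpetually-$left$ searcher $M$ cannot be trapped short of $u$, yielding the contradiction.
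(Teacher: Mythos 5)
Your setup has a first inaccuracy that matters later: it is not true that ``there are no non-searcher robots off $u$'' and that the searcher set is frozen with fixed identifiers. Lemma~\ref{waitingCautiousWalkOnSameNode} only pins the \waitingWalker/\minWaitingWalker~robots to $u$; \righter~robots may well be off $u$ (the paper's proof devotes its item $(iv)$ to them), and a \righter~that meets a searcher becomes an \awareSearcher~via Rule~\rOneSeven, possibly with a larger identifier than your extremal robot $M$. So ``$M$ is the maximum of every group it ever joins'' is not guaranteed, and the ``company $\Rightarrow$ left'' invariant for $M$ can break. (A smaller gloss of the same kind: a \potentialMin~does not execute \textsc{Search} at all, it executes Rule~\rOneEight~and always goes right; the paper needs its item $(vi)$ -- the \potentialMin~has the minimum identifier among off-$u$ robots -- to treat it as if it ran \textsc{Search}.)

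The decisive gap, however, is exactly the point you flag as ``the main obstacle'' and only ``expect to dispatch'': in the eventual-missing-edge case the claim ``the perpetually-left searcher $M$ cannot be trapped short of $u$'' is false. If $M$ lies on the side $Seg(w,u)\cup\{w\}$ (where $w$ is the right endpoint of the eventual missing edge), then considering $left$ moves it \emph{away} from $u$; it reaches $w$ and is stuck there forever, and precisely because it has the maximum identifier, no meeting ever flips its direction. Your ``closest searcher to $u$'' quantity is also not a monovariant: an isolated searcher heading away from $u$ strictly increases it. The paper closes this case by a different mechanism: by pigeonhole at least two searcher-kind robots lie on one side of the missing edge; if both keep the direction pointing away from $u$ they can never meet (a meeting would flip one of them), yet both must end up on the missing-edge endpoint of that side, forcing a meeting -- contradiction. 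Hence some searcher-kind robot on that side considers the direction toward $u$, and a token-passing argument (whenever the tracked robot's direction is flipped by a meeting, the co-located robot that now considers the direction toward $u$ takes over, all intermediate edges being recurrent) delivers a robot of the required kind to $u$. Without this collision/hand-off argument, or a correct substitute for it, your proof does not go through; your Case~1 (no eventual missing edge) is essentially fine modulo the \righter~issue above, but Case~2 is where the lemma lives.
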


\begin{proof}
 Assume that there is no \towerElection~in the execution. By the rules of 
 \Gathering~and knowing that initially all the robots are \righter, this implies
 that there are only \righter, \potentialMin, \dumbSearcher, \awareSearcher, 
 \waitingWalker~and \minWaitingWalker~robots in the execution. Since there is no
 \towerElection, $r_{min}$ is \minWaitingWalker~and is located on node $u$. By
 Lemma~\ref{waitingCautiousWalkOnSameNode}, we know that all the
 \waitingWalker~robots (if any) are on node $u$ and $r_{min}$ as well as the 
 \waitingWalker~robots do not move. This implies that among the robots that are
 not on node $u$ there are only \righter, \potentialMin, \dumbSearcher~and
 \awareSearcher.
 
 Assume by contradiction that at a time $t$, there are at least 3 robots such
 that they are either \potentialMin, \dumbSearcher~or \awareSearcher~and such
 that for all time $t' \geq t$ none of these kinds of robots succeed to reach 
 the node $u$ at time $t'$. We consider the execution from time $t$.

 Consider a robot $r$ such that at time $t$,
 $state_{r} \in \{po\-ten\-tial\-Min, dumb\-Sear\-cher, aware\-Sear\-cher\}$.
 
 $(i)$ If $r$ is an \awareSearcher, since it cannot reach $u$, it executes
 Rule \rOneEleven, and hence it executes the function \textsc{Search}. The 
 variable state of $r$ is not updated while $r$ executes this function, 
 therefore $r$ is an \awareSearcher~and executes Rule \rOneEleven~and the 
 function \textsc{Search} at each instant time from time $t$. Thus by
 Lemma~\ref{lemma4Bis}, $r$ always considers a direction different from $\bot$
 after time $t$ included.
 
 $(ii)$ If $r$ is a \dumbSearcher, since it cannot reach $u$, it can 
 execute either Rule \rOneTen~(if it 
 is on the same node as an \awareSearcher) and hence becomes an 
 \awareSearcher~robot and executes the function \textsc{Search}, Rule
 \rOneNine~(if it is on the same node as a \righter) and hence becomes an 
 \awareSearcher~and executes the function \textsc{Search}, or Rule 
 \rOneEleven~and hence stays a \dumbSearcher~and executes the function 
 \textsc{Search}. By Lemma~\ref{lemma4Bis} and by $(i)$, $r$ always considers
 a direction different from $\bot$ after time $t$ included.
 
 $(iii)$ If $r$ is a \potentialMin, by 
 Lemma~\ref{OnlyOnePLeaderOrOneMovingRight}, there is no \righter~in the 
 execution at time $t$ and therefore by Observation~\ref{noMoreRighter} there 
 is no \righter~in the execution after time $t$ included. Therefore, since $r$ 
 cannot reach $u$, it can execute either Rule \rOneFive~(if it is on the
 same node as an \awareSearcher) and hence becomes an \awareSearcher~and 
 executes the function \textsc{Search}, or Rule \rOneEight~and hence stays a
 \potentialMin~and considers the right direction. Therefore by
 Lemma~\ref{lemma4Bis} and by $(i)$, $r$ always considers a direction different
 from $\bot$ after time $t$ included.

 $(iv)$ If there is a \righter~robot in the execution after time $t$, then by 
 Lemma~\ref{righterOtherAndMinNoMorePotentialMin}, there
 is no \potentialMin~robot in the execution after time $t$ included. If
 a \righter~robot is on the same node as a \dumbSearcher~or as an
 \awareSearcher, it executes Rule \rOneSeven~and hence becomes an
 \awareSearcher~and executes the function \textsc{Search}.
 
 $(v)$ While executing the function \textsc{Search} if a robot is isolated it 
 considers the last direction it considered. While executing the function 
 \textsc{Search} if a robot is not isolated, if it possesses the maximum 
 identifier among all the robots of its current location it considers the left
 direction otherwise it considers the right direction.
 
 $(vi)$ Note that if there is a \potentialMin~while $r_{min}$ is
 \minWaitingWalker, then it possesses the minimum identifier among all the 
 robots not located on node $u$. Indeed, only Rule \rOneSix~permits a robot
 to become a \potentialMin. For this rule to be executed, $\mathcal{R} - 1$ 
 \righter~robots must be located on a same node. While executing this rule the
 robot with the minimum identifier among the $\mathcal{R} - 1$ robots located on
 a same node becomes \potentialMin. Since, by Lemma~\ref{OnePotentialMin}, there 
 is only one \potentialMin~in the whole execution and since by definition
 $r_{min}$ possesses the minimum identifier among all the robots of the system, 
 $r_{min}$ does not execute Rule \rOneSix. Therefore, while $r_{min}$ is 
 \minWaitingWalker, the \potentialMin~possesses the minimum identifier among all
 the robots not located on node $u$. Thus, when a \potentialMin~executes
 Rule \rOneEight~and hence considers the right direction it possesses the same
 behavior as if it was executing the function \textsc{Search}.
 
 \begin{description}
  \item [Case 1:] \textbf{There is no eventual missing edge.}
  
  Call $d$ the direction of $r$ during the Look phase of time $t$, and let $v$
  be the node where $r$ is located during the Look phase of time $t$. Call $w$ 
  the adjacent node of $v$ in the direction $d$. Call $e$ the edge between $v$ 
  and $w$. As proved in cases $(i)$, $(ii)$ and $(iii)$, $d$ is either equal 
  to $right$ or $left$.
  
  We want to prove that it exists a time $t'$ ($t' \geq t$) such that a robot
  $r'$ (it is possible to have $r' = r$) with
  $state_{r'} \in \{po\-ten\-tial\-Min, dumb\-Sear\-cher, aware\-Sear\-cher\}$
  considers the direction $d$ and is located on $w$ during the Look phase of 
  time $t'$.
  
  Call $t"$ ($t" \geq t$) the first time after time $t$ included where there is 
  an adjacent edge to $v$. If during the Move phase of time $t"$, $r$ does
  not consider the direction $d$, by $(i)-(vi)$ this necessarily implies that
  when $r$ executes the function \textsc{Search} (or a function that behaves 
  like the function \textsc{Search}) there is at least another robot
  on its node. Moreover by $(i)-(vi)$ the other robot(s) with $r$ also executes 
  the function \textsc{Search} (or a function that behaves like the function
  \textsc{Search}) and is or becomes \potentialMin, \dumbSearcher~or
  \awareSearcher. Therefore, since all the robots possess distinct identifiers
  and by $(v)$, during the Move phase of time $t"$, a robot among
  $\{po\-ten\-tial\-Min, dumb\-Sear\-cher, aware\-Sear\-cher\}$ on node $v$
  considers the direction $d$. 
  
  Since all the edges are infinitely often present we can repeat these arguments 
  on each instant time until the time $t_{e}$ where $e$ is present. At time
  $t_{e}$ a robot (either \potentialMin, \dumbSearcher, \awareSearcher) 
  considers the direction $d$ and hence crosses $e$. Since the direction
  considered by a robot can be updated only during Compute phases, we succeed to
  prove that $t'$ exists.  
  
  Applying these arguments recurrently we succeed to prove that in finite time a 
  robot $r"$ such that 
  $state_{r"} \in \{po\-ten\-tial\-Min, dumb\-Sear\-cher, aware\-Sear\-cher\}$
  is on node $u$.
  
  \item [Case 2:] \textbf{There is an eventual missing edge.}
  
  Call $e$ the eventual missing edge. Consider the execution after the time 
  greater or equal to $t$ where $e$ is missing forever. Call $v$ the node such 
  that its adjacent right edge is $e$. Call $w$ the adjacent right node of $v$.
  
  At least two robots that are either \potentialMin, \dumbSearcher~or 
  \awareSearcher~are either on nodes in $Seg(u, v) \cup \{v\}$ or on nodes in 
  $Seg(w, u) \cup \{w\}$. 
  
  Assume that there are at least two robots that are either \potentialMin, 
  \dumbSearcher~or \awareSearcher~which are on nodes in $Seg(u, v) \cup \{v\}$.
  The reasoning when there are at least two robots that are either 
  \potentialMin, \dumbSearcher~or \awareSearcher~which are on nodes in 
  $Seg(w, u) \cup \{w\}$ is similar.
  
  The edge $e$ is an eventual missing edge. It can exist only one eventual 
  missing edge in \COT ring. Therefore all the edges between the 
  nodes in $\{u\} \cup Seg(u, v) \cup\{v\}$ are infinitely often present. Thus,
  if there exists a robot (either \potentialMin, \dumbSearcher~or
  \awareSearcher) that considers the left direction then we can apply the 
  arguments of case 1 to prove that in finite time a robot $r"$, such that 
  $state_{r"} \in \{po\-ten\-tial\-Min, dumb\-Sear\-cher, aware\-Sear\-cher\}$,
  is on node $u$.
  
  Therefore consider that all the robots, that are either \potentialMin, 
  \dumbSearcher~or \awareSearcher~and that are located on nodes in 
  $Seg(u, v) \cup \{v\}$, consider the 
  right direction. In this case a robot either \potentialMin, \dumbSearcher~or
  \awareSearcher~cannot be located on the same node as a robot either \righter,
  \potentialMin, \dumbSearcher~or \awareSearcher, otherwise during the Move
  phase of the time of this meeting, by $(i)-(vi)$, it exists a robot either
  \potentialMin, \dumbSearcher~or \awareSearcher~that considers the left 
  direction. 
  
  Since $e$ is an eventual missing edge, and since there are at least two 
  robots either \potentialMin, \dumbSearcher~or \awareSearcher~that consider the
  right direction, applying the arguments of case 1 on two of 
  these robots, we succeed to prove that in finite time two of these robots are 
  located on $v$. Therefore, by the previous paragraph, in finite time a robot
  $r"$, such that 
  $state_{r"} \in \{po\-ten\-tial\-Min, dumb\-Sear\-cher, aware\-Sear\-cher\}$,
  is on node $u$.
 \end{description}
\end{proof}

Now, we prove one of the two main lemmas of this phase: we prove that if a 
\potentialMin~is present in the execution, then, in finite time, a 
\towerElection~is present in the execution. While proving this lemma, we also
prove that, at the time when the \towerElection~is formed, among the two robots 
not involved in this \towerElection, it can exit at most one \righter. This 
information is useful to prove Phase \WaitTermination.

\begin{lemma} \label{PotentialMinAndTowerElection}
 If there is a \potentialMin~in the execution, then there exists a time
 $t$ at which a \towerElection~is present and among the robots not involved in 
 the \towerElection~there is at most one \righter~robot at time $t$.
\end{lemma}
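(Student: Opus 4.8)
The plan is to fix, by Corollary~\ref{uniqueLeader} and Lemma~\ref{LeaderTheWholeExecution}, the finite time $t_{min}$ after which $r_{min}$ is permanently $min$; as established in Lemma~\ref{waitingCautiousWalkOnSameNode}, it is then a \minWaitingWalker~motionless on a fixed node $u$ for as long as no \towerElection~exists. Since by hypothesis a \potentialMin~is present and, by Lemma~\ref{OnePotentialMin}, there is a unique one, Rule~\rOneSix~has been executed, and that rule requires $\mathcal{R}-1$ \righter~to be co-located, so exactly one robot stays outside this meeting. As Rule~\rOneSix~fires only once and all robots are initially \righter, Observation~\ref{noMoreRighter} guarantees that after this firing at most one \righter~exists in the whole remaining execution. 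This already settles the second part of the statement: whatever the configuration at the time the \towerElection~forms, among the two robots left aside there can be at most one \righter.

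Next I would show that every robot other than $r_{min}$ which is not a \waitingWalker~becomes, in finite time, a \potentialMin, a \dumbSearcher, or an \awareSearcher (a ``searcher''). By Lemma~\ref{OnlyMovingRightAndPLeaderCanBeMin}, $r_{min}$ was a \righter~or a \potentialMin~before $t_{min}$. If it was a \righter, then Lemma~\ref{potentialMinAndMinNoMoreRighter} ensures no \righter~survives after $\max\{t,t_{min}\}$, so the whole pool of $\mathcal{R}-1$ non-$r_{min}$ robots consists of searchers. If it was a \potentialMin, then $r_{min}$ is the unique \potentialMin~and, after it turns $min$, the only possible surviving \righter~is the single robot $r_q$ left outside Rule~\rOneSix, while the other $\mathcal{R}-2$ robots are searchers. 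I would then argue that $r_q$ cannot remain a \righter~forever: moving right it either meets a searcher (Rule~\rOneSeven, becoming an \awareSearcher) or reaches $u$; on $u$ it becomes an \awareSearcher~by Rule~\rTwoFour~if its right edge is present, and otherwise it is stuck on $u$ until a searcher reaches $u$ (Lemma~\ref{search}), at which instant Rule~\rOneSeven~turns it into an \awareSearcher~and Rule~\rTwoThree~then makes it a \waitingWalker. In every case the lone \righter~is eventually absorbed into the searcher pool (or directly onto $u$).

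With the pool reduced to searchers, the engine is Lemma~\ref{search}: whenever no \towerElection~exists and at least three searchers remain, one of them reaches $u$, where Rule~\rTwoThree~immediately converts it into a \waitingWalker, which then stays on $u$ by Lemma~\ref{waitingCautiousWalkOnSameNode}. Each such application turns one searcher into a fresh \waitingWalker~on $u$ and strictly decreases the number of robots that are neither $min$ nor \waitingWalker. Starting from the $\mathcal{R}-1$ pool robots, I would iterate this until exactly two searchers are left, producing $\mathcal{R}-3$ \waitingWalker~co-located with the \minWaitingWalker, i.e.\ a \towerElection~(at which point Rule~\rTwoOne~is enabled), with the two robots not involved being searchers and, at any earlier instant, the at-most-one surviving \righter. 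This is the time $t$ claimed by the lemma.

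The delicate point is the bookkeeping at the very end of the funneling. Lemma~\ref{search} needs three searchers to guarantee progress, so a priori the process could stall with two searchers and the still-unconverted \righter, yielding only $\mathcal{R}-4$ \waitingWalker. The main obstacle is therefore to prove that the lone \righter~is always absorbed in time---equivalently, that with two searchers plus one \righter~in a \COT~ring (which contains at most one eventual missing edge) the \righter~must eventually share a node with a searcher. I expect this to rest on the coverage produced by the round-trips of the \textsc{Search} routine together with the connectivity of the eventual underlying graph, and to be the step requiring the most care. Once it is settled, the absorbed \righter~restores a third searcher and the final application of Lemma~\ref{search} completes the last of the $\mathcal{R}-3$ \waitingWalker.
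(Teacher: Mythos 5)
Your opening bookkeeping (uniqueness of the \potentialMin, Rule \rOneSix~firing only once, hence at most one surviving \righter) and the funneling of searchers onto $u$ via Lemma~\ref{search} and Rule \rTwoThree~match the paper's proof. The genuine gap is in your end-game. You assert that ``in every case the lone \righter~is eventually absorbed into the searcher pool (or directly onto $u$)'' and you propose to close the proof by showing that, once only two searchers remain, the \righter~must eventually share a node with a searcher. That claim is false in \COT~rings and is not what the paper establishes: the lone \righter, which always heads right while isolated, can reach a node $v \neq u$ whose adjacent right edge is the eventual missing edge and remain stuck there, isolated, forever; no searcher is ever forced to visit $v$. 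This is exactly the paper's Case 2.1, where the searcher located in $Seg(u,v)$ may be heading left and then reaches $u$ without ever meeting the \righter. So the absorption you rely on cannot be proved, and, as you note, Lemma~\ref{search} no longer applies with only two searchers left, so your induction stalls one \waitingWalker~short.

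The paper resolves this by a different argument. It reasons by contradiction under the standing hypothesis that the \righter~remains a \righter~forever (if it is ever absorbed, your Case-1-style funneling finishes). Under that hypothesis no searcher may ever meet the \righter, and the proof analyzes the positions of the two remaining searchers $r'$ and $r''$ relative to the eventual missing edge (the segments $Seg(u,v)$ and $Seg(w,u)$), using that a searcher never considers $\bot$ (Lemma~\ref{lemma4Bis}), that two co-located searchers split into opposite directions, and that a searcher moving right in $Seg(u,v)$ would reach the \righter~(excluded by the hypothesis). This forces at least one of $r'$, $r''$ to reach $u$, which completes the $\mathcal{R}-3$ \waitingWalker~and hence the \towerElection~(or triggers Rule \TerminationTwo~in the sub-case where the \righter~itself is stuck on $u$), the desired contradiction; the unabsorbed \righter~simply ends up as one of the two robots excluded from the tower. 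Your proof needs this argument (or an equivalent one) in place of the absorption claim; the rest of your outline is sound.
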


\begin{proof}
 Assume that there exists a time $t$ at which a \potentialMin~robot is present
 in the execution. Assume by contradiction that there is no \towerElection~in 
 the execution. In the following, we consider the execution from time 
 $t' = max\{t, t_{min}\}$.
 
 Since there is no \towerElection, by the rules of \Gathering~and knowing that
 initially all the robots are \righter, there are in 
 the execution only \righter, \potentialMin, \dumbSearcher, 
 \awareSearcher, \waitingWalker~and \minWaitingWalker~robots. 
 By Lemma~\ref{waitingCautiousWalkOnSameNode}, all the \waitingWalker~are 
 located on the same node as $r_{min}$, when 
 $state_{r_{min}} = min\-Waiting\-Walker$, and both $r_{min}$, as a
 \minWaitingWalker, and the \waitingWalker~robots do not move. By
 Corollary~\ref{uniqueLeader}, only $r_{min}$ can be a \minWaitingWalker. We
 recall that $r_{min}$ as \minWaitingWalker~is located on node $u$. Therefore
 the \minWaitingWalker~and all the \waitingWalker~(if any) are located on node
 $u$.
 
 By Lemma~\ref{OnlyMovingRightAndPLeaderCanBeMin} we know that, before being 
 $min$, $r_{min}$ is either a \righter~robot or a \potentialMin~robot. We can 
 then consider the two following cases.
 
 
 \begin{description}
  \item [Case 1:] \textbf{Before being $\mathbf{min}$, $\mathbf{r_{min}}$ is a 
  $\mathbf{righter}$ robot.} 
  
  By Lemma~\ref{potentialMinAndMinNoMoreRighter}, at time $t'$ there
  are only \potentialMin, \dumbSearcher, \awareSearcher, \waitingWalker~and 
  \minWaitingWalker~robots in the execution. Moreover, in this case, all the 
  robots that are not on node $u$ are necessarily either \potentialMin, 
  \dumbSearcher~or \awareSearcher.
  
  When a \potentialMin, a \dumbSearcher~or an \awareSearcher~robot meets the
  \minWaitingWalker, it executes Rule \rTwoThree, hence it becomes a 
  \waitingWalker~and stops to move.
  
  Then each time there are at least 3 robots in the execution such that they are
  either \potentialMin, \dumbSearcher~and/or \awareSearcher, using
  Lemma~\ref{search}, we succeed to prove that at least one \potentialMin,
  \dumbSearcher~or \awareSearcher~succeeds to join the node $u$ and therefore 
  becomes a \waitingWalker. Therefore, 
  by Lemma~\ref{waitingCautiousWalkOnSameNode}, a \towerElection~is formed in
  finite time.  
  
  \item [Case 2:] \textbf{Before being $\mathbf{min}$, $\mathbf{r_{min}}$ is a
  $\mathbf{potentialMin}$.}
  
  For a robot to become a \potentialMin, Rule \rOneSix~must be executed.
  This rule is executed when $\mathcal{R} - 1$ \righter~are located on a same 
  node. While executing this rule, among the $\mathcal{R} - 1$ \righter~located 
  on a same node, one becomes \potentialMin~while the other become
  \dumbSearcher. By Observation~\ref{noMoreRighter}, none of these
  $\mathcal{R} - 1$ robots can become \righter~anymore. Therefore, by
  Lemma~\ref{OnePotentialMin}, once $r_{min}$
  is $min$, there are only \dumbSearcher, \awareSearcher, \waitingWalker,
  \minWaitingWalker~robots and at most one \righter~robot in the execution. 
  Moreover, in this case, among the robots that are not on node $u$, there are
  only \dumbSearcher~and \awareSearcher~and at most one \righter. By the rules 
  of \Gathering, as long as a \dumbSearcher~or an \awareSearcher~is not on node
  $u$, its variable state stays in $\{dumb\-Searcher, aware\-Searcher\}$.  
  
  Once $r_{min}$ is $min$, if there exists a time at which there is no more
  \righter~robot in the execution, then, using the arguments of case 1, we 
  succeed to prove that a \towerElection~is formed in finite time. Therefore
  assume that there is always a \righter~robot $r$ in the execution.
  
  When a \dumbSearcher~or an \awareSearcher~robot is located on the same node as 
  the \minWaitingWalker, it executes Rule \rTwoThree, hence it becomes a 
  \waitingWalker~and stops to move. Then, using multiple times
  Lemma~\ref{search} and Lemma~\ref{waitingCautiousWalkOnSameNode}, we know that 
  in finite time there are in the execution only one \righter~and only 2 robots
  $r'$ and $r"$ such that
  $state_{r'}, state_{r"} \in \{aware\-Sear\-cher, dumb\-Sear\-cher\}^{2}$ (all 
  the other robots are \minWaitingWalker~and \waitingWalker~robots and are 
  located on node $u$). Note that $r'$ (resp. $r"$) cannot be located on node 
  $u$, otherwise, by Rule \rTwoThree, a \towerElection~is formed. Therefore,
  $r'$ and $r"$ always have their variable state in
  $\{dumb\-Searcher, aware\-Searcher\}$.  
  
  When a \righter~robot is located on the same node as an \awareSearcher~or as a
  \dumbSearcher, it executes Rule \rOneSeven~and becomes an \awareSearcher.
  Similarly, if a \righter~is on the same node as a \minWaitingWalker~while the 
  adjacent right edge to its position is present, then the \righter~robot 
  executes Rule \rTwoFour~and becomes an \awareSearcher. Therefore, as 
  highlighted previously, these situations cannot happen, otherwise a 
  \towerElection~is formed in finite time. This implies that, as long as the
  robot $r$ is not on node $u$, it must be isolated. Since $r'$ and $r"$ cannot 
  be located on node $u$, if $r$ succeeds to join the node $u$ in the case there 
  is no present adjacent right edge to $u$, then $r$ executes Rule 
  \rOneEight~and therefore stays a \righter~and considers the right direction. 
  Therefore, since an isolated \righter~robot always executes Rule
  \rOneEight, hence always considers the right direction, this implies that
  either $r$ is on a node $v$ ($v \neq u$) such that the adjacent right edge of
  $v$ is an eventual missing edge at least from the time where $r$ is on node 
  $v$ (case 2.1) or $r$ succeeds to reach $u$ but the adjacent right edge of $u$
  is an eventual missing edge at least from the time where $r$ is on node $u$ 
  (case 2.2).
  
  $(*)$ When an \awareSearcher~or a \dumbSearcher~is isolated it 
  executes Rule \rOneEleven, hence
  executes the function \textsc{Search}, therefore it considers the last
  direction it considered. By Lemma~\ref{lemma4Bis}, this direction cannot 
  be equal to $\bot$. 
  
  $(**)$ Since only $r'$ and $r"$ have their variable state in 
  $\{dumb\-Searcher, aware\-Searcher\}^{2}$, and since $r'$ and $r"$ cannot be 
  located on node $u$ and cannot be located with $r$, if a \dumbSearcher~is
  located on the same node as an \awareSearcher~or if an \awareSearcher~(resp. a
  \dumbSearcher) is located on the same node as another \awareSearcher~(resp.
  \dumbSearcher), necessarily this means that $r'$ and $r"$ are located 
  on a same node, and there is no other robot on the same node as them. When a 
  \dumbSearcher~is on the same node as an \awareSearcher~it executes Rule
  \rOneTen, hence it becomes an \awareSearcher~and executes the function
  \textsc{Search}. When an \awareSearcher~is on the same node as a 
  \dumbSearcher~it executes Rule \rOneEleven~and hence executes the function
  \textsc{Search}. Since $r'$ and $r"$ have distinct identifiers, 
  when an \awareSearcher~and a \dumbSearcher~are on a same node, they both
  execute the function \textsc{Search}, therefore one considers the right
  direction, while the other one considers the left direction. Similarly, if two 
  \awareSearcher~(resp. \dumbSearcher) robots are on the same node, they both 
  execute Rule \rOneEleven~and hence the function \textsc{Search}, therefore 
  one considers the right direction, while the other one considers the left
  direction. 
  
  \begin{description}
   \item [Case 2.1:]
   
   Let $w$ be the adjacent node of $v$ in the right direction. It can 
   exist only one eventual missing edge, which is the adjacent right edge of 
   node $v$. Therefore, if a robot, in $Seg(u, v)$ or in $Seg(w, u)$, considers
   a direction $d$ and does not change this direction, it eventually succeeds to
   move in this direction. Similarly, if a robot is on node $w$ and always 
   considers the right direction, it eventually succeeds to move in this 
   direction ($***$). 
   
   Firstly, assume that only $r'$ (resp. $r"$) is on a node in $Seg(u, v)$. By 
   $(*)$ and $(***)$, $r'$ (resp. $r"$) cannot consider the right direction, 
   otherwise it reaches $r$ in finite time. Therefore $r'$ (resp. $r"$) 
   considers the left direction. By $(*)$ and $(***)$, in finite time, $r'$ 
   (resp. $r"$) succeeds to reach $u$, implying that a \towerElection~is formed.
   
   Secondly, assume that $r'$ and $r"$ are on nodes in $Seg(u, v)$. By $(*)$, 
   $(**)$ and $(***)$, they cannot meet otherwise one of them reaches $u$ in
   finite time. Moreover, if they do not meet none of them can consider the left
   direction otherwise, by $(*)$ and $(***)$, they reach $u$ in finite time.
   Therefore, they cannot meet and must consider the right direction. By $(*)$ 
   and $(***)$, in finite time one robot among $r'$ and $r"$ succeeds to reach
   $r$, implying that a \towerElection~is formed.
   
   Thirdly, assume that $r$ and $r"$ are on nodes in $Seg(v, u)$. By $(*)$,  
   $(**)$ and $(***)$, they cannot meet otherwise one of them reaches $u$ in 
   finite time. Moreover, if they do not meet none of them can consider the 
   right direction otherwise, by $(*)$ and $(***)$, they reach $u$ in finite 
   time. Therefore, they cannot meet and must consider the left direction. However,
   by $(*)$ and $(***)$, since the adjacent right edge of $v$ is missing 
   forever, in finite time $r'$ and $r"$ reach $w$, which is a contradiction
   with the fact that they do not meet.  
   
   \item [Case 2.2:]
   
   Applying the arguments used in the case 2.1, when $r'$ and $r"$ are on nodes
   in $Seg(v, u)$, to $r'$ and $r"$ when there are on nodes in $Seg(u, u)$, we
   succeed to prove that in finite time at least one of them reaches node $u$,
   making Rule \TerminationTwo~true, which leads to a contradiction.   
  \end{description}
 \end{description}
\end{proof}

Finally, we prove the other main lemma of this phase: we prove that even if 
there is no \potentialMin~in the execution, then, in finite time, a 
\towerElection~is present in the execution. While proving this lemma, we also
prove that, at the time when the \towerElection~is formed, among the two robots 
not involved in this \towerElection, it can exit at most one \righter. This 
information is useful to prove Phase \WaitTermination.

\begin{lemma} \label{NoPotentialMinAndTowerElection}
 If there is no \potentialMin~in the execution, then there exists a time
 $t$ at which a \towerElection~is present and among the robots not involved in
 the \towerElection~there is at most one \righter~robot at time $t$.
\end{lemma}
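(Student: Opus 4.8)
The plan is to argue by contradiction, assuming there is no \potentialMin~and that no \towerElection~ever forms, and to derive that $\mathcal{R}-3$ \waitingWalker~robots nonetheless gather on $r_{min}$'s node. First I would pin down which states can occur. Since a \potentialMin~is produced only by Rule \rOneSix, and that rule is the sole creator of \dumbSearcher~robots, the absence of a \potentialMin~forces Rule \rOneSix~to never fire, so no \dumbSearcher~ever appears; combined with the standing assumption that no \towerElection~forms (hence Rule \rTwoOne~never fires, excluding \headWalker, \tailWalker~and \minTailWalker), the only reachable states are \righter, \awareSearcher, \waitingWalker~and \minWaitingWalker. Next, by Corollary~\ref{uniqueLeader} $r_{min}$ becomes $min$ in finite time, and by Lemma~\ref{OnlyMovingRightAndPLeaderCanBeMin} it is a \righter~just before; inspecting the four disjuncts of $MinDiscovery()$ shows that with no \potentialMin, \dumbSearcher~or \awareSearcher~available to a \righter, and with no robot yet carrying $id_{r_{min}}$ in its $idMin$, the only applicable disjunct is $rightSteps_{r_{min}} = 4 * id_{r_{min}} * n$. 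Denoting $u$ its node and invoking Lemma~\ref{waitingCautiousWalkOnSameNode}, $r_{min}$ and every \waitingWalker~then sit immobile on $u$, so every off-$u$ robot is a \righter~or an \awareSearcher.

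Then I would show that searcher-type robots are produced and funneled to $u$. Here the only birth mechanism for an \awareSearcher~is Rule \rTwoFour (a \righter~meeting the \minWaitingWalker~while its right edge is present), after which \awareSearcher~robots spread to colocated \righter~robots through Rule \rOneSeven~and are absorbed on $u$ as \waitingWalker~robots through Rule \rTwoThree. Applying Lemma~\ref{search} each time at least three searcher-type robots coexist, one more robot reaches $u$ and, by Lemma~\ref{waitingCautiousWalkOnSameNode}, becomes a \waitingWalker; iterating drives the \waitingWalker~count up to $\mathcal{R}-3$, producing the \towerElection~and contradicting the assumption. As in Lemma~\ref{PotentialMinAndTowerElection}, exactly two robots remain off $u$ at the formation time, and I would bound the \righter~robots among them using the \COT~structure: at most one node has a permanently missing right edge, so at most one isolated \righter~can be prevented forever both from crossing $u$ (where Rule \rTwoFour~would convert it) and from being swept up by an \awareSearcher~(where Rule \rOneSeven~would convert it).

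I expect the eventual-missing-edge case analysis to be the main obstacle, mirroring subcases 2.1--2.2 of Lemma~\ref{PotentialMinAndTowerElection}. The delicate point is ruling out two \righter~robots persisting off $u$: if both were stuck on the unique node $v$ whose right edge is the eventual missing edge, I must show either that some \awareSearcher~produced on the resulting path must reach $v$ and convert them via Rule \rOneSeven, or else that the $(\mathcal{R}-3)$-th \waitingWalker~never materialises, forcing a fresh contradiction; and I must separately handle the degenerate placement $v=u$, where \righter~robots could pile up on $u$ itself without becoming \waitingWalker~robots. Throughout, I would lean on the standing convention that Rules \TerminationOne~and \TerminationTwo~are not triggered accidentally, so that configurations in which $\mathcal{R}-1$ or $\mathcal{R}$ robots coincide on $u$ are treated as having already solved the relevant variant rather than as counterexamples. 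This combination of the single-recurrent-edge property of \COT~with the conversion rules \rTwoFour~and \rOneSeven~is exactly where the argument is most intricate.
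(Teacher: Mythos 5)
Your overall framework is the same as the paper's (proof by contradiction, restriction of the reachable states to \righter, \awareSearcher, \waitingWalker~and \minWaitingWalker, Rule \rTwoFour~as the unique entry point for \awareSearcher~robots, spreading via \rOneSeven, absorption on $u$ via \rTwoThree, and repeated use of Lemma~\ref{search}), but the proposal stops exactly where the paper's proof does its real work. Lemma~\ref{search} only counts robots whose state is in $\{$\potentialMin, \dumbSearcher, \awareSearcher$\}$, so before you may ``iterate'' it you must prove two things that you only announce: first, that \rTwoFour~fires at all (the paper shows that if no \righter~ever reaches $u$ then the unique eventual missing edge forces $\mathcal{R}-1$ \righter~robots onto one node and \rOneSix~fires, contradicting the absence of a \potentialMin; and that if no right edge ever appears at $u$ while a \righter~sits there, all \righter~robots pile onto $u$ and \TerminationOne~fires); second, and mainly, that the robots which remain \righter~forever all get converted, which the paper establishes through the invariant $(*)$ (as long as a \righter~exists, some \awareSearcher~considering the right direction has no \righter~between $u$ and itself) together with the three-way case analysis (no eventual missing edge, missing edge adjacent to $u$, missing edge elsewhere). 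In your write-up these steps are flagged as ``the main obstacle'' and left as ``I must show either \ldots or else \ldots'', so the central argument is sketched but not supplied.

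Moreover, the argument you give for the second clause of the statement (at most one \righter~among the two robots outside the \towerElection) does not work as stated: bounding how many \righter~robots can be prevented \emph{forever} from being converted says nothing about how many are still \righter~at the specific time $t_{tower}$ when the tower forms---a robot that will be converted later can perfectly well still be a \righter~at that instant. The paper's argument is a counting argument: since, in the absence of a \potentialMin, a \towerElection~can only arise after \rTwoFour~has fired, the invariant $(*)$ guarantees that as long as some \righter~exists there is also an \awareSearcher, which is necessarily outside the tower (tower members are \waitingWalker~or \minWaitingWalker); with $\mathcal{R}-2$ robots in the tower and only two robots outside, having both outside robots be \righter~would require a third robot outside, a contradiction. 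You need this (or some other argument pinned to time $t_{tower}$), not the eventual-conversion bound you propose.
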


\begin{proof}
 Assume, by contradiction, that there is no \towerElection~in the execution.
 By the rules of \Gathering~and knowing that initially all the robots are 
 \righter, this implies that there are only \righter, \potentialMin,
 \dumbSearcher, \awareSearcher, \waitingWalker~and \minWaitingWalker~robots in
 the execution.
 
 Assume that there is no \potentialMin~in the execution. If there is no 
 \potentialMin~in the execution, it cannot exist \dumbSearcher~in 
 the execution. Indeed, the only way for a robot to become \dumbSearcher~is 
 to execute Rule \rOneSix. However, when this rule is executed, a robot 
 becomes \potentialMin. Therefore, there are in the execution only \righter, 
 \awareSearcher, \waitingWalker~and \minWaitingWalker~robots. 

 Before time $t_{min}$, by the rules of \Gathering, there are only \righter~in
 the execution. Indeed, by Corollary~\ref{uniqueLeader}, only $r_{min}$ can be 
 \minWaitingWalker~and it becomes \minWaitingWalker~at time $t_{min}$. Moreover,
 the only way for a robot to become \waitingWalker~is to execute Rule 
 \rTwoThree. In the case where there is no \potentialMin~in the execution, only
 an \awareSearcher~located with $r_{min}$, as a \minWaitingWalker, can execute 
 this rule. Besides, the only ways for a robot to become an \awareSearcher~is 
 either to be a \righter~and to be located with an \awareSearcher~(refer to
 Rule \rOneSeven), or to be a \righter~and to be located with $r_{min}$, as a
 \minWaitingWalker, while an adjacent right edge to their location is present 
 (refer to Rule \rTwoFour). Since initially all the robots are \righter, 
 the first \awareSearcher~of the execution can be present only thanks to the
 execution of Rule \rTwoFour. 
 
 All this implies that, even after time $t_{min}$, as long as no \righter~robot
 is on node $u$ with $r_{min}$, as a \minWaitingWalker, while there is a present
 adjacent right edge to $u$, it cannot exist neither \awareSearcher~nor 
 \waitingWalker~in the execution: there is at most one \minWaitingWalker~and 
 there are at least $\mathcal{R} - 1$ \righter.
 Moreover, this implies that as long as the situation described has not
 happened, all the \righter~robots only execute Rule \rOneEight, hence
 always consider the right direction. 
 
 Consider the execution just after time $t_{min}$. In this context, necessarily,
 in finite time, there exists a \righter~robot $r$ that succeeds to reach $u$ 
 (while $r_{min}$ is \minWaitingWalker). Indeed, if this is not the case, this
 implies that there exists an eventual missing edge $e$. Since all the 
 \righter~robots always consider the right direction and since it can exist at
 most one eventual missing edge, this implies that $\mathcal{R} - 1$ 
 \righter~robots reach in finite time the same extremity of $e$. Thus, Rule
 \rOneSix~is executed, which leads to a contradiction with the fact that there
 is no \potentialMin~in the execution. 
 
 Similarly, necessarily, in finite time, there exists an adjacent right edge to
 $u$ while $r$ is on $u$. Indeed, if this is not the case, this implies that the
 adjacent right edge of $u$ is an eventual missing edge. Since all the 
 \righter~robots always consider the right direction and since it can exist at 
 most one eventual missing edge, in finite time all the \righter~succeed to be 
 located on node $u$. This implies that Rule \TerminationOne~is 
 executed, which leads to a contradiction.
 
 Therefore there exists a time $t'$ at which $r$ executes Rule \rTwoFour. At 
 this time $r$ becomes an \awareSearcher~robot and considers the right 
 direction. We then consider the execution from time $t'$.
 
 $(*)$ From this time $t'$, as long as there exists \righter~in the execution, 
 it always exists an \awareSearcher~robot $r'$ considering the right direction,
 such that there is no \righter~robots on $Seg(u, v)$, where $v$ is the node
 where $r'$ is currently located. This can be proved by analyzing the movements 
 of the different kinds of robots that we describe in $(i)-(vii)$.
 
 $(i)$ By Lemma~\ref{waitingCautiousWalkOnSameNode}, all the 
 \minWaitingWalker~and \waitingWalker~(if any) are on a same node (which is the 
 node $u$) and do not move.
 
 $(ii)$ If an \awareSearcher~is located on node $u$, therefore if it is located
 with $r_{min}$, as a \minWaitingWalker, it executes Rule \rTwoThree~and 
 becomes a \waitingWalker~robot.
 
 $(iii)$ If an \awareSearcher~is on a node different from the node $u$, the only
 rule it can execute is Rule \rOneEleven, in which the function
 \textsc{Search} is called. While executing this function, an isolated 
 \awareSearcher~considers the direction it considers during its last Move phase.
 By Lemma~\ref{lemma4Bis}, this direction cannot be $\bot$.
 
 $(iv)$ If a \righter~robot is located only with other \righter~robots or if it
 is located on node $u$, therefore if it is located with $r_{min}$, as a 
 \minWaitingWalker, such that there is no adjacent right edge to $u$, it 
 executes Rule \rOneEight, hence it stays a \righter~and considers the right
 direction. 
 
 $(v)$ If a \righter~robot is with $r_{min}$, as a \minWaitingWalker, such that
 there is an adjacent right edge to $u$, then it executes Rule \rTwoFour~and
 hence becomes an \awareSearcher.
 
 $(vi)$ If a \righter~robot is on a node different from node $u$ with an 
 \awareSearcher, it executes Rule \rOneSeven~and therefore becomes
 \awareSearcher~and executes the function \textsc{Search}. 
 
 $(vii)$ Note that by the movements described in $(i)$ to $(vi)$, if a robot
 executes the function \textsc{Search}, then all the robots that are on the same
 node as it also execute this function. While executing the function 
 \textsc{Search}, if multiple robots are on the same node, one considers the 
 left direction, while the others consider the right direction.
  
 Applying these movements on $r'$ and recursively on the robots that $r'$ meet
 that consider the right direction after their meeting with $r'$ and so on, we
 succeed to prove the property $(*)$. 
 
 $(**)$ Note that if there exists a time at which there is no more 
 \righter~in the execution, then by applying $(ii)$, 
 Lemma~\ref{waitingCautiousWalkOnSameNode} and Lemma~\ref{search} multiple times
 we succeed to prove that a \towerElection~is formed. Therefore at least one 
 robot is always a \righter~during the whole execution. Call $\mathcal{S}_{r}$ 
 the set of \righter~robots that stay \righter~during the whole execution. 
 
%
 
 Let us consider the following cases.
 
 \begin{description}
  \item [Case 1:] \textbf{There does not exist an eventual missing edge.}
    
  None of the robots of $\mathcal{S}_{r}$ can be located on the same node as an 
  \awareSearcher, otherwise, by $(vi)$, they become \awareSearcher. Therefore, 
  all the robots of $\mathcal{S}_{r}$ that are not on node $u$ can only consider
  the right direction (refer to $(iv)$). Since all the edges are infinitely 
  often present, for each robot $r"$ of $\mathcal{S}_{r}$, it exists a time at 
  which $r"$ is on node $u$. Moreover, once on node $u$, as long as there is no
  adjacent right edge to $u$, $r"$ considers the right direction (refer to 
  $(iv)$), and therefore stays on node $u$. Thus, since all the edges are
  infinitely often present, for each robot $r"$ of $\mathcal{S}_{r}$, it exists 
  a time at which $r"$ is on node $u$ such that an adjacent right edge to $u$ is
  present. Therefore, by $(v)$, in finite time, all the robots of 
  $\mathcal{S}_{r}$ are \awareSearcher~robots. Hence, by $(**)$, the lemma is
  proved.
  
  \item [Case 2:] \textbf{There exists an eventual missing edge.}
  
  Call $x$ the node such that its adjacent right edge is the eventual missing
  edge. Consider the execution after time $t'$ such that the eventual missing 
  edge is missing forever. 
  
  \begin{description}
   \item [Case 2.1:] \textbf{$\mathbf{x = u}$.}
   
   None of the robots of $\mathcal{S}_{r}$ can be located on the same node as an 
   \awareSearcher, otherwise, by $(vi)$, they become \awareSearcher. Therefore, 
   all the robots of $\mathcal{S}_{r}$ that are not on node $u$ can only consider
   the right direction (refer to $(iv)$). Since it can exist at most one 
   eventual missing edge, in finite time the robots of $\mathcal{S}_{r}$ succeed
   to reach node $u$, and stay on node $u$ (refer to $(iv)$). Necessarily, 
   $|\mathcal{S}_{r}| < \mathcal{R} - 2$, otherwise Rule \TerminationTwo~is 
   executed. At the time at which all the robots of $\mathcal{S}_{r}$ are on 
   node $u$, by $(*)$, we know that at least one \awareSearcher, on a node $v$, 
   considers the right direction. By $(vi)$, none of the \righter~of 
   $\mathcal{S}_{r}$ can be located on node $v$. Therefore, this
   \awareSearcher~is not on node $u$. By the movements described in $(iii)$ and
   $(vii)$, we know that in finite time an \awareSearcher~succeeds to reach node
   $u$. Then all the \righter~of $\mathcal{S}_{r}$ become \awareSearcher, hence 
   by $(**)$, the lemma is proved.
   
   \item [Case 2.2:] \textbf{$\mathbf{x \neq u}$.}
   
   None of the robots of $\mathcal{S}_{r}$ can be located on the same node as an 
   \awareSearcher, otherwise, by $(vi)$, they become \awareSearcher. Therefore,
   none of the robots of $\mathcal{S}_{r}$ can be located on 
   $Seg(u, x) \cup \{x\}$, otherwise, in finite time, by $(iv)$ they are located 
   on node $x$. However, once all the robots of $\mathcal{S}_{r}$ are on node 
   $x$, by $(*)$, and the movements described in $(iii)$ and $(vii)$ an
   \awareSearcher~succeeds to be located on node $u$ in finite time,
   which leads to a contradiction. Therefore all the robots of $\mathcal{S}_{r}$
   are on nodes in $Seg(x, u)$. Since it can exist only one eventual missing
   edge, and since this edge is the adjacent right edge of $x$, for each robot 
   $r"$ of $\mathcal{S}_{r}$, by $(iv)$, it exists a time at which $r"$ is on
   node $u$ while there is a present adjacent right edge to $u$. Therefore, by 
   $(v)$, in finite time all the robots of $\mathcal{S}_{r}$ are 
   \awareSearcher~robots. Hence, by $(**)$, the lemma is proved.
  \end{description}
 \end{description}

 We just proved that it exists a time $t_{tower}$ at which a \towerElection~is
 present in the execution. We now prove that, at time $t_{tower}$, among the
 robots not involved in the \towerElection, there is at most one \righter. By
 Lemma~\ref{uniqueTowerElection}, there is only one \towerElection~in the whole
 execution. Necessarily, as explained above when there is no \potentialMin~in
 the execution, in order to have a \towerElection, a \righter~must become an
 \awareSearcher~while executing Rule \rTwoFour. The property $(*)$ is then 
 true. By definition of a \towerElection, only two robots are not involved in
 the \towerElection. Assume, by contradiction, that there are two \righter~not
 involved in the \towerElection~at time $t_{tower}$. By $(*)$, this implies that
 there is an \awareSearcher~at time $t_{tower}$. However, by definition, a 
 \towerElection~is composed of one \minWaitingWalker~and $\mathcal{R} - 3$
 \waitingWalker, therefore, since there are $\mathcal{R}$ robots in the system 
 and among them, at time $t_{tower}$, two are \righter~and one is an
 \awareSearcher, there is a contradiction with the fact that there is a 
 \towerElection~at time $t_{tower}$.
\end{proof}

By Lemmas~\ref{PotentialMinAndTowerElection} and 
\ref{NoPotentialMinAndTowerElection}, we can deduce the following corollary 
which proves the correctness of Phase \MinWaitToBeKnown.

\begin{corollary} \label{towerElection}
 There exists a time $t$ in the execution at which a \towerElection~is 
 present and among the robots not involved in the \towerElection~there is at 
 most one \righter~robot at time $t$.
\end{corollary}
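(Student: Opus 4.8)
The plan is to observe that this corollary is an immediate consequence of a dichotomy on the presence of a \potentialMin~robot in the execution, combined with the two preceding lemmas which were tailored precisely to cover the two sides of this dichotomy. For any fixed execution of \Gathering~in a \COT~ring, the statement ``there exists a time at which a \potentialMin~is present'' is either true or false, so these two cases are exhaustive and mutually exclusive, and a simple case split suffices.

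First I would fix an arbitrary execution and distinguish according to whether a \potentialMin~robot ever appears. In the case where some robot is a \potentialMin~at some time, I would apply Lemma~\ref{PotentialMinAndTowerElection} directly: it guarantees a time $t$ at which a \towerElection~is present and among the two robots not involved in it at most one is a \righter. In the complementary case, where no \potentialMin~ever appears, I would apply Lemma~\ref{NoPotentialMinAndTowerElection}, which yields exactly the same conclusion under that hypothesis. In both branches the desired time $t$, together with the bound of at most one \righter~among the uninvolved robots, follows verbatim from the conclusion of the invoked lemma.

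There is no genuine obstacle here: the entire difficulty has already been discharged inside Lemmas~\ref{PotentialMinAndTowerElection} and \ref{NoPotentialMinAndTowerElection}, each of which establishes the eventual formation of a \towerElection~and the \righter~bound in its respective regime. The only content the corollary adds is the remark that these two hypotheses partition all possible executions, so that one of the two lemmas always applies. Consequently the proof reduces to a two-line case analysis, and the correctness of Phase~\MinWaitToBeKnown~--- namely that a \towerElection~is eventually formed --- is thereby established.
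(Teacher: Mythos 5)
Your proposal is correct and matches the paper's own argument: the paper derives Corollary~\ref{towerElection} directly from Lemmas~\ref{PotentialMinAndTowerElection} and \ref{NoPotentialMinAndTowerElection}, exactly via the case split on whether a \potentialMin~ever appears in the execution. Nothing is missing; the two lemmas' hypotheses are indeed exhaustive and each yields the stated conclusion verbatim.
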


%
%
%
%
%
%
%
%
%
%
%
%

\subsubsection{Proofs of Correctness of Phases \Walk~and \WaitTermination}

The combination of Phases \Walk~and \WaitTermination~of \Gathering~permit to 
solve \GEW~in \COT~rings. Since \GEW~is divided into a safety and a liveness
property, to prove the correctness of Phases \Walk~and \WaitTermination, we
have to prove each of these two properties. We recall that, to satisfy the safety
property of the gathering problem, all the robots that terminate their execution have to do so on the same
node, and to satisfy the liveness property of \GEW, at least $\mathcal{R} - 1$ 
robots must terminate their execution in finite time. In this subsection, we, 
first, prove that \Gathering~solves the safety of the gathering problem in \COT~rings, and then, we prove
that \Gathering~solves the liveness of \GEW~in \COT~rings. We prove this 
respectively in Lemmas~\ref{safety} and \ref{theoremC5}. To prove these two 
lemmas, we need to prove some other lemmas.

By Corollary~\ref{towerElection}, we know that, in finite time, a 
\towerElection~is formed. By Lemma~\ref{uniqueTowerElection}, there is at most
one \towerElection~in the execution. Therefore, there is one and only one 
\towerElection~in the execution. Call $T$ such a \towerElection. Let $t_{tower}$
be the time at which $T$ is formed. By definition, a \towerElection~is composed 
of $\mathcal{R} - 2$ robots. Call $r_{1}$ and $r_{2}$ the two robots that are
not involved in $T$. 

In the previous subsection, we prove that, at time $t_{tower}$, at most one of the
robots among $r_{1}$ and $r_{2}$ is a \righter. In the following lemma, we go
farther and give the set of possible values for the variable $state$ at time
$t_{tower}$ of each of these robots. 

\begin{lemma} \label{kindR1AndR2}
 At time $t_{tower}$, 
 $state_{r_{1}} \in \{$\righter, \potentialMin, \dumbSearcher, \awareSearcher$\}$
 and $state_{r_{2}} \in \{$\dumbSearcher, \awareSearcher$\}$.
\end{lemma}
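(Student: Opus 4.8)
The plan is to pin down the states of $r_1$ and $r_2$ at time $t_{tower}$ by first narrowing the universe of possible states and then invoking the exclusion results already established for Phases \AmITheMin~and \MinWaitToBeKnown. First I would observe that, since $t_{tower}$ is the first time a \towerElection~is present, no robot has begun Phase \Walk~yet; as every robot starts as a \righter, the rules of \Gathering~allow only the six states \righter, \potentialMin, \dumbSearcher, \awareSearcher, \waitingWalker~and \minWaitingWalker~at that time. I would then discard the last two for $r_1$ and $r_2$: by Corollary~\ref{uniqueLeader} the only robot that can be a \minWaitingWalker~is $r_{min}$, which is involved in $T$, so neither $r_1$ nor $r_2$ is a \minWaitingWalker; and by Lemma~\ref{waitingCautiousWalkOnSameNode} every \waitingWalker~is located on node $u$, so a \waitingWalker~outside $T$ would sit on $u$ as well, raising the number of robots on $u$ to at least $\mathcal{R}-1$ and triggering Rule \TerminationTwo~via predicate \GEWBold, contrary to our standing assumption that termination rules do not fire accidentally. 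Hence $state_{r_1}, state_{r_2} \in \{$\righter, \potentialMin, \dumbSearcher, \awareSearcher$\}$.

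The core step is to show that at most one of $r_1, r_2$ can be a \righter~or a \potentialMin; once that robot (if any) is named $r_1$, this immediately forces $state_{r_2} \in \{$\dumbSearcher, \awareSearcher$\}$ and leaves the weaker, all-inclusive constraint on $r_1$. Since $r_{min}$ belongs to $T$ as a \minWaitingWalker, it is $min$ at $t_{tower}$, so the configuration lies in the suffix of the execution in which $r_{min}$ is $min$; Lemma~\ref{OnlyOnePLeaderOrOneMovingRight} then forbids a \potentialMin~and a \righter~to coexist there. I would split on whether a \potentialMin~is present at $t_{tower}$: if it is not, any distinguished robot among $r_1, r_2$ is a \righter, and Corollary~\ref{towerElection} already bounds the number of \righter~outside $T$ by one; if it is, then there is no \righter~at $t_{tower}$ and Lemma~\ref{OnePotentialMin} bounds the number of \potentialMin~by one. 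In both cases, at most one of the two robots is a \righter~or a \potentialMin, so we are free to designate it as $r_1$ and obtain the claim.

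The main obstacle I anticipate is bookkeeping rather than a deep argument: the statement is asymmetric in $r_1$ and $r_2$, so I must carefully justify the freedom to assign the distinguished robot to $r_1$, and I must make sure that the \towerElection~really contains \emph{every} \minWaitingWalker~and \waitingWalker~of the system (exactly $\mathcal{R}-3$ of the latter), otherwise a stray \waitingWalker~could leak into $\{r_1, r_2\}$ and break the state characterization. Both points reduce to the same observation, namely that an extra co-located robot on $u$ would satisfy \GEWBold~and fire \TerminationTwo; once this is invoked, the remainder is a purely combinatorial combination of Corollary~\ref{towerElection}, Lemma~\ref{OnlyOnePLeaderOrOneMovingRight} and Lemma~\ref{OnePotentialMin}, with no computation to grind through.
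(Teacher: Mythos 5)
Your proposal is correct and follows essentially the same route as the paper's proof: narrow the possible states to the six pre-\Walk{} ones, exclude \minWaitingWalker{} via Corollary~\ref{uniqueLeader} and \waitingWalker{} via Lemma~\ref{waitingCautiousWalkOnSameNode} together with the non-firing of Rule \TerminationTwo, then combine Corollary~\ref{towerElection}, Lemma~\ref{OnlyOnePLeaderOrOneMovingRight} and Lemma~\ref{OnePotentialMin} to show at most one of $r_{1},r_{2}$ is a \righter{} or \potentialMin, naming that robot $r_{1}$ without loss of generality. Your explicit handling of the asymmetric labelling and of the exact count of \waitingWalker{} robots matches what the paper does implicitly, so no gap remains.
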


\begin{proof}
 Until the Look phase of time $t_{tower}$, by the rules of 
 \Gathering~and knowing that all the robots are initially \righter, there are 
 only \righter, \potentialMin, \dumbSearcher, \awareSearcher, \waitingWalker~and
 \minWaitingWalker~robots in the execution.
 
 By Corollary~\ref{uniqueLeader}, only $r_{min}$ can be $min$, therefore
 only $r_{min}$ can be \minWaitingWalker. By definition of a \towerElection, a 
 \minWaitingWalker~is involved in $T$. Since $r_{1}$ and $r_{2}$ are not 
 involved in $T$, this implies that neither $r_{1}$ nor $r_{2}$ can be
 \minWaitingWalker~at time $t_{tower}$.

 By definition of a \towerElection, at time $t_{tower}$, the $\mathcal{R} - 2$
 robots involved in $T$ are on a same node. This node is the node $u$. 
 Therefore, at time $t_{tower}$ neither $r_{1}$ nor $r_{2}$ can be located on
 node $u$, otherwise Rule \TerminationTwo~is executed. By 
 Lemma~\ref{waitingCautiousWalkOnSameNode}, this implies that neither $r_{1}$ 
 nor $r_{2}$ can be a \waitingWalker~at time $t_{tower}$. 
 
 By Corollary~\ref{towerElection}, at time $t_{tower}$, only one robot among 
 $r_{1}$ and $r_{2}$ can be a \righter~robot. Assume without lost of generality 
 that $r_{1}$ is a \righter~at time $t_{tower}$. In this case by 
 Corollary~\ref{towerElection}, $r_{2}$ cannot be a \righter~at time
 $t_{tower}$. Moreover, in this case, by 
 Lemma~\ref{OnlyOnePLeaderOrOneMovingRight}, $r_{2}$ cannot be a 
 \potentialMin~at time $t_{tower}$.
 
 Now assume without lost of generality that $r_{1}$ is a \potentialMin~robot at
 time $t_{tower}$. By Lemmas~\ref{OnePotentialMin} and 
 \ref{OnlyOnePLeaderOrOneMovingRight}, $r_{2}$ can neither be a 
 \potentialMin~nor a \righter~at time $t_{tower}$.
 
 This prove the lemma.
\end{proof}

In the following lemma, we prove a property on Rules \TerminationOne~and 
\TerminationTwo~that helps us to prove that \Gathering~solves \GEW~in \COT~rings.

\begin{lemma} \label{term}
 If a robot $r$, on a node $x$, at a time $t$, executes Rule 
 \TerminationOne~(resp. \TerminationTwo), then there are $\mathcal{R}$ (resp.
 $\mathcal{R} - 1$) robots on node $x$ at time $t$ and they all execute Rule
 \TerminationOne~(resp. \TerminationTwo) at time $t$ (if they are not already 
 terminated).
\end{lemma}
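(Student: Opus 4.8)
The plan is to exploit two structural facts: the local function $NodeMate()$ is consistent among robots sharing a node (each robot sees exactly the others located with it, so a node carrying $k$ robots gives every occupant $|NodeMate()| = k-1$), and the termination rules are tested in the fixed order \TerminationOne~then \TerminationTwo, so only the first satisfied guard fires. I would split the argument according to which termination rule $r$ executes.

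First I would treat the case of \TerminationOne. Its guard is the predicate $\GE()$, i.e. $|NodeMate()| = \mathcal{R} - 1$. If $r$ executes it at time $t$ on node $x$, then $r$ sees $\mathcal{R} - 1$ co-located robots, so together with $r$ exactly $\mathcal{R}$ robots occupy $x$; since the system contains only $\mathcal{R}$ robots, every robot is on $x$ at time $t$. Consequently any robot $r''$ on $x$ also satisfies $|NodeMate_{r''}()| = \mathcal{R} - 1$, its guard $\GE()$ holds, and since \TerminationOne~is the first rule of \Gathering, $r''$ executes \TerminationOne~at time $t$ unless it has already terminated.

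Next I would treat the case of \TerminationTwo. Because $r$ reaches \TerminationTwo~even though \TerminationOne~precedes it, the guard $\GE()$ must be false for $r$ while $\GEW()$ is true: $|NodeMate()| = \mathcal{R} - 2$ and some robot in $\{r\} \cup NodeMate()$ has state \minWaitingWalker~or \minTailWalker. Hence exactly $\mathcal{R} - 1$ robots occupy $x$ at time $t$, one of them being a $min$ robot. For any other robot $r''$ on $x$ I would observe that $|NodeMate_{r''}()| = \mathcal{R} - 2$, which differs from $\mathcal{R} - 1$ because $\mathcal{R} \geq 4$, so $\GE()$ is false for $r''$; moreover the $min$ robot still lies on $x$, hence in $\{r''\} \cup NodeMate_{r''}()$, so $\GEW()$ is true for $r''$. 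By the rule ordering, $r''$ therefore executes \TerminationTwo~at time $t$ unless already terminated.

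The only delicate point is this use of the rule ordering in the second case: to conclude that the co-located robots fire \TerminationTwo~rather than \TerminationOne, one must verify that $\GE()$ is genuinely false for each of them, which is exactly why the count $|NodeMate_{r''}()| = \mathcal{R}-2$ and the inequality $\mathcal{R} \geq 4$ matter. Everything else reduces to the consistency of $NodeMate()$ among co-located robots and the fact that the $min$ witness required by $\GEW()$ is located on $x$ and hence shared by all robots there.
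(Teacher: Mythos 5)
Your proposal is correct and follows essentially the same argument as the paper: the guard of the fired termination rule fixes the number of robots on $x$, and since all robots on $x$ share the same view of their node, the same predicate holds for each co-located robot, which therefore executes the same rule at time $t$ unless already terminated. Your explicit check that \GE$()$ is false for the co-located robots in the \TerminationTwo~case (ruling out their firing the earlier rule) is a welcome bit of extra care that the paper's proof leaves implicit, but it is not a different route.
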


\begin{proof}
 If a robot $r$, on a node $x$, executes Rule \TerminationOne~(resp. 
 \TerminationTwo) at a time $t$, by the predicate \GE$()$ (resp.
 \GEW$()$), there are $\mathcal{R}$ (resp. $\mathcal{R} - 1$) robots 
 on $x$ at time $t$. Moreover, if the predicate \GE$()$ (resp. 
 \GEW$()$) is true for $r$ at time $t$, since the robots are 
 fully-synchronous, it is necessarily true for all the robots (not already 
 terminated) located on node $x$ at time $t$. This implies that all the robots
 (not already terminated) located on $x$ at time $t$, execute Rule
 \TerminationOne~(resp. \TerminationTwo) at time $t$. 
\end{proof}

Now we prove one of the two main lemmas of this subsection: we prove that 
\Gathering~solves the safety property of the gathering problem in \COT~rings.

\begin{lemma} \label{safety}
 \Gathering~solves the safety of the gathering problem in \COT~rings.
\end{lemma}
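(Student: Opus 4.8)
The plan is to show that every robot that ever executes a termination rule (Rule \TerminationOne~or Rule \TerminationTwo) does so on one and the same node, which is exactly the safety property. The whole argument pivots on Lemma~\ref{term}, which already pins down the configuration forced by each termination rule, together with Corollary~\ref{uniqueLeader} (only $r_{min}$ can ever be $min$). First I would let $t^{*}$ be the first time in the execution at which some robot terminates and let $x$ be the node on which this first termination occurs; the goal is then to prove that $x$ is the unique node on which a termination can ever happen.

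The easy case is when the first termination at time $t^{*}$ uses Rule \TerminationOne. By Lemma~\ref{term}, all $\mathcal{R}$ robots are on $x$ at time $t^{*}$ and they all execute Rule \TerminationOne~simultaneously, so every robot terminates at $t^{*}$ on the single node $x$ and safety holds trivially. I would also note that the two cases cannot be mixed at $t^{*}$: if some robot has \GE$()$ true then, all $\mathcal{R}$ robots being co-located, every co-located robot has \GE$()$ true as well and executes the higher-priority Rule \TerminationOne, so no robot fires Rule \TerminationTwo~at $t^{*}$.

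The delicate case is when the first termination uses Rule \TerminationTwo. By Lemma~\ref{term}, exactly $\mathcal{R}-1$ robots sit on $x$ at $t^{*}$ and all execute Rule \TerminationTwo; by the predicate \GEW$()$ one of them is a \minWaitingWalker~or \minTailWalker~robot, i.e. the $min$, which by Corollary~\ref{uniqueLeader} is $r_{min}$. Exactly one robot, call it $r'$, lies off $x$, and it remains to argue that $r'$, should it terminate, can only do so on $x$. The key point is that $r_{min}$ is now frozen (terminated) on $x$ and, by Corollary~\ref{uniqueLeader} and the closedness of the $min$ state (Lemma~\ref{LeaderTheWholeExecution}), no other node can ever host a $min$. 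Hence Rule \TerminationTwo~can never become true for $r'$ on any node other than $x$, since its guard \GEW$()$ requires a $min$ among the co-located robots. The only remaining possibility is Rule \TerminationOne, whose guard \GE$()$ requires $r'$ to read $|NodeMate()| = \mathcal{R}-1$; the only $\mathcal{R}-1$ robots available are precisely those already terminated on $x$ (multiplicity detection still counts them), which forces $r'$ onto $x$. Either way $r'$ terminates on $x$, so all terminating robots share the node $x$ and safety is established.

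I expect the main obstacle to be the careful bookkeeping of already-terminated robots: they remain physically present and are still counted by the local multiplicity detection, so I must verify that (i) they cannot be re-recruited into a fresh termination cluster on a different node, which follows from the uniqueness and closedness of the $min$ (Corollary~\ref{uniqueLeader} and Lemma~\ref{LeaderTheWholeExecution}), and (ii) when $r'$ reaches $x$ it indeed sees $|NodeMate()| = \mathcal{R}-1$ and therefore fires the higher-priority Rule \TerminationOne~rather than Rule \TerminationTwo. Both points are local consequences of the guards of the two termination rules and of the full synchrony of the robots, so beyond this bookkeeping the proof is essentially a short case analysis.
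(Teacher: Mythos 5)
Your proof is correct and follows essentially the same route as the paper: it hinges on Lemma~\ref{term}, splits on which termination rule fires, and shows the single leftover robot in the \TerminationTwo~case can only ever terminate on the node $x$ already hosting the terminated tower. The only cosmetic differences are that you anchor the argument at the first termination time (whereas the paper rules out a previously-terminated $r'$ by a counting contradiction via Lemma~\ref{term}) and that you invoke the uniqueness and closedness of the $min$ state to exclude \TerminationTwo~elsewhere, where the paper's simple robot-count argument already suffices.
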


\begin{proof}
 We want to prove that, while executing \Gathering, all robots that terminate
 their execution terminate it on the same node. While executing \Gathering, the 
 only way for a robot to terminate its execution is to execute either Rule
 \TerminationOne~or Rule \TerminationTwo. 
 
 By Lemma~\ref{term}, if a robot $r$, on a node $x$, at a time $t$, executes
 Rule \TerminationOne, then there are $\mathcal{R}$ robots on node $x$ at 
 time $t$ and they all execute Rule \TerminationOne~at time $t$ (if they are
 not already terminated). Therefore, in the case where $r$ executes Rule 
 \TerminationOne~at time $t$, all the robots of the system are terminated on $x$
 at time $t$, hence the lemma is proved in this case.
 
 By Lemma~\ref{term}, if a robot $r$, on a node $x$, at a time $t$, executes
 Rule \TerminationTwo, then there are $\mathcal{R} - 1$ robots on node $x$ 
 at time $t$ and they all execute Rule \TerminationTwo~at time $t$ (if they
 are not already terminated). Therefore, in the case where $r$ executes Rule 
 \TerminationTwo~at time $t$, $\mathcal{R} - 1$ robots of the system are 
 terminated on $x$ at time $t$. Call $r'$ the robot that is not on the node $x$
 at time $t$. Let $y$ ($y \neq x$) be the node where $r$ is located at time $t$.
 To prove the lemma, it stays to prove that $r'$ is not terminated at time $t$,
 and that after time $t$, $r'$ either terminates its execution on node $x$ or
 never terminates its execution.
 
 Assume, by contradiction, that at time $t$, $r'$ is terminated. This implies 
 that there exists a time $t' \leq t$ at which $r'$ executes either Rule 
 \TerminationOne~or Rule \TerminationTwo. By Lemma~\ref{term}, this implies 
 that at least $\mathcal{R} - 2$ other robots are terminated on node $y$ at time 
 $t'$. Therefore, there is a contradiction with the fact that $r$ executes
 Rule \TerminationTwo~at time $t$ on node $x$. Indeed, to execute Rule 
 \TerminationTwo~at time $t$ on node $x$, $\mathcal{R} - 1$ robots must be 
 located on node $x$ at time $t$, since $\mathcal{R} \geq 4$, it is not possible
 to have $\mathcal{R} - 1$ robots on node $x$ at time $t$. 

 Moreover, after time $t$, by Lemma~\ref{term}, $r'$ can terminate its execution
 only on node $x$ (since it is the only node where $\mathcal{R} - 1$ robots are 
 located). Therefore, the lemma is proved. 
\end{proof}

The following lemma is an extension of Lemma~\ref{lemma4Bis}. While 
Lemma~\ref{lemma4Bis} is true when the robots are either executing Phase 
\AmITheMin~or Phase \MinWaitToBeKnown, the following lemma is true whatever 
the phase of the algorithm the robots are executing.

\begin{lemma} \label{NotBotDirection3}
 If, at time $t$, an isolated robot $r$ is such that 
 $state_{r} \in \{dumb\-Searcher, aware\-Searcher\}$, then, during the Move 
 phase of time $t - 1$, it does not consider the $\bot$ direction. 
\end{lemma}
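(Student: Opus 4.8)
The plan is to mirror the case analyses of Lemmas~\ref{NotBotDirection} and \ref{lemma4Bis}, proceeding by a backward recurrence on $t$ and examining which rule $r$ executes during the Compute phase of time $t-1$ in order to hold a state in $\{$\dumbSearcher, \awareSearcher$\}$ at time $t$. Since we no longer assume the absence of a \towerElection, walker states may now be present in the execution; consequently the only genuinely new feature relative to Lemma~\ref{lemma4Bis} is that $r$ may reach an \awareSearcher~state through Rules \rOneTwo~or \rOneThree, whose guards require a co-located \headWalker. The isolation hypothesis at time $t$ is exactly what will let me discard the single problematic rule.

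First I would dispose of the rules already covered in Lemma~\ref{lemma4Bis}. Inspecting \Gathering, the only rules whose action can leave $r$ with a state in $\{$\dumbSearcher, \awareSearcher$\}$ are \rTwoFour, \rOneTwo, \rOneThree, \rOneFive, \rOneSix, \rOneSeven, \rOneNine, \rOneTen~and \rOneEleven. Every one of these except \rOneThree~and \rOneEleven~sets $dir_r$ to a non-$\bot$ value: the rules calling \textsc{BecomeAwareSearcher} fix the $right$ direction (and a subsequent call to \textsc{Search} only ever yields $right$ or $left$), while \rOneSix~calls \textsc{InitiateSearch}, which preserves the $right$ direction $r$ held as a \righter. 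In particular the new Rule \rOneTwo~calls \textsc{BecomeAwareSearcher} and hence sets $dir_r := right$, so it raises no difficulty.

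For Rule \rOneEleven~I would reuse the recurrence of Lemma~\ref{NotBotDirection}: when $r$ is not isolated at time $t-1$, \textsc{Search} selects $right$ or $left$; when $r$ is isolated at time $t-1$, \textsc{Search} leaves $dir_r$ unchanged, so I would trace $r$ backwards through the successive rounds in which it remains an isolated searcher. Since $r$ is already a searcher at time $t-1$ and all robots are initially \righter, this backward trace reaches the first round at which $r$ became a searcher; that transition is governed by one of the non-\rOneEleven~rules above and hence fixes a non-$\bot$ direction, which is then preserved throughout, so the direction of $r$ during the Move phase of time $t-1$ is not $\bot$.

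The main obstacle is Rule \rOneThree, which calls \textsc{BecomeAwareSearcher} followed by \textsc{StopMoving} and thus sets $dir_r := \bot$. Here I would show that this case is incompatible with $r$ being isolated at time $t$. Executing \rOneThree~requires $r$ to be co-located at time $t-1$ with a \headWalker~$r'$; being a searcher, $r$ is not one of the robots that initiated the walk, so its identifier lies in the $NodeMateIds()$ of $r'$ but not in $walkerMate_{r'}$, whence $walkerMate_{r'} \neq NodeMateIds()$. Then $r'$ cannot move during this round: it cannot execute \TerminationOne~or \TerminationTwo, since otherwise $\mathcal{R}$ (resp. $\mathcal{R}-1$) robots would be co-located and $r$ itself would terminate rather than execute \rOneThree; and, being a \headWalker, it therefore executes \rFourTwo, \rFourThree~or \rThreeOne, each of which leaves $r'$ with direction $\bot$ (for \rThreeOne~this is exactly the branch of \textsc{Walk} triggered by $walkerMate_{r'} \neq NodeMateIds()$). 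Hence $r$ (which stopped moving) and $r'$ are still co-located at time $t$, contradicting the isolation hypothesis and ruling out \rOneThree. Verifying this ``the head-walker stays put'' step is the delicate part, as it forces one to check every rule the co-located \headWalker~could fire.
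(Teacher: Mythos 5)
Your overall route is the same as the paper's: list every rule that can leave $r$ in a state of $\{$\dumbSearcher, \awareSearcher$\}$, check that each fixes a non-$\bot$ direction, and dispose of the one exception, Rule \rOneThree, by showing that the co-located \headWalker~cannot move, so $r$ would still have a node-mate at the next instant. Your justification that the \headWalker~stays put (it satisfies $id=idHeadWalker$ and $walkerMate\neq NodeMateIds()$, so the function \textsc{Walk} yields $\bot$, while Rules \rFourTwo~and \rFourThree~also yield $\bot$, and the termination rules would apply to $r$ as well) is correct and even somewhat more self-contained than the paper's, which instead observes that a \headWalker~only considers $\bot$ or $right$ and that no right edge is present at that round (otherwise $r$ would have fired \rOneTwo, which precedes \rOneThree).

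There is, however, a gap in your treatment of Rule \rOneEleven. You trace $r$ backwards through the rounds where it is an isolated searcher executing \rOneEleven~and assert that the terminal round of this trace ``fixes a non-$\bot$ direction'' because it is governed by one of the non-\rOneEleven~rules. But that terminal round could itself be an execution of Rule \rOneThree~(its guard also admits \dumbSearcher~and \awareSearcher, and the trace need not stop at the round $r$ first became a searcher --- it stops at the last round where $r$ is not an isolated searcher, which may be a round spent with a \headWalker), and \rOneThree~sets the direction to $\bot$, which would then be propagated unchanged by the subsequent isolated \textsc{Search} calls up to time $t-1$. Your exclusion of \rOneThree~is carried out only when it is executed at time $t-1$, using isolation at time $t$; it says nothing about an \rOneThree~execution at an earlier round $s<t-1$. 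The fix is to re-apply your own co-location argument at that round $s$: after \rOneThree, both $r$ and the \headWalker~remain on their node, so $r$ is not isolated at $s+1$, contradicting the fact that $s+1$ lies in the isolated stretch of the backward trace (or equals $t$). Equivalently, set the proof up as an induction on $t$ with the lemma itself as induction hypothesis, so that every possible occurrence of \rOneThree~along the chain is discharged by the ``not isolated at the next instant'' contradiction rather than by asserting a non-$\bot$ direction; this is in effect how the paper proceeds through Lemmas~\ref{NotBotDirection}, \ref{lemma4Bis} and \ref{NotBotDirection3}.
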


\begin{proof}
 By the rules of \Gathering, \minWaitingWalker, \waitingWalker, \minTailWalker,
 \tailWalker, \headWalker~and \leftWalker~cannot become \dumbSearcher~or 
 \awareSearcher.
 
 Consider an isolated robot $r$ such that, at a time $t$, 
 $state_{r} \in \{dumb\-Searcher, aware\-Searcher\}$. 
 
 Consider then the two following cases.
 
 \begin{description}
  \item [Case 1:] \textbf{At time $\mathbf{t - 1}$, $\mathbf{r}$ is neither a 
  \textit{dumb\-Searcher} nor an \textit{aware\-Searcher}.}
 
  Whatever the state of $r$ at time $t - 1$ (\righter~or
  \potentialMin), to have its variable state at time $t$ equals either to 
  \dumbSearcher~or to \awareSearcher, $r$ executes at time $t - 1$ either
  Rule \rTwoFour, \rOneTwo, \rOneThree, \rOneFive, \rOneSix~or \rOneSeven.
  
  When a robot executes Rule \rOneTwo, it calls the function 
  \textsc{BecomeAwareSearcher}. When a robot executes the function
  \textsc{BecomeAwareSearcher}, it sets its direction to the $right$ direction, 
  therefore the lemma is true in this case. 
 
  A robot executes Rule \rOneThree~only if it is located with a 
  \headWalker~on a node $x$. Necessarily there is no present adjacent right edge
  to $x$ at time $t - 1$, otherwise the robot would have executed Rule
  \rOneTwo. By the rules of \Gathering, a \headWalker~only considers the $\bot$
  direction or the right direction. Indeed, a \headWalker~can only execute
  Rules \rFourTwo, \rFourThree~and \rThreeOne. While executing Rule 
  \rFourTwo, a \headWalker~becomes a \leftWalker~and considers the $\bot$ 
  direction. While executing Rule \rFourThree, a \headWalker~considers the 
  $\bot$ direction. Finally, while executing Rule \rThreeOne, a 
  \headWalker~considers either the right direction or the $\bot$ direction. 
  Therefore, even if, after the execution of Rule \rOneThree, $r$ considers
  the $\bot$ direction, it is not isolated at time $t$, hence the lemma is not
  false in this case. 
 
  Then, we can use the arguments of the proof of Lemma~\ref{lemma4Bis} (in 
  the case where the robot $r$ is a \dumbSearcher~or an \awareSearcher~at time 
  $t$) to prove that the current lemma is true for the remaining cases. Indeed,
  even if in Lemma~\ref{lemma4Bis} the context is such that there is no 
  \towerElection~in the execution, the arguments used in its proof are still 
  true in the context of the current lemma.
  
  \item [Case 2:] \textbf{At time $\mathbf{t - 1}$, $\mathbf{r}$ is a 
  \textit{dumb\-Searcher} or an \textit{aware\-Searcher}.}
  
  Whatever the state of $r$ at time $t - 1$ (\dumbSearcher~or \awareSearcher),
  to have its variable state at time $t$ equals either to \dumbSearcher~or to
  \awareSearcher, $r$ executes at time $t - 1$ either Rule \rOneTwo, 
  \rOneThree, \rOneNine, \rOneTen~or \rOneEleven. 
  
  We can use the arguments of Case 1 to prove that while executing
  Rule \rOneTwo~or \rOneThree, the lemma is proved.
  
  Then, similarly as for the Case 1, we can use the arguments of the proof of
  Lemma~\ref{lemma4Bis} (in the case where the robot $r$ is a \dumbSearcher~or 
  an \awareSearcher~at time $t$) to prove that the current lemma is true in 
  the remaining cases of Case 2.
 \end{description}
\end{proof}

Finally, we prove the other main lemma of this subsection: we prove that 
\Gathering~solves the liveness of \GEW~in \COT~rings. In the following proof, we
consider that there exists an eventual missing edge while \Gathering~is
executed, otherwise, during the execution, the ring is a \RE~ring (we
treat the case of \RE~rings in subsection~\ref{gracefull}).

\begin{lemma} \label{theoremC5}
 \Gathering~solves the liveness of \GEW~in \COT~rings.
\end{lemma}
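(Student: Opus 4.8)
The plan is to follow the unique \towerElection~$T$ supplied by Corollary~\ref{towerElection} and Lemma~\ref{uniqueTowerElection}: it is formed at some time $t_{tower}$ on a node I call $u$, its $\mathcal{R}-2$ members immediately execute Rule \rTwoOne~(\textsc{InitiateWalk}), and the two robots left outside, $r_{1}$ and $r_{2}$, have the restricted states fixed by Lemma~\ref{kindR1AndR2}. After \textsc{InitiateWalk} the robot of maximum identifier becomes a \headWalker~(\emph{Head}), $r_{min}$ becomes a \minTailWalker, and the remaining $\mathcal{R}-3$ robots become \tailWalker~(together, \emph{Tail}). By Lemma~\ref{LeaderTheWholeExecution}, $r_{min}$ stays a $min$ forever, hence any node holding $r_{min}$ together with $\mathcal{R}-2$ further robots satisfies predicate \GEW$()$. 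By Lemmas~\ref{term} and \ref{safety} it therefore suffices to exhibit a finite time at which $\mathcal{R}-1$ robots, necessarily including $r_{min}$, are co-located: Rule \TerminationTwo~then fires for all of them and the liveness of \GEW~follows. Throughout I would use that, the ring being \COT, there is exactly one eventual missing edge $e$.

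Next I would pin down the terminal shape of Phase \Walk. Reading function \textsc{Walk}, Head and Tail advance to the $right$ in strict alternation and are reunited on a common node between any two consecutive double-steps, so Head is never more than one node ahead of Tail. Once $e$ is missing forever this forces exactly two possibilities. Either (Case~I) the $\mathcal{R}-2$ walkers are eventually all together on a node $u$---because the walk completed its $n$ right-steps, or because they are jointly blocked with $e$ as the right edge of $u$; or (Case~II) Head crossed $e$ on the very step before it vanished and is isolated forever on the right endpoint $w$ of $e$, while the $\mathcal{R}-3$ robots of Tail, including $r_{min}$, remain blocked on $u$. In Case~I one extra robot on $u$ already yields $\mathcal{R}-1$ co-located robots; in Case~II two extra robots are needed, since Head can never recross $e$ (a \leftWalker~obtained through Rule \rFourTwo~would still be blocked by $e$).

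The heart of the proof is then the reachability of $u$ by the outside robots, for which I would reuse the single-eventual-missing-edge machinery of Lemma~\ref{search}. The robots $r_{1},r_{2}$ are, or become upon meeting a walker, \dumbSearcher/\awareSearcher~robots whose direction is never $\bot$ (Lemma~\ref{NotBotDirection3}) and which perform round trips through function \textsc{Search}. Crucially, whenever such a searcher reaches the node where Head waits it meets a \headWalker~and, by Rules \rOneTwo~or \rOneThree, becomes an \awareSearcher~directed to the $right$; this both prevents the two searchers from staying trapped on opposite extremities of $e$ and lets a searcher on the far side of $e$ loop around the ring along the edges present infinitely often and reach $u$. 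Combining this with the splitting behaviour of \textsc{Search} (when two searchers meet, one turns left) and with the reachability argument of Lemma~\ref{search}, I would show that at least one searcher reaches $u$ in Case~I and that both reach $u$ in Case~II. Each searcher arriving on $u$ while fewer than $\mathcal{R}-1$ robots are present becomes a \tailWalker~through Rule \rOneFour~(enabled by the \minTailWalker~$r_{min}$); as soon as the arrival that brings the count to $\mathcal{R}-1$ occurs, \GEW$()$ holds on $u$ and Rule \TerminationTwo~terminates all of them.

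I expect the main obstacle to be Case~II, where \emph{both} outside robots must be driven to $u$---one of them possibly having to traverse almost the whole ring after being redirected at $w$---while rigorously excluding the degenerate execution in which the two searchers oscillate indefinitely on the two sides of $e$ without ever meeting Head or reaching $u$. This is exactly the configuration that the $n$-step sweep of the walk is engineered to forbid, and turning that intuition into a proof will require a position-and-direction case analysis relative to $e$, $u$ and $w$ that mirrors the $Seg(\cdot,\cdot)$ reasoning of Lemma~\ref{search}. A secondary difficulty is verifying that the alternation of \textsc{Walk} is robust to the temporary (non-eventual) disappearances of edges allowed by \COT, so that the Case~I/Case~II dichotomy is genuinely exhaustive and the group is truly reunited (Case~I) or Head truly isolated (Case~II) rather than caught in an intermediate state.
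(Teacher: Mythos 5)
Your overall plan coincides with the paper's proof (towerElection from Corollary~\ref{towerElection}, \textsc{InitiateWalk}, the Head/Tail alternating walk, a dichotomy on where the eventual missing edge lies, and finally driving the two outside robots $r_1,r_2$ of Lemma~\ref{kindR1AndR2} to the stuck walkers so that Rule~\TerminationTwo~fires). However, there is a genuine gap, and it sits precisely in the case you treat as the easy one. In your Case~I (all $\mathcal{R}-2$ walkers together on a node, the eventual missing edge $e$ lying elsewhere, or the walk having completed), the execution you must rule out is the one in which $r_1$ and $r_2$ end up stuck forever on the \emph{two distinct extremities of $e$}, neither of which is the walkers' node. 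Your two proposed tools do not close this: the redirection at Head (Rules~\rOneTwo/\rOneThree) is irrelevant here because Head is co-located with Tail, so a searcher only gets redirected if it already visits the walkers' node, which is exactly what has to be proved; and Lemma~\ref{search} cannot be invoked, since its argument explicitly requires at least $3$ robots of type \potentialMin/\dumbSearcher/\awareSearcher~(with two of them on the same side of $e$), whereas here only $r_1$ and $r_2$ remain. The paper's proof spends most of its effort on exactly this configuration: it shows that $r_1$ and $r_2$ can consider opposite directions only if they previously met each other alone (a meeting with Head forces the $right$ direction), then takes the last such meeting time $t_l$, shows both robots stay isolated and keep their directions up to the time $t_n$ when the walkers stop, and deduces that the walkers' node lies in $Seg(v_1,v_2)$ between the right-going and the left-going robot, so that at least one of them must reach the walkers before being blocked by $e$. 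Nothing in your sketch substitutes for this position-and-direction argument, so the liveness claim is not established in Case~I.

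Conversely, the case you flag as the main obstacle (your Case~II, $e$ separating Head from Tail) is the easy one: there the two extremities of $e$ are the Tail node $u$ and the Head node $w$ themselves, so ``being stuck at an extremity'' already means being at $u$ or being redirected at $w$; since every other edge is recurrent, both searchers reach $u$ (the only point needing care is that after a mutual meeting it is always the same robot, the one of larger identifier, that turns left, so the bouncing you worry about cannot repeat forever), which is essentially the paper's short Case~1. Two further small inaccuracies: a \COT~ring has \emph{at most} one eventual missing edge, not exactly one (the paper explicitly sets aside the no-missing-edge case as the \RE~situation), and in Case~I a searcher arriving at the walkers' node does not become a \tailWalker~via Rule~\rOneFour; Rule~\TerminationTwo~(or \rOneTwo/\rOneThree, which precede \rOneFour) applies first. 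These are minor, but the missing opposite-directions argument for Case~I is a real hole in the proof.
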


\begin{proof}
 By contradiction, assume that \Gathering~does not solve the liveness of \GEW~in 
 \COT~rings. Since the execution of Rules \TerminationOne~and
 \TerminationTwo~permits a robot to terminate its execution, by 
 Lemma~\ref{term}, this implies that there exists a \COT~ring such
 that, during the execution of \Gathering, neither Rule \TerminationOne~nor
 Rule \TerminationTwo~is executed. Consider the execution of \Gathering~on 
 that ring.
 
 By Corollary~\ref{towerElection}, there exists a time $t$ at which a 
 \towerElection~is formed. Note that $\mathcal{R} - 2 \geq 2$ robots are 
 involved in a \towerElection. Once a \towerElection~is formed the 
 $\mathcal{R} - 3$ \waitingWalker~and the \minWaitingWalker~involved in this
 \towerElection~execute Rule \rTwoOne. While executing this rule, the robot 
 $r$ with the maximum identifier among the $\mathcal{R} - 2$ robots involved in 
 this \towerElection~becomes \headWalker, the \minWaitingWalker~becomes 
 \minTailWalker~and the other robots involved in this \towerElection~become
 \tailWalker. Note that, by Corollary~\ref{uniqueLeader}, only $r_{min}$ can be 
 $min$, and therefore, since $r_{min}$ is the robot with the minimum identifier 
 among all the robots of the system and since at least $2$ robots are involved 
 in the \towerElection, $r_{min}$ cannot become \headWalker. By 
 Lemma~\ref{uniqueTowerElection} and by the rules of \Gathering, only $r$ can 
 be \headWalker~and only $r_{min}$ can be \minTailWalker~during the execution.
  
 There is no rule in \Gathering~permitting a \tailWalker~or a
 \minTailWalker~robot to become another kind of robot. A \tailWalker~and a 
 \minTailWalker~can only execute Rules \rFourThree~and \rThreeOne. By
 the rules of \Gathering, the \minTailWalker~and the \tailWalker~execute the 
 same movements at the same time starting from the same node, therefore, they
 are on a same node at each instant time. Hence, call $tail$ the set of all of 
 these robots.
 
 A \headWalker~can become a \leftWalker. However, since we assume that
 the liveness of \GEW~cannot be solved, then it is not possible for $r$ to become a 
 \leftWalker. Indeed, a \headWalker~can only execute Rules \rFourTwo, 
 \rFourThree~and \rThreeOne. Note that, by the rules of \Gathering, after the
 execution of Rule \rTwoOne, the \headWalker~and the $tail$ both execute
 Rule \rThreeOne. Therefore, since the \headWalker~and the $tail$ start the
 execution of Rule \rThreeOne~on the same node at the same time, by the 
 rules of \Gathering, while the \headWalker~is executing Rule \rFourThree~or
 Rule \rThreeOne, if the $tail$ is not on the same node as the \headWalker, 
 it is either executing Rule \rThreeOne~or it is terminated. Moreover, by 
 the same arguments, in the remaining of the execution,
 the \headWalker~and the $tail$ are either on a same node or the
 $tail$ is on the left adjacent node (on the footprint of the dynamic ring) of
 the node where the \headWalker~is
 located. Hence, if at a time $t'$, the \headWalker~executes Rule \rFourTwo, 
 and therefore becomes a \leftWalker, then this implies that during time $t' - 1$
 it is executing either Rule \rFourThree~or Rule \rThreeOne~while there 
 is an adjacent left edge to its position and at time $t'$ the $tail$ is not on 
 its node. Therefore, necessarily the $tail$ is terminated, otherwise as
 explained the $tail$ would have join the \headWalker~on its node (Rule 
 \rThreeOne). Since only Rules \TerminationOne~and \TerminationTwo~permit a 
 robot to terminate its execution, by Lemma~\ref{term}, this implies that the 
 $tail$ has executed Rule \TerminationTwo, which leads to a contradiction 
 with the fact that \Gathering~does not solve the liveness of \GEW. 
 
 Therefore, during the 
 whole execution (after the execution of Rule \rTwoOne), the \headWalker,
 \tailWalker~and \minTailWalker~stay
 respectively \headWalker, \tailWalker~and \minTailWalker~and can only 
 execute Rule \rThreeOne~until their variables $walkSteps$ reach $n$, 
 and then they can only execute Rule \rFourThree. 

 Call $r_{1}$ and $r_{2}$ the two robots that are not involved in the 
 \towerElection~at time $t$. Since, by contradiction, neither Rule 
 \TerminationOne~nor Rule \TerminationTwo~are true, neither $r_{1}$ nor
 $r_{2}$ can meet the \headWalker~or the $tail$ while they (the \headWalker~and 
 the $tail$) are on a same node. Therefore, we assume that this event never
 happens.
 
 By Lemma~\ref{kindR1AndR2}, at time $t$,
 $state_{r_{1}} \in \{$\righter, \potentialMin, \dumbSearcher, \awareSearcher$\}$
 and $state_{r_{2}} \in \{$\dumbSearcher, \awareSearcher$\}$.
 
 Let us first consider all the possible interactions between only $r_{1}$ and 
 $r_{2}$ while $state_{r_{1}} \in \{$\righter, \potentialMin, \dumbSearcher, \awareSearcher$\}$
 and $state_{r_{2}} \in \{$\dumbSearcher, \awareSearcher$\}$.

 An isolated \potentialMin~or a \potentialMin~that is located only with a
 \dumbSearcher~stays a \potentialMin~and considers the $right$ direction (Rule
 \rOneEight). 
 
 If a \potentialMin~is located only with an \awareSearcher, it becomes an 
 \awareSearcher~and it executes the function \textsc{Search} (Rule \rOneFive). 
 
 An isolated \righter~stays a \righter~and considers the $right$ direction (Rule
 \rOneEight).
 
 If a \righter~is located only with a \dumbSearcher~(resp. an \awareSearcher), 
 it becomes an \awareSearcher~and executes the function \textsc{Search} (Rule
 \rOneSeven).
 
 If a \dumbSearcher~is located only with a \righter, it becomes an
 \awareSearcher~and executes the function \textsc{Search} (Rule\rOneNine). 
 
 If a \dumbSearcher~is located only with a \potentialMin~it stays a 
 \dumbSearcher~and executes the function \textsc{Search} (Rule \rOneEleven). In 
 this case, while executing the function \textsc{Search}, a 
 \dumbSearcher~considers the $left$ direction, since it
 possesses a greater identifier than the one of the \potentialMin. Indeed, only 
 Rule \rOneSix~permits a robot to become \potentialMin~or \dumbSearcher. 
 This rule is executed when $\mathcal{R} - 1$ \righter~are located on a same 
 node. While executing Rule \rOneSix, among the $\mathcal{R} - 1$ \righter, 
 the one with the minimum identifier becomes \potentialMin~while the others 
 become \dumbSearcher. By Observation~\ref{noMoreRighter}, Rule \rOneSix~can 
 be executed only once. Therefore, a \dumbSearcher~necessarily possesses an 
 identifier greater than the one of the \potentialMin.
 
 An isolated \dumbSearcher~or a \dumbSearcher~located only with another
 \dumbSearcher~stays a \dumbSearcher~and executes the function \textsc{Search} 
 (Rule \rOneEleven).
 
 If a \dumbSearcher~is located only with an \awareSearcher, it becomes an 
 \awareSearcher~and it executes the function \textsc{Search} (Rule \rOneTen). 
 
 An isolated \awareSearcher~or an \awareSearcher~located only with a \righter, a
 \potentialMin, a \dumbSearcher~or an \awareSearcher~stays an \awareSearcher~and 
 executes the function \textsc{Search} (Rule \rOneEleven).
 
 When $r_{1}$ and $r_{2}$ are on a same node without any other robot, executing
 the function \textsc{Search}, since all the robots possess distinct 
 identifiers, one considers the $right$ direction, while the other one considers the 
 $left$ direction. 
 
 While executing the function \textsc{Search} at time $i$, a robot that is an
 isolated \dumbSearcher~or an isolated \awareSearcher~considers during the Move 
 phase of time $i$ the same direction it considers during the Move phase of time
 $i - 1$. By Lemma~\ref{NotBotDirection3}, this direction cannot be equal to 
 $\bot$. 
 
 By the previous movements described, note that, as long as $r_{1}$ and
 $r_{2}$ are not located with the \headWalker~or the $tail$, they are always such that 
 $state_{r_{1}} \in \{$\righter, \potentialMin, \dumbSearcher, \awareSearcher$\}$
 and $state_{r_{2}} \in \{$\dumbSearcher, \awareSearcher$\}$.
  
 Now, consider the possible interactions between the \headWalker~and $r_{1}$
 and/or $r_{2}$ when $state_{r_{1}} \in \{$\righter, \potentialMin, \dumbSearcher, \awareSearcher$\}$
 and $state_{r_{2}} \in \{$\dumbSearcher, \awareSearcher$\}$.
  
 If $r_{1}$ and/or $r_{2}$, as a \righter, \potentialMin, \dumbSearcher~or 
 \awareSearcher~is on the same node as the \headWalker~such that 
 there is no adjacent $right$ edge to their location, then it executes Rule 
 \rOneThree, hence it becomes an \awareSearcher~and stops to move. 
 
 $(*)$ If $r_{1}$ and/or $r_{2}$, as a \righter, \potentialMin, \dumbSearcher~or 
 \awareSearcher~is on the same node as the \headWalker~such that
 there is an adjacent $right$ edge to their location, then it executes Rule 
 \rOneTwo, hence it becomes an \awareSearcher~considering the right direction and
 therefore crosses the adjacent right edge to its node.
 
 This implies that, as long as $r_{1}$ and $r_{2}$ are not located with the $tail$ they 
 are always such that $state_{r_{1}} \in \{$\righter, \potentialMin, \dumbSearcher, \awareSearcher$\}$
 and $state_{r_{2}} \in \{$\dumbSearcher, \awareSearcher$\}$.
 
 Finally, consider the possible interaction between the $tail$ and $r_{1}$
 and/or $r_{2}$. If $r_{1}$ and/or $r_{2}$, as a \righter, \potentialMin, 
 \dumbSearcher~or 
 \awareSearcher~is on the same node as the \minTailWalker, then it 
 executes Rule \rOneFour~and becomes a \tailWalker. From this time, by the
 function \textsc{BecomeTailWalker} and the rules of \Gathering, the robot
 belongs to the $tail$.

 We assume that there exists an eventual missing edge. Call $t"$ the time after 
 the execution of Rule \rTwoOne~and after the time when the
 eventual missing edge is missing forever. Consider the execution from $t"$. 
 Since Rule \rTwoOne~is executed before time $t"$, then there are
 \headWalker, \tailWalker~and \minTailWalker~in the execution after time $t"$
 included.
 
 Recall that, while executing Rules \rThreeOne~and \rFourThree, the 
 \headWalker~and the $tail$ are either on a same node or on two adjacent 
 nodes (the $tail$ is on the adjacent left node on the footprint of the dynamic 
 ring of the node where the \headWalker~is located).

 \begin{description}
  \item [Case 1:] \textbf{There is an eventual missing edge $\mathbf{e}$ between
  the node where the $\mathbf{\\headWalker}$ is located and the node where the 
  $\mathbf{tail}$ is located.}
  
  As explained previously, since the \headWalker~and the $tail$ are not on the same node, this necessarily
  implies that the \headWalker~either executes Rule \rThreeOne~or
  Rule \rFourThree~at time $t"$, and the $tail$ executes Rule \rThreeOne~at 
  time $t"$. Therefore, after time $t"$, the \headWalker~does not move either
  because it waits for the $tail$ to join it on its node (Rule \rThreeOne), or 
  because it executes the function \textsc{StopMoving} (Rule \rFourThree).
  Similarly, after time $t"$, the $tail$ does not move, since  it tries to join 
  the \headWalker~considering the right direction (Rule \rThreeOne), but the
  edge is missing forever.    

  Since there is at most one eventual missing edge in a \COT~ring, 
  all the edges, except $e$, are infinitely often present in the execution after 
  time $t"$. Considering the movements of the robots described previously, 
  whatever the direction considered by $r_{1}$ and $r_{2}$ at time $t"$ both of
  them succeed eventually to reach the node where the $tail$ is located, making 
  the liveness of \GEW~solved. 
 
  
  \item [Case 2:] \textbf{The eventual missing edge is not between the node 
  where the $\mathbf{\\headWalker}$ is located and the node where the 
  $\mathbf{tail}$ is located.}
  
  This implies that there exists a time from which the \headWalker~and the $tail$ 
  are located on a same node and do not move, either because they are executing
  Rule \rFourThree, or because they are executing Rule \rThreeOne~but
  the adjacent $right$ edge the \headWalker~tries to cross is the eventual 
  missing edge. In the second case, by the movements of the robots described
  previously, we succeed to prove that, eventually at most one of the robots 
  among $r_{1}$ and $r_{2}$ can be stuck on the extremity of the eventual
  missing edge where the \headWalker~and the $tail$ are not located, and that at
  least one of them succeeds to reach the node where the \headWalker~and the
  $tail$ are located, making the liveness of \GEW~solved.
  
  Consider now the first case. Call $t_{n}$ the first time at which the
  \headWalker~and the $tail$ are on a same node and both execute Rule 
  \rFourThree. If $r_{1}$ and $r_{2}$ consider the same direction at time 
  $t_{n}$, then by the movements of the robots described previously, whatever 
  the place of the eventual missing edge, we succeed to prove that, eventually
  at most one of them can be stuck on one of the extremity of the eventual
  missing edge, and that at least one of them succeeds to reach the node where
  the \headWalker~and the $tail$ are located, making the liveness of \GEW~solved. Similarly,
  if the \headWalker~and the $tail$ are located, at time $t_{n}$, on one of the extremity of the
  eventual missing edge, then, by the movements of the robots described 
  previously, we succeed to prove that, eventually at most one of the robots 
  among $r_{1}$ and $r_{2}$ can be stuck on the extremity of the eventual
  missing edge where the \headWalker~and the $tail$ are not located, and that at
  least one of them succeeds to reach the node where the \headWalker~and the
  $tail$ are located, making the liveness of \GEW~solved.
     
  Now consider the first case, when $r_{1}$ and $r_{2}$ consider opposed
  directions at time $t_{n}$ and such that, at time $t_{n}$, the \headWalker~and 
  the $tail$ are not located on one of the extremity of the eventual missing 
  edge. It is not possible for both $r_{1}$ and $r_{2}$ to be eventually stuck 
  on two
  different extremities of the eventual missing edge. Indeed, if $r_{1}$ and
  $r_{2}$ consider two opposed directions at time $t_{n}$, this is because,
  between times $t_{i}$ and $t_{n}$ (with $t_{i}$ the time at which the
  \headWalker~and the $tail$ both execute Rule \rThreeOne~for the first
  time), they 
  are located on a same node (without any other robot on their node). We prove this by contradiction. Assume, by contradiction, that $r_{1}$ 
  and $r_{2}$ are never located on a same node (without any other robot on their node) between 
  times $t_{i}$ and $t_{n}$. Consider the execution from time $t_{i}$ until time 
  $t_{n}$. Whatever
  the direction considered by $r_{1}$ (resp. 
  $r_{2}$), it cannot be located with the $tail$, otherwise, since there is
  no eventual missing edge between the \headWalker~and the $tail$ and by the 
  movements of the robots described previously, Rule \TerminationTwo~is 
  eventually executed. Therefore, $r_{1}$ (resp. $r_{2}$) can only be located with the \headWalker. 
  When $r_{1}$ (resp. $r_{2}$) is located with the \headWalker, it necessarily exists an 
  adjacent right edge to their position before the adjacent left edge to their
  position appears, otherwise, the $tail$ join them and Rule
  \TerminationTwo~is executed. By $(*)$, after $r_{1}$ (resp. $r_{2}$) is on the
  same node as the \headWalker~while there is an adjacent $right$ edge to their
  location, it becomes an \awareSearcher~considering the right direction. At time 
  $t_{n}$, the 
  \headWalker~and the $tail$ execute Rule \rFourThree, therefore they 
  succeed to execute Rule \rThreeOne~until their variables $walkSteps$ is
  equal to $n$. This implies that, if $r_{1}$ (resp. $r_{2}$) considers the left 
  direction at time $t_{i}$, 
  necessarily, since it cannot be located with $r_{2}$ (resp. $r_{1}$), by the movements of
  the robots described previously, it exists a time $t_{meet} \geq t_{i}$ at 
  which the \headWalker~and the $tail$ execute Rule \rThreeOne~and either 
  the \headWalker~or the $tail$ is located with it. As explained previously, $r_{1}$ (resp.
  $r_{2}$) cannot be located with the $tail$, this implies that, at time $t_{meet}$, 
  $r_{1}$ (resp. $r_{2}$) is located with the \headWalker. Therefore, whatever the 
  direction considered by $r_{1}$ (resp. $r_{2}$) at time $t_{i}$, if $r_{1}$
  and $r_{2}$ are never located on a same node (without any other robot on their node) between times 
  $t_{i}$ and $t_{n}$, it necessarily 
  considers the right direction at time $t_{n}$. Indeed, $r_{1}$ (resp. $r_{2}$)
  considers the right direction at time $t_{n}$ either because it meets the 
  \headWalker~that makes it consider the right direction or because at time 
  $t_{i}$ it considers the right direction and it is never located with the 
  \headWalker~and, by the movements of the robots described previously, it has 
  not change its direction between times $t_{i}$ and time $t_{n}$. Hence, there 
  is a contradiction with the fact that $r_{1}$ and $r_{2}$ consider opposite 
  directions at time $t_{n}$. Therefore, $r_{1}$ and $r_{2}$ consider two 
  opposite directions at time $t_{n}$ because they are located on a same node (without any other 
  robot on their node) between times $t_{i}$ 
  and $t_{n}$. 
  
  Consider the last time $t_{l}$ between times
  $t_{i}$ and $t_{n}$ at which $r_{1}$ and $r_{2}$ are located on a same node (without any
  other robot on their node). At time $t_{l}$, since the two robots are located on a same node, by the
  movements of the robots described, during the Move phase of time $t_{l}$ one
  considers the right direction while the other one considers the left 
  direction. By assumption, between times $t_{l} + 1$ and $t_{n}$, $r_{1}$ and 
  $r_{2}$ are not located on a same node. Moreover, as explained previously, between times $t_{l} + 1$
  and $t_{n}$, neither $r_{1}$ nor $r_{2}$ can be located with the $tail$, otherwise Rule \TerminationTwo~is eventually executed. Besides, 
  between times $t_{l} + 1$ and $t_{n}$ the robot that considers the left 
  direction during the Move phase of time $t_{l}$ cannot be located with the 
  \headWalker, otherwise, as noted previously, it considers the right direction
  at time $t_{n}$. Similarly, it is not possible for the robot that considers 
  the $right$ direction during the Move phase of time $t_{l}$ to be located with the \headWalker~between times $t_{l} + 1$ and 
  $t_{n}$, otherwise, by the movements of the robots described previously, this 
  necessarily implies that either it is also located on the same node as the
  $tail$ and therefore the liveness of \GEW~is solved or $r_{1}$ and $r_{2}$ are 
  on a same node and therefore the robot that considers the left direction
  during the Move phase of time $t_{l}$ is located with the \headWalker. Therefore,
  from time $t_{l} + 1$ to time $t_{n}$, $r_{1}$ and $r_{2}$ are isolated, hence,
  by the movements of the robots, they consider the same respective directions from
  the Move phase of time $t_{l}$ to time $t_{n}$.
  
  Assume, without lost of generality, that this is $r_{1}$ that considers the right direction from 
  the Move phase of time $t_{l}$ to time $t_{n}$. Call
  $v_{1}$ (resp. $v_{2}$) the node on which $r_{1}$ (resp. $r_{2}$) is located
  at time $t_{n}$. The explanations of the previous paragraph imply that $v_{1} \neq v_{2}$, and that, at time $t_{n}$, 
  the node where the \headWalker~and the $tail$ are located is in $Seg(v_{1}, v_{2})$.
  Therefore, since $r_{1}$ (resp. $r_{2}$) considers the right (resp. the left) direction
  at time $t_{n}$, by the movements of the robots and since it exists only one 
  eventual missing edge, this is not possible for these two robots to be eventually stuck
  on each of the extremities of the eventual missing edge. Hence, at least 
  one succeeds to reach the node where the \headWalker~and the $tail$ are located,
  making the liveness of \GEW~solved. 
 \end{description}   
\end{proof}

By Lemmas~\ref{safety} and \ref{theoremC5}, we can deduce the following
theorem which proves the correctness of Phases \Walk~and \WaitTermination.

\begin{theorem} \label{CorollaryC5}
 \Gathering~solves \GEW~in \COT~rings.
\end{theorem}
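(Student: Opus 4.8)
The plan is to observe that this theorem is essentially a bookkeeping step that assembles the two preceding lemmas, so no new technical machinery is needed. First I would recall the decomposition of the specification: by definition, \GEW~(eventual weak gathering) is the conjunction of a \emph{safety} property (every robot that terminates does so on the same node) and a \emph{liveness} property (at least $\mathcal{R}-1$ robots terminate their execution in finite time). This is exactly the split announced at the start of the subsection, and it is the only structural fact about \GEW~that the argument needs.

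Next I would invoke the two main lemmas of the subsection directly. Lemma~\ref{safety} states that \Gathering~solves the safety of the gathering problem in \COT~rings, which by the decomposition above supplies the safety half of \GEW. Lemma~\ref{theoremC5} states that \Gathering~solves the liveness of \GEW~in \COT~rings, supplying the liveness half. Since an algorithm solves a problem precisely when it satisfies both the safety and the liveness components of that problem's specification, combining the two lemmas immediately yields that \Gathering~solves \GEW~in \COT~rings.

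I do not expect any genuine obstacle at the level of this theorem itself: the entire difficulty has already been discharged in Lemma~\ref{theoremC5}, whose proof required the delicate case analysis on the position of the (unique) eventual missing edge relative to the \headWalker~and the $tail$, together with the tracking of the states and directions of the two uninvolved robots $r_{1}$ and $r_{2}$ throughout the walk. Relative to that, the present step is a one-line conjunction. Concretely, the proof I would write simply reads: ``The specification \GEW~is the conjunction of a safety property and a liveness property. By Lemma~\ref{safety}, \Gathering~satisfies the safety property in \COT~rings, and by Lemma~\ref{theoremC5}, \Gathering~satisfies the liveness property of \GEW~in \COT~rings. Hence \Gathering~solves \GEW~in \COT~rings.''
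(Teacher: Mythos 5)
Your proposal is correct and matches the paper exactly: the paper derives Theorem~\ref{CorollaryC5} as an immediate consequence of Lemma~\ref{safety} (safety in \COT~rings) and Lemma~\ref{theoremC5} (liveness of \GEW~in \COT~rings), with no further argument. The one-line conjunction you give is precisely what the paper does.
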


\subsection{What about \Gathering~executed in \AC, \RE, \BRE~and \ST~rings?} \label{gracefull}

In the previous subsection we prove that \Gathering~solves \GEW~in
\COT~rings. In this subsection, we consider \AC,
\RE, \BRE~and \ST~rings. For each of these
classes of dynamic rings, we give the version of gathering \Gathering~solves in it.

First, we consider the case of \AC~rings. In the following theorem,
we prove that \Gathering~solves \GW~in \AC~rings.

\begin{theorem} \label{theoremC9}
 \Gathering~solves \GW~in \AC~rings in $O(id_{r_{min}}*n^{2} + \mathcal{R} * n)$ rounds.
\end{theorem}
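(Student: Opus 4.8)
The plan is to bootstrap from the \COT~analysis. Since $\AC\subset\COT$, Theorem~\ref{CorollaryC5} already guarantees that \Gathering~solves \GEW~in every \AC~ring; in particular the safety property of Lemma~\ref{safety} (all terminating robots share a node) and the fact that at least $\mathcal{R}-1$ robots eventually terminate hold \emph{verbatim}, with no change of argument. Because \GW~differs from \GEW~only in replacing ``finite time'' by ``bounded time,'' the whole task reduces to exhibiting an explicit $O(id_{r_{min}}n^{2}+\mathcal{R}n)$ bound on the number of rounds before $\mathcal{R}-1$ robots terminate. So I would revisit each of the four phase-correctness arguments of Subsection~\ref{GatheSolveCOT}, keeping their logical skeleton but now tracking round counts and exploiting the extra structure of \AC.

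The structural fact I would isolate first is that, in an \AC~ring, at most one edge is absent in any snapshot: a ring on $n$ nodes stays connected only if at least $n-1$ of its $n$ edges are present. The consequence I would use repeatedly is a \emph{funnelling} lemma: a robot that considers the $right$ direction fails to move only when the edge in front of it is the unique missing edge of that round, so in the same round every other rightward-moving robot does advance. Hence if one node stays blocking for $n$ consecutive rounds, all remaining righters/searchers pile onto that node within $n$ rounds, which necessarily triggers Rule~\rOneSix, Rule~\TerminationOne, or Rule~\TerminationTwo. This is precisely the device that converts each ``in finite time'' step of the \COT~proof into an ``in $O(n)$ rounds'' step.

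With this tool I would bound the phases in order. For Phase~\AmITheMin I would track $r_{min}$'s counter $rightSteps$, which must reach the explicit threshold $4\,id_{r_{min}}n$ for self-election. By the funnelling lemma, $r_{min}$ is blocked only when its right edge is the unique missing one, and if it stays blocked for $n$ consecutive rounds the remaining righters pile up, firing Rule~\rOneSix~(starting the search) or a termination rule. Thus either $r_{min}$ advances one step at least once every $n+1$ rounds, completing the phase in $O(id_{r_{min}}n^{2})$ rounds, or the search begins, after which a searcher---moving consistently while its opposite-direction partner guarantees an unobstructed route to the now stationary $r_{min}$---reaches it in $O(n)$ rounds; the $O(id_{r_{min}}n^{2})$ term dominates. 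For Phases~\MinWaitToBeKnown, \Walk~and \WaitTermination~I would reuse Corollary~\ref{towerElection} and Lemma~\ref{theoremC5}: once $r_{min}$ is $min$, funnelling the $\mathcal{R}-3$ joiners into the \towerElection, running the two $n$-step walks of Head and Tail, and letting the at most two outside robots reach the stuck node each cost $O(\mathcal{R}n)$ rounds (the factor $\mathcal{R}$ counting the robots to be gathered and the factor $n$ a single ring traversal). Summing the phase bounds yields the claimed $O(id_{r_{min}}n^{2}+\mathcal{R}n)$.

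The main obstacle is Phase~\AmITheMin: turning the purely existential ``in finite time'' guarantees of Lemmas~\ref{atLeastOneLeader} and~\ref{search} into a quantitative bound. The delicate point is that \AC~does not bound edge recurrence, so a single edge may stay absent for a long (finite) time; I must argue that such a stall can occur for at most $n$ rounds before the funnelling lemma forces either the election of $r_{min}$ or the start of the search, so that the $4\,id_{r_{min}}n$ step budget and the $O(n)$ per-traversal cost interact to give exactly $O(id_{r_{min}}n^{2})$. Once this $O(n)$-per-traversal conversion is in hand, the later phases are comparatively routine, contributing only the additive $O(\mathcal{R}n)$ term.
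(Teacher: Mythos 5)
Your proposal follows essentially the same route as the paper's proof: invoke Theorem~\ref{CorollaryC5} (\GEW~in \COT, hence in $\AC\subset\COT$) for correctness and safety, then bound each of the four phases using the fact that an \AC~ring misses at most one edge per round, which is exactly the paper's device for turning every stall of $r_{min}$ (or of the walkers) into an $O(n)$-round delay (the paper's constants are $n$ before Rule~\rOneSix~fires and $4n$ afterwards), yielding $O(id_{r_{min}}\cdot n^{2})$ for Phase~\AmITheMin~and bounded later phases. The only slip is bookkeeping in the later phases: Phase~\Walk~(and part of Phase~\MinWaitToBeKnown) costs $O(n^{2})$ rather than $O(\mathcal{R}\,n)$, since each of the $2n$ alternating walk steps may stall for roughly $3n$ rounds before \TerminationTwo~fires, but this term is absorbed by $id_{r_{min}}\cdot n^{2}\geq n^{2}$, so the stated overall bound is unaffected.
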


\begin{proof}
 By Corollary~\ref{CorollaryC5}, \Gathering~solves \GEW~in 
 \COT~rings, since $\AC\subset\COT$, this implies that \Gathering~also solves
 \GEW~in \AC~rings. Therefore, to prove that
 \Gathering~solves \GW~in \AC~rings, it stays to
 prove that each phase of \Gathering~is bounded.
 
 \begin{description}
  \item [Phase \AmITheMin:]
  
  By Corollary~\ref{uniqueLeader}, only $r_{min}$ becomes $min$ in finite time. 
  By the rules of \Gathering, when $r_{min}$ becomes $min$, it is first 
  \minWaitingWalker~before being \minTailWalker~(since
  only a \minWaitingWalker~can become a \minTailWalker~while executing Rule \rTwoOne). Therefore, since only Rule 
  \rOneOne~permits a robot to become \minWaitingWalker, by the predicate 
  $Min\-Discovery()$ of this rule, $r_{min}$ becomes $min$ either because it moves during
  $4 * n * id_{r_{min}}$ steps in the right direction or because it meets a 
  robot that permits it to deduce that it is $min$. In this last case, note that, 
  either $r_{min}$ is \potentialMin, or $r_{min}$ meets a \potentialMin~or a 
  \dumbSearcher~or a robot whose variable $id\-Min$ is different from $-1$. 
  Therefore, in this last case, either $r_{min}$ possesses a variable $id\-Po\-ten\-tial\-Min$ 
  different from $-1$, or $r_{min}$ meets a robot $r$ such that
  $id\-Po\-ten\-tial\-Min_{r}$ is different from $-1$ (since a \potentialMin~and 
  a \dumbSearcher~have their variable $id\-Po\-ten\-tial\-Min$ different from 
  $-1$ (Rule \rOneSix) and since, while executing \Gathering, each time the variable
  $id\-Min$ of a robot is set with a variable different from $-1$, this is also the case 
  for its variable $id\-Po\-ten\-tial\-Min$).
  
  Taking back the arguments used in the proof of Lemma~\ref{atLeastOneLeader}, 
  let us consider the following cases.

  \begin{description}
  \item [Case 1.1:] \textbf{Rule \rOneSix~is never executed.}
  
  By the rules of \Gathering, this implies that, before the time when $r_{min}$
  is $min$, there are only \righter~in the execution. First, this implies 
  that $r_{min}$ becomes $min$ because it moves during $4 * n * id_{r_{min}}$ 
  steps in the $right$ direction (since \righter~robots have their variables 
  $id\-Po\-ten\-tial\-Min$ equal to $-1$). Second, in this context, as long as
  $r_{min}$ is not $min$, all the \righter~always consider the $right$ direction
  (Rule \rOneEight). This implies that, as long as $r_{min}$ is not $min$, each 
  time a robot wants to move in the 
  right direction it can be stuck during at most $n$ rounds, otherwise, since in 
  an \AC~ring at most one edge can be missing at each instant time, 
  Rule \rOneSix~is executed. Therefore in case 1.1 $r_{min}$ becomes $min$
  in at most $4 * id_{r_{min}} * n * n$ rounds. 
  \end{description}
  
  Now let consider the case where Rule \rOneSix~is executed at a time $t$. In the following, we consider the 
  execution from time $t$.
  After time $t$, while it is not yet $min$, if $r_{min}$ is stuck more than $4 * n$ consecutive rounds on
  a same node then it becomes $min$. We prove this considering the two following
  cases. In each of these cases we assume that $r_{min}$ is not yet $min$ and that it is stuck more than 
  $n$ rounds on a same node.
  
  \begin{description}
  \item [Case 1.2:] \textbf{Rule \rOneSix~is executed but $\mathbf{r_{min}}$ 
  is not among the $\mathcal{R} - 1$ $\mathbf{righter}$ robots that execute it.}  
  
  Taking back the arguments of the proof of Lemma~\ref{atLeastOneLeader}, we 
  know that Rule \rOneSix~can be executed only once, and that the robots 
  that execute this rule can never be \righter~anymore. Moreover, since 
  $r_{min}$ does not execute Rule \rOneSix, since, by 
  Corollary~\ref{uniqueLeader}, $r_{min}$ necessarily becomes $min$, since,
  by Lemma~\ref{OnlyMovingRightAndPLeaderCanBeMin}, only \righter~and
  \potentialMin~can be $min$, and since only Rule \rOneSix~permits robots to
  become \potentialMin, before becoming $min$, $r_{min}$ is a \righter. By the 
  proof of Lemma~\ref{atLeastOneLeader}, as long as $r_{min}$ is not $min$ it cannot
  exist \awareSearcher. Hence, by the rules of \Gathering, as long as $r_{min}$ is not 
  $min$, there are only one \righter, one \potentialMin~and $\mathcal{R} - 2$ \dumbSearcher~in the 
  execution. Therefore, by the rules of \Gathering, the \potentialMin~is \potentialMin~at least until 
  $r_{min}$ becomes $min$. Hence, the \potentialMin~executes Rule \rOneEight~and thus
  considers the right direction at least until $r_{min}$ becomes $min$. We have assumed
  that, while it is not yet $min$, $r_{min}$ is stuck more than $4 * n$ consecutive rounds on a same node. Since
  $r_{min}$ is a \righter~before being $min$, it is stuck because the adjacent 
  right edge to its position is missing (Rule \rOneEight). Therefore, since in an \AC~ring
  of size $n$ at least $n - 1$ edges are present at each 
  instant time, either the \potentialMin~(or a \dumbSearcher) meets $r_{min}$ in 
  at most $n$ rounds. When $r_{min}$ meets a \potentialMin~(or
  a \dumbSearcher), it becomes $min$ by definition of the predicate 
  $MinDiscovery()$ in Rule \rOneOne. Therefore, if it is stuck more than $4 * n$ rounds, 
  $r_{min}$ becomes $min$ in at most $n$ rounds.  
  
  
  \item [Case 1.3:] \textbf{Rule \rOneSix~is executed and $\mathbf{r_{min}}$ 
  is among the $\mathcal{R} - 1$ $\mathbf{righter}$ robots that execute it.}
  
  In this case, by Rule \rOneSix, $r_{min}$ becomes \potentialMin. By 
  Observations~\ref{noMoreRighter} and \ref{noMoreMovingRightAndPLeader}, by Corollary~\ref{uniqueLeader} and
  by Lemma~\ref{OnlyMovingRightAndPLeaderCanBeMin}~$r_{min}$ is \potentialMin~until
  it becomes $min$. Therefore, $r_{min}$, while it is not yet $min$, can be stuck 
  only because the adjacent right edge to its position is missing (Rule \rOneEight).
  
  First, consider that at the time when $r_{min}$, as a \potentialMin, is stuck
  more than $4 * n$ rounds, there does not exist \righter~in the execution. By 
  Observation~\ref{noMoreRighter}, there is no more \righter~in the execution.
  However, at the time when Rule \rOneSix~is executed, the robot $r$ that is not among 
  the robots that execute this rule is a \righter. Therefore, necessarily $r$, as a \righter,
  meets at least one \dumbSearcher~at a time $t'$. Indeed, 
  it cannot meet the \potentialMin, otherwise $r_{min}$ is $min$ (Rule \rOneOne), and thus it is not anymore 
  \potentialMin~at the time at which it is stuck. Moreover, $r$ 
  cannot be isolated forever after time $t$, otherwise it stays a \righter~(Rule \rOneEight).
  Hence, at time $t'$, $r$ becomes an \awareSearcher~(Rule \rOneSeven). Consider an \awareSearcher~$r_{a}$ of the execution. By 
  Lemma~\ref{NotBotDirection}, $r_{a}$ cannot consider the $\bot$ direction. 
  Moreover, by the rules of \Gathering, as long as there is no $min$, an 
  \awareSearcher~executes the function \textsc{Search} (rule \rOneEleven).
  Besides, by the proof of Lemma~\ref{atLeastOneLeader} if a robot is not isolated and executes the 
  function \textsc{Search}, then all the robots of its node are or become 
  \awareSearcher~and execute the function \textsc{Search}. While executing the
  function \textsc{Search}, an isolated robot does not change its direction. When a robot
  executes the function \textsc{Search} while there are multiple robots on its 
  node, if it possesses the maximum identifier among the robots of its node, it 
  considers the left direction, otherwise it considers the right direction. In 
  an \AC~ring of size $n$, at least $n - 1$ edges are present at 
  each instant time. Therefore, if $r_{a}$ considers the $right$ direction, 
  either it, as an \awareSearcher~or a robot that is or becomes an 
  \awareSearcher~is located, in at most $n$ rounds, on the node where $r_{min}$, 
  as a \potentialMin, is stuck. In
  the case where $r_{a}$ considers the $left$ direction then, by the same 
  arguments, in at most $4 * n$ rounds an \awareSearcher~is located on the node
  where $r_{min}$, as a \potentialMin, is stuck. Indeed, at most $n$ rounds are needed for an 
  \awareSearcher to reach the extremity of the missing edge where $r_{min}$ is
  not located. Then, at most $2 * n$ other rounds are needed for a 
  \dumbSearcher~(execution of the function \textsc{Search}, rule \rOneEleven) or
  an \awareSearcher~to reach also this node. These $2*n$ rounds are especially 
  needed for a \dumbSearcher~that may take $n$ rounds (considering the left 
  direction) to reach the node where $r_{min}$~is stuck and then again
  $n$ rounds (considering the right direction) to reach the other extremity of the 
  missing edge. 
  From this time there is in the 
  execution an \awareSearcher~that considers the $right$ direction. Finally, at 
  most $n$ supplementary rounds are needed for an \awareSearcher~to reach the
  node where $r_{min}$, as a \potentialMin, is stuck. Note that
  $\mathcal{R} > 4$, and there are $\mathcal{R} - 1$ 
  \dumbSearcher/\awareSearcher~in the execution as long as $r_{min}$ is not 
  $min$. Therefore, the previous scenario can effectively happen. When $r_{min}$ 
  meets an \awareSearcher, it becomes $min$ by definition of the predicate
  $MinDiscovery()$ of rule \rOneOne. Therefore, $r_{min}$ becomes $min$ in at 
  most $4 * n$ rounds if it is stuck more than $4 * n$ rounds. 
  
  Second, consider that at the time when $r_{min}$, as a \potentialMin, is stuck more than $4 * n$ 
  rounds, there exists a \righter. In this case, since an isolated 
  \righter~considers the $right$ direction (Rule \rOneEight), and by the
  arguments of the previous paragraph, either a \righter~or a robot that is an
  \awareSearcher~or that becomes an \awareSearcher (Rules \rOneSeven, 
  \rOneNine~or \rOneTen) meets $r_{min}$ in at most $n$ rounds. When $r_{min}$ 
  meets a \righter~or an \awareSearcher, it becomes $min$ by definition of the 
  predicate $Min\-Discovery()$ of Rule \rOneOne. Therefore, $r_{min}$ becomes
  $min$ in at most $n$ rounds if it is stuck more than $4 * n$
  rounds. 
 \end{description}
 
 Now, we give the worst number of rounds needed for $r_{min}$ to become $min$, 
 in the case where there exists a time $t$ at which Rule \rOneSix~is executed.
 By Case 1.1, before time $t$, $r_{min}$, while it is not yet $min$, 
 can be stuck at most $n$ rounds each time it moves from one step in the $right$
 direction. Similarly, by the two previous cases (Case 1.2 and 1.3), after time
 $t$, $r_{min}$, while it is not yet $min$, can be stuck at most $4 * n$ rounds 
 each time it moves from one step in the $right$ direction. Let $nb$ be the 
 number of steps in the right direction moved by $r_{min}$ before time $t$. 
 As proved previously, $r_{min}$ is either a \righter~or a \potentialMin~before
 being $min$. By Lemma~\ref{rightDirection}, this implies that before being 
 $min$, $r_{min}$ always considers the right direction. Therefore, by the 
 predicate $Min\-Discovery()$ of Rule \rOneOne,
 in at most $nb * n + ((4 * id_{r_{min}} * n) - nb) * 4 * n$ 
 rounds, $r_{min}$ becomes $min$ because it moves during $4 * id_{r_{min}} * n$ 
 steps in the right direction. This function is maximal when $nb = 0$, therefore in at most
 $16 * id_{r_{min}} * n^{2}$ rounds $r_{min}$ becomes $min$ because it moves
 during $4 * id_{r_{min}} * n$ steps in the right direction. Now consider the case 
 where $r_{min}$ becomes $min$ because it meets a robot that permits it to deduce 
 that it is $min$. Once $r_{min}$ is stuck more than $4 * n$ rounds after time $t$, we have seen that it becomes 
 $min$. Since we consider the worst case such that $r_{min}$ does not become 
 $min$ because it moves during $4 * id_{r_{min}} * n$ steps in the right 
 direction, this implies that in at most $(4 * id_{r_{min}} * n - 1) * 4 * n$ 
 rounds $r_{min}$ becomes $min$. Therefore, whatever the situation, Phase 
 \AmITheMin~is bounded.
 
 Moreover, by Case 1.1 and the previous paragraph, we can conclude that 
 Phase \AmITheMin~is in $O(id_{r_{min}} * n^{2})$ rounds.

  Now we consider Phase \MinWaitToBeKnown~of \Gathering. In this
  phase $r_{min}$ is $min$ and waits for a \towerElection~to be formed. We take 
  back the arguments used in the proofs of 
  Lemmas~\ref{PotentialMinAndTowerElection} and 
  \ref{NoPotentialMinAndTowerElection} to prove that this phase is bounded.
  
  \item [Phase \MinWaitToBeKnown:]
  
  \begin{description}
   \item [Case 2.1:] \textbf{There is a $\mathbf{potentialMin}$ in the 
   execution.}
   
   For this case we take back the arguments of the proof of Lemma~\ref{PotentialMinAndTowerElection}.
    
   If before being $min$, $r_{min}$ is a \righter, then all the robots that are
   not located on node $u$ are \potentialMin, \dumbSearcher, and \awareSearcher.   
   As long as it is not on node $u$, a \potentialMin~either executes Rule \rOneEight,
   or it becomes an \awareSearcher~(Rule \rOneFive). While executing Rule
   \rOneEight, a \potentialMin~stays a \potentialMin~and has the same behavior as if it was executing the 
   function \textsc{Search}. Moreover, as long as they are not on node $u$, 
   \dumbSearcher~and \awareSearcher~robots stay either \dumbSearcher~or
   \awareSearcher~and execute the function \textsc{Search}.
   Therefore, by definition of the function \textsc{Search} (refer to Phase 
   \AmITheMin~case 1.3 of this proof) and by Lemma~\ref{lemma4Bis}, at most $3 * n$ rounds are needed (in \AC~rings) for a 
   robot $r$ such that $state_{r} \in \{$\potentialMin, \dumbSearcher, \awareSearcher$\}$
   to be located on node $u$. Indeed, these $3 * n$ rounds are needed especially when 
   a \potentialMin, \dumbSearcher~or \awareSearcher~moves in one direction during 
   $n$ steps and then is stuck on the adjacent node of $u$, then $n$ steps are
   needed for a robot of this kind to be also located on this node and thus to consider
   the opposite direction, then in at most $n$ additional steps a robot of this kind is located on $u$.
   By Rule \rTwoThree, this implies that at most $3 * n$ rounds are
   necessary for a supplementary \waitingWalker~to be located on node $u$. 
   Therefore, at most $(\mathcal{R} - 3) * 3 * n$ rounds are needed for a 
   \towerElection~to be formed.
   
   Now consider the case where before being $min$, $r_{min}$ is a \potentialMin.
   
   In this case among the robots that are not on node $u$, there are \dumbSearcher,
   \awareSearcher~and at most one \righter.
   
   For all the cases of Case 2.1 of the proof of 
   Lemma~\ref{PotentialMinAndTowerElection}, at most 
   $(\mathcal{R} - 4) * 3 * n$ rounds are needed for $\mathcal{R} - 4$ 
   \waitingWalker~to be located on $u$ (for the same reasons as the one explained in the previous paragraph).
   Then among the robots that are not on node $u$, it exists at most one
   \righter, and 2 robots that are either \dumbSearcher~or \awareSearcher. At most $n$ 
   rounds are needed for the \righter~to be stuck on the node called $v$ in the proof
   of Lemma~\ref{PotentialMinAndTowerElection}, and then at most $n$ rounds are 
   needed for a \dumbSearcher~or an \awareSearcher~to be also located on node $v$
   (and thus, by Rule \rOneSeven, for all the robots that are not on node $u$ to be either \dumbSearcher~or \awareSearcher), 
   and then at most $n$ additional rounds are needed for one of the robot to 
   reach node $u$. Therefore, for all the cases of Case 2.1 of the proof of 
   Lemma~\ref{PotentialMinAndTowerElection}, at most $(\mathcal{R} - 4) * 3 * n + 3 * n$
   rounds are needed for a \towerElection~to be formed. 
   
   If we consider Case 2.2 of the proof of
   Lemma~\ref{PotentialMinAndTowerElection}, similarly as in the previous case,
   at most $(\mathcal{R} - 4) * 3 * n + 3 * n$ rounds are needed for Rule 
   \TerminationTwo~to be executed.
  
  \item [Case 2.2:] \textbf{There is no $\mathbf{potentialMin}$ in the 
  execution.}
 
   For this case we take back the arguments of the proof of Lemma~\ref{NoPotentialMinAndTowerElection}.
   
   Just after $r_{min}$ becomes $min$, it takes at most $n * n$ rounds for a 
   robot $r$ to join the node where $r_{min}$ is located. Indeed, as long as no robot
   is on node $u$ with $r_{min}$, as a \minWaitingWalker, all the robots except $r_{min}$ are \righter. By 
   the same arguments as the one used in Phase \AmITheMin~Case 1.1 of 
   this proof, a
   \righter~cannot be stuck more than $n$ rounds on the same node, otherwise
   Rule \rOneSix~is executed, which is a contradiction with the fact that there 
   is no \potentialMin. Moreover, a \righter~can move from at most $n$ steps in
   the right direction to reach $u$.
   
   Once $r$ is on node $u$ an adjacent right edge to $u$ is present in at most 
   $n$ rounds, otherwise Rule \TerminationOne~is executed. Therefore, once 
   $r$ is on node $u$, in at most $n$ rounds it becomes an \awareSearcher. 
   From this time, either it is possible for all the \righter~to become
   \awareSearcher~or it exists at least one \righter~that is stuck on node $u$.
   In the first case at most $2 * n$ rounds are needed for 
   all the \righter~to become \awareSearcher~(either because an
   \awareSearcher~meets them, or because they are located on $u$ such that there
   is an adjacent right edge to $u$). By the arguments above, we
   know that if all the robots that are not located on node $u$ are
   \awareSearcher, and if there are more than 3 such robots, then in a most
   $3 * n$ rounds one robot of this kind reaches node $u$. Therefore, for 
   $\mathcal{R} - 3$ \waitingWalker~to be located on node $u$, with $r_{min}$, at most
   $(\mathcal{R} - 3) * 3 * n$ supplementary rounds are needed. 
   In the second case, at most $2 * n$ rounds are needed for some of the 
   \righter~to reach node $u$ (and to be stuck on this node). Since the robots 
   that are not on node $u$ are \awareSearcher~and since at least one 
   \righter~is stuck on node $u$, by the same arguments as above, at most 
   $(\mathcal{R} - 3) * 3 * n$ additional rounds are needed for
   Rule \TerminationTwo~to be executed. 
   
   Therefore, in this case, at most $n*n+n+2*n+(\mathcal{R} - 3) * 3 * n$ rounds 
   are needed for either Phase \MinWaitToBeKnown~to be achieved or Rule
   \TerminationTwo~to be executed.
 \end{description}

  Therefore Phase \MinWaitToBeKnown~is bounded. Moreover, by the two previous 
  cases, we can conclude that Phase \MinWaitToBeKnown~is in $O(\mathcal{R} * n + n^{2})$ rounds.
  
  Now we consider Phase \Walk~of \Gathering. In this purpose we take 
  back the arguments used in the proof of Lemma~\ref{theoremC5}.
  
  \item [Phase \Walk:]
  
  Here we consider the worst execution in terms of times. Therefore, we consider
  that Rules \TerminationOne~and \TerminationTwo~are executing at the very 
  last moment.
  The robots $r_{1}$ and $r_{2}$ that are not involved in $T$ at time $t_{tower}$
  are such that $state_{r_{1}} \in \{$\righter, \potentialMin, \dumbSearcher, \awareSearcher$\}$
  and $state_{r_{2}} \in \{$\dumbSearcher, \awareSearcher$\}$. Therefore, as 
  explained previously, each time the \headWalker, or the \minTailWalker~/ 
  \tailWalker~robots move from one steps in the right direction, they can be 
  stuck at most during $3 * n$ rounds, otherwise either Rule 
  \TerminationOne~or Rule \TerminationTwo~is executed. Indeed, this is especially
  the case when the \headWalker~and the \minTailWalker~/ \tailWalker~are stuck on
  the same node. In fact, it takes at most $n$ rounds for $r_{1}$ to be stuck on
  the other extremity the missing edge. At most $n$ supplementary rounds are needed
  for $r_{2}$ to reach the node where $r_{1}$ is stuck (and therefore for one robot
  to change its direction), and then $n$ other rounds are needed for one of these
  robots to reach the node where the \headWalker~and the \minTailWalker~/ \tailWalker~are stuck
  (and thus for Rule \TerminationTwo~to be executed).
  Therefore, Phase \Walk~is achieved in
  at most $2 * n * (3 * n)$ rounds since the \headWalker~and the 
  \minTailWalker~/ \tailWalker~robots have to move alternatively during 
  $n$ steps to complete Phase \Walk. In other words, Phase \Walk~is bounded and
  is in $O(n^{2})$ rounds.
  
  Now we consider Phase \WaitTermination~of \Gathering. In this purpose we take back the 
  arguments used in the proof of Lemma~\ref{theoremC5}.
  
  \item [phase \WaitTermination:]
  
   Using similar arguments as the one used in Phase \Walk, once the \headWalker~and the \minTailWalker~/
   \tailWalker~stop to move forever, if they are located on a same node, at most
   $3 * n$ rounds are necessary for Rule \TerminationTwo~to be executed. 
   In the case where the \headWalker~and the \minTailWalker~/
   \tailWalker~stop to move forever, if they are located on different nodes, at most
   $2 * n + 2 * n$ rounds are necessary for Rule \TerminationTwo~to be
   executed. Indeed, at most $2 * n$ rounds are necessary for each of the two robots that
   are not involved in $T$ at time $t_{tower}$ to be located on the node where 
   the \minTailWalker~/\tailWalker~is located. This is true whatever the 
   interactions between $r_{1}$ and $r_{2}$ and whatever the interactions between 
   $r_{1}$ (resp. $r_{2}$) and the \headWalker~since in an \AC~ring there is at 
   most one edge missing at each instant time (and in this precise case the missing
   edge is between the node where the \headWalker~is located and the node where 
   the \minTailWalker~/ \tailWalker~are located). Therefore, Phase \WaitTermination~is
   bounded and is in $O(n)$ rounds.
 \end{description}

 In conclusion each of the four phases of algorithm \Gathering~are bounded when 
 executed in an \AC~ring, therefore \Gathering~solves \GW~in \AC~rings.
 Moreover, \Gathering~solves \GW~in \AC~rings in $O(id_{r_{min}}*n^{2} + \mathcal{R} * n)$ rounds.
\end{proof}

Now, we consider the case of \RE~rings. In the following theorem,
we prove that \Gathering~solves \GE~in \RE~rings.

\begin{theorem} \label{theoremC6}
 \Gathering~solves \GE~in \RE~rings.
\end{theorem}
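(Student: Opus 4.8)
The plan is to build directly on what is already established. Safety is given by Lemma~\ref{safety} (proved for \COT, hence holding in \RE since $\RE\subset\COT$), and Theorem~\ref{CorollaryC5} states that \Gathering~solves \GEW~in \COT, so in any \RE~ring at least $\mathcal{R}-1$ robots terminate on a common node in finite time. The whole point of the \RE~hypothesis is that the footprint of the ring contains only recurrent edges, i.e. \emph{there is no eventual missing edge}: every edge is present infinitely often. I would therefore treat Theorem~\ref{theoremC6} as an \emph{upgrade} of the \GEW~guarantee to \GE, so that it remains only to show that the single robot possibly left behind by the \GEW~solution eventually joins the others and terminates.

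Concretely, I would fix the time at which $\mathcal{R}-1$ robots terminate on some node $x$ and split into two cases. If all $\mathcal{R}$ robots terminate, then by Lemma~\ref{term} they do so on $x$ and \GE~holds immediately. Otherwise exactly $\mathcal{R}-1$ robots terminate on $x$, necessarily through Rule~\TerminationTwo, whose guard \GEW$()$ requires a $min$ among them; by Corollary~\ref{uniqueLeader} this $min$ is $r_{min}$, so $r_{min}$ is among the terminated robots and the unique remaining robot $r_{last}$ satisfies $r_{last}\neq r_{min}$ and is not a $min$. From this instant on, all robots but $r_{last}$ are terminated and motionless on $x$, so $r_{last}$ is isolated at every time until it reaches $x$.

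Next I would pin down the state of $r_{last}$. Reusing the analysis of Lemma~\ref{theoremC5} specialized to the absence of an eventual missing edge, the walk of Phase~\Walk~completes and the whole tower (the \headWalker~together with the $tail$, that is $\mathcal{R}-2$ robots including $r_{min}$) reunites and stops on $x$; thus the two robots outside the tower are exactly the robots $r_{1},r_{2}$ of Lemma~\ref{kindR1AndR2}, and $r_{last}$ is whichever of them has not yet joined. Consequently $state_{r_{last}}\in\{$\righter, \potentialMin, \dumbSearcher, \awareSearcher$\}$: it can be neither a $min$ nor a \waitingWalker~(those sit on $x$ with $r_{min}$ by Lemma~\ref{waitingCautiousWalkOnSameNode}), nor a walker (the walkers are on $x$), nor a \leftWalker~(which, by the rules, can only appear when the $tail$ is separated forever, impossible without an eventual missing edge). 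An isolated \righter~or \potentialMin~considers the $right$ direction (Lemma~\ref{rightDirection}), and an isolated \dumbSearcher~or \awareSearcher~keeps its last direction, which is never $\bot$ (Lemma~\ref{NotBotDirection3} together with the function \textsc{Search} of Rule~\rOneEleven). Hence, while isolated, $r_{last}$ keeps one fixed direction $d\in\{right,left\}$.

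Finally I would close the argument using the defining property of \RE. Since $r_{last}$ never changes $d$ once it is permanently isolated and since every edge is present infinitely often, $r_{last}$ advances in direction $d$ each time its incident edge reappears; it therefore traverses the ring and reaches $x$ in finite time. At that moment $|NodeMate()|=\mathcal{R}-1$, so Rule~\TerminationOne~(guard \GE$()$), which is tested before \TerminationTwo, fires and $r_{last}$ terminates on $x$, yielding all $\mathcal{R}$ robots terminated on $x$, i.e. \GE. The step I expect to be the main obstacle is the state characterization of $r_{last}$: I must rule out every ``halting'' status (\minWaitingWalker, \minTailWalker, \waitingWalker, stopped walkers, \leftWalker) so as to guarantee that $r_{last}$ keeps moving, and the crucial leverage for this is precisely that \RE~forbids the eventual missing edge which, in the \COT~proof of Lemma~\ref{theoremC5}, was the only reason a robot could remain stuck forever.
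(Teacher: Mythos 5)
Your overall strategy (upgrade the \GEW~guarantee of Theorem~\ref{CorollaryC5} to \GE~by showing the single leftover robot keeps a fixed direction and, since every edge of an \RE~ring is recurrent, eventually reaches the terminated tower and fires Rule~\TerminationOne) is the same skeleton as the paper's proof. The gap is in your state characterization of $r_{last}$: you claim it must lie in $\{$\righter, \potentialMin, \dumbSearcher, \awareSearcher$\}$ because ``the walk completes and the whole tower reunites and stops on $x$'', so the walkers cannot be the robot left behind. That exclusion is not valid. Rule~\TerminationTwo~can fire \emph{during} Phase~\Walk: the walk is an alternation in which the \headWalker~is repeatedly one node ahead of the $tail$ (and the $tail$ can moreover be blocked for an arbitrarily long, finite time by a temporarily missing edge, which \RE~permits). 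If $r_{1}$ and $r_{2}$ reach the $tail$'s node at such a moment, then $\mathcal{R}-1$ robots including the \minTailWalker~are co-located, \GEW$()$ is true, they all terminate, and the robot that never terminated is precisely the \headWalker. Your auxiliary claim that a \leftWalker~``can only appear when the $tail$ is separated forever, impossible without an eventual missing edge'' is likewise wrong: Rule~\rFourTwo~triggers exactly when the $tail$ has terminated without the \headWalker, which, as just argued, does happen in \RE. Note also that you cannot simply ``reuse the analysis of Lemma~\ref{theoremC5} specialized to the absence of an eventual missing edge'': that proof explicitly assumes an eventual missing edge exists and defers the opposite case to the \RE~theorem you are trying to prove.

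This missing case is not a technicality, because a stranded \headWalker~does not keep moving by itself: under Rule~\rThreeOne~(or \rFourThree) it considers $\bot$ while waiting for a $walkerMate$ that will never rejoin it, so your ``fixed direction $d$ plus recurrent edges'' argument does not apply to it. The paper closes this case with a dedicated argument: since the terminated robots never move, the \headWalker's node mates can never again equal $walkerMate_{r}$, and since the left edge is recurrently present while the robot is not moving, the predicate $HeadWalkerWithoutWalkerMate()$ eventually becomes true; the robot then executes Rule~\rFourTwo, becomes a \leftWalker, and from then on moves left (Rule~\rFourOne) over recurrent edges until it reaches $w$ and terminates by Rule~\TerminationOne. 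Your proposal needs this additional branch (and the corresponding exclusion arguments for \tailWalker~and \waitingWalker, which you do have) to be complete.
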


\begin{proof}
 By Corollary~\ref{CorollaryC5}, \Gathering~solves \GEW~in 
 \COT~rings, therefore it solves the safety and the liveness of \GEW~in \COT~rings. Since $\RE\subset\COT$, 
 \Gathering~also solves the safety and the liveness of \GEW~in \RE~rings. This implies 
 that all robots that terminate their execution terminate it on the same node 
 and it exists a time at which at least $\mathcal{R} - 1$ robots terminate
 their execution. Call $t$ the first time at which at least $\mathcal{R} - 1$ 
 robots terminate their execution. 
 
 By contradiction, assume that \Gathering~does not solve \GE~in
 \RE~rings, this implies that it exists a robot $r$ that never
 terminates its execution.

 Call \towerTermination~the $\mathcal{R} - 1$ robots that, at time $t$, are 
 located on a same node and are terminated. While executing \Gathering, the only
 way for a robot to terminate its execution is to execute either Rule 
 \TerminationOne~or Rule \TerminationTwo. By Lemma~\ref{term}, for a 
 \towerTermination~to be formed at time $t$, Rule \TerminationTwo~has to be 
 executed at this time.
 
 $(*)$ By the predicate of Rule \TerminationTwo, $r_{min}$ belongs to the 
 \towerTermination. By Lemma~\ref{term}, all the robots that are located on the 
 same node as $r_{min}$ at time $t$ belong to the \towerTermination. 
 
 Call $w$ the node where the \towerTermination~is located at time $t$.
 
 Note that $r$ cannot be located on node $w$ after time $t$ included, otherwise
 it executes Rule \TerminationOne~and the lemma is proved.
 
 Since $r_{min}$ belongs to the \towerTermination, and since by 
 Corollary~\ref{uniqueLeader}, only $r_{min}$ can be \minWaitingWalker~or a 
 \minTailWalker, $r$ is neither \minWaitingWalker~nor \minTailWalker.
 
 At time $t$, $r$ cannot be a \tailWalker. Indeed, to become a \tailWalker, a 
 robot must either execute Rule \rTwoOne~or Rule \rOneFour. To execute
 Rule \rTwoOne~a robot must be a \waitingWalker. By 
 Lemma~\ref{waitingCautiousWalkOnSameNode}, all \waitingWalker~are located on
 the same node as the \minWaitingWalker. Moreover, when a 
 \waitingWalker~executes Rule \rTwoOne, by the predicate 
 $All\-But\-Two\-Waiting\-Walker()$, the \minWaitingWalker~also executes this 
 rule becoming a \minTailWalker. Then by the rules of \Gathering, the robot 
 that becomes \tailWalker~while executing Rule \rTwoOne~and the
 \minTailWalker~execute the same movements (refer to Rules \rThreeOne~and 
 \rFourThree), and therefore are always on a same node. Besides, to execute
 Rule \rOneFour~a robot must be located on the same node as the
 \minWaitingWalker~(refer to the predicate 
 $Not\-Walker\-With\-Tail\-Walker(r')$). Then, thanks to the function 
 \textsc{BecomeTailWalker} and by the rules of \Gathering, the robot that 
 becomes \tailWalker~while executing Rule \rOneFour~cannot be
 on a node different from the one where the \minTailWalker~is located (refer to
 Rules \rThreeOne~and \rFourThree). Therefore, by $(*)$, $r$ cannot be a 
 \tailWalker~at time $t$, otherwise, at time $t$, it is on the same node as the 
 \minTailWalker~(thus, by Corollary~\ref{uniqueLeader}, it is on the same node
 as $r_{min}$) and hence it terminates its execution.
 
 At time $t$, $r$ cannot be a \waitingWalker~robot. Indeed by the rules of 
 \Gathering~and the previous remarks, it cannot exists \waitingWalker~if there 
 is no \minWaitingWalker~in the execution, and by
 Lemma~\ref{waitingCautiousWalkOnSameNode} all the \waitingWalker~and 
 \minWaitingWalker~are located on a same node. Therefore, by $(*)$, $r$ cannot
 be a \waitingWalker~at time $t$, otherwise, at time $t$, it is on the same node
 as the \minWaitingWalker~(thus, by Corollary~\ref{uniqueLeader}, it is on the
 same node as $r_{min}$) and hence it terminates its execution.
 
 Therefore, at time $t$, $r$ can be either a \righter, a \potentialMin, a
 \dumbSearcher, an \awareSearcher, a \headWalker~or a \leftWalker~robot.
 
 As long as $r$ is not on node $w$, it is isolated.
 
 An isolated \righter~or an isolated \potentialMin~only executes Rule 
 \rOneEight. While executing this rule, a robot considers the $right$ direction
 and stays a \righter~or a \potentialMin. Since all the edges are infinitely 
 often present, such a robot is infinitely often able to move in the $right$
 direction until reaching the node $w$. 
 
 An isolated \dumbSearcher~or an isolated \awareSearcher~only executes Rule 
 \rOneEleven. While executing this rule, an isolated robot stays a 
 \dumbSearcher~or an \awareSearcher, and considers the direction it considers 
 during the previous Move phase. By Lemma~\ref{NotBotDirection3}, this direction
 cannot be equal to $\bot$. Therefore, an isolated \dumbSearcher~or an isolated
 \awareSearcher~always considers the same direction $d$ (either $right$ or
 $left$). Since all the edges are infinitely often present, such a robot is 
 infinitely often able to move in the direction $d$ until reaching the node $w$.  
 
 Now assume that, at time $t$, $r$ is a \leftWalker. A \leftWalker~only executes
 Rule \rFourOne. While executing this rule, a robot considers the $left$
 direction and stays a \leftWalker. Since all the edges are infinitely often 
 present, such a robot is infinitely often able to move in the $left$ direction
 until reaching the node $w$.
 
 Now assume that, at time $t$, $r$ is a \headWalker. A \headWalker~can execute
 either Rule \rFourTwo~or Rule \rFourThree~or Rule \rThreeOne. While 
 executing Rule \rFourTwo, a \headWalker~becomes a \leftWalker, then, by the 
 previous paragraph, $r$ reaches the node $w$ in finite time. Consider now the 
 cases where, at time $t$, $r$ executes either Rule \rFourThree~or Rule
 \rThreeOne. In these cases, after time $t$, it necessarily exists a 
 time at which $r$ executes Rule \rFourTwo. Assume, by contradiction, that
 this is not true. The only way for a robot to become a \headWalker~is to
 execute Rule \rTwoOne. Rule \rTwoOne~is executed when $\mathcal{R} - 2$
 robots are located on a same node. While executing this rule, a robot sets its
 variable $walker\-Mate$ with the identifiers of the robots that are located on 
 its node. Only Rule \rTwoOne~permits a robot to update its variable 
 $walker\-Mate$. Note that, since $\mathcal{R} - 2 \geq 2$, the variable
 $walker\-Mate$ of $r$, after time $t$, contains at least one identifier $i$
 different from the
 identifier of $r$. The robot of identifier $i$ necessarily belongs to the
 \towerTermination, since only $r$ does not terminate. $(1)$ Hence, at time $t$,
 the robot of identifier $i$ is terminated on node $w$, thus it does not move, 
 and therefore, after time $t$, $r$ is never on the same node as $i$. $(2)$ All
 the edges are infinitely often present. While executing Rule \rFourThree~at
 time $t$, $r$ considers the $\bot$ direction and does not update its other 
 variables. $(3)$ Hence, by the rules of \Gathering, since $r$ cannot execute 
 Rule \rFourTwo, after time $t$, $r$ can only execute Rule \rFourThree, 
 and therefore only considers the $\bot$ direction. Hence, necessarily by $(1)$,
 $(2)$ and $(3)$, this implies that, after time $t$, it exists a time at which
 the predicate $Head\-Walker\-Without\-Walker\-Mate()$ is true, thus at this 
 time Rule \rFourTwo~is executed. Similarly, if at time $t$, $r$ executes 
 Rule \rThreeOne, since $r$ can never be located on the same node as $i$,
 while executing Rule \rThreeOne, it considers the $\bot$ direction and does
 not update its other variables. $(4)$ Hence, by the rules of \Gathering, since 
 $r$ cannot execute Rule \rFourTwo, after time $t$, $r$ can only execute
 Rule \rThreeOne, and therefore always considers the $\bot$ direction. Thus, by
 $(1)$, $(2)$ and $(4)$, necessarily, after time $t$, it exists a time at which 
 the predicate $Head\-Walker\-Without\-Walker\-Mate()$ is true, hence at this 
 time Rule \rFourTwo~is executed. Therefore, even in the cases where, at
 time $t$, $r$ executes either Rule \rFourThree~or Rule \rThreeOne, it
 exists a time greater than $t$ at which $r$ becomes a \leftWalker~and hence, by
 the previous paragraph, $r$ succeeds to reach the node $w$ in finite time.
 
 Therefore whatever the kind of robot $r$ is, it is always able to reach the 
 node $w$. Once $r$ reaches the node $w$ it executes Rule 
 \TerminationOne~making \GE~solved.
\end{proof}

Now, we consider the case of \BRE~rings. We prove, in 
Theorem~\ref{theoremC7}, that \Gathering~solves \G~in
\BRE~rings. To prove this, we first need to prove the following 
lemma that it useful to bound Phase \MinWaitToBeKnown~of \Gathering~in
\BRE~rings.

\begin{lemma} \label{searchBounded}
 If the ring is a \BRE~ring and if there is no \towerElection~in the
 execution but there exists at a time $t$ at least 3 robots such that they are
 either \potentialMin, \dumbSearcher~or \awareSearcher, then at least a 
 \potentialMin, a \dumbSearcher~or an \awareSearcher~reaches the node $u$ 
 between time $t$ and time $t + n * \delta$ included, with $\delta \geq 1$.
\end{lemma}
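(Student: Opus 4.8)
The plan is to read this statement as the quantitative refinement of Lemma~\ref{search}: the hypotheses are identical, and Lemma~\ref{search} already guarantees that some \potentialMin, \dumbSearcher~or \awareSearcher~eventually reaches $u$, so the only new content is a bound on \emph{when} this happens. The first observation I would make is that in a \BRE~ring the definition forces every edge of the footprint to reappear within any window of $\delta$ consecutive rounds; in particular there is no eventual missing edge. Consequently, of the two cases in the proof of Lemma~\ref{search}, Case~2 (an eventual missing edge) is vacuous, and it suffices to revisit the reasoning of Case~1 and attach a round count to it.

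Next I would isolate the progress mechanism used implicitly in Case~1 of Lemma~\ref{search}. Fix one of the three robots, call it $r$, that is \potentialMin, \dumbSearcher~or \awareSearcher~at time $t$, and let $d$ be the direction it considers during the Look phase of time $t$; by Lemma~\ref{lemma4Bis} (and Lemma~\ref{NotBotDirection}) we have $d\in\{right,left\}$. I would then follow a ``direction-$d$ token'': a searcher-type robot that considers $d$ and advances one node each time the edge in front of it becomes present. The facts $(i)$--$(vi)$ established in the proof of Lemma~\ref{search}, together with the tie-break of \textsc{Search} recorded in $(v)$ (among co-located searchers exactly one keeps each of the two directions), guarantee that whenever the edge in direction $d$ out of the token's current node appears, \emph{some} searcher robot on that node crosses it in direction $d$. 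Thus a meeting can only relay the token to another robot, never stall it; the token is well defined and strictly advances by one node per edge appearance. By Lemma~\ref{waitingCautiousWalkOnSameNode} the waiting robots sitting on $u$ never interfere with this progress.

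The counting step is then immediate. In a \BRE~ring the edge in front of the token is present at least once within every $\delta$ rounds, so the token advances by at least one node every $\delta$ rounds. Since the footprint is a ring of $n$ nodes, a robot moving consistently in a fixed direction visits every node within $n$ moves, hence reaches $u$ after at most $n$ advances. Therefore the token, and with it some \potentialMin, \dumbSearcher~or \awareSearcher, is located on $u$ no later than time $t+n\cdot\delta$, which is the claimed bound.

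The main obstacle I anticipate is not the counting but making the token argument airtight: I must verify that the movement facts $(i)$--$(vi)$ of Lemma~\ref{search} still hold here (they do, since they were derived only from the absence of a \towerElection~and the presence of searcher-type robots, exactly the present hypotheses) and, most delicately, that a direction change forced by a meeting is always compensated by a co-located robot keeping direction $d$, so that the token never loses direction $d$. Once this relay property is stated cleanly, the \BRE~bound $\delta$ on edge recurrence plugs directly into the per-step waiting time and the $n$ footprint nodes bound the number of advances, yielding $t+n\cdot\delta$.
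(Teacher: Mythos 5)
Your proposal is correct and follows essentially the same route as the paper: the paper's proof is a one-line remark that the arguments of Lemma~\ref{search} (of which only Case~1 is relevant here, since a \BRE~ring has no eventual missing edge), combined with the fact that every edge reappears within $\delta$ rounds, yield the bound. Your explicit relay/token accounting of at most $n$ advances in a fixed direction, each completed within $\delta$ rounds, is simply a fleshed-out quantification of that same argument.
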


\begin{proof}
 We prove this lemma using the arguments of the proof of Lemma~\ref{search} and 
 the fact that in a \BRE~ring each edge appears at least once every
 $\delta$ units of time. 
\end{proof}

\begin{theorem} \label{theoremC7}
 \Gathering~solves \G~in \BRE~rings in $O(n * \delta * (id_{r_{min}} + \mathcal{R}))$ rounds.
\end{theorem}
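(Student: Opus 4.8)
The plan is to leverage Theorem~\ref{theoremC6} together with the inclusion $\RE\subset\COT$ and $\BRE\subset\RE$ to obtain, essentially for free, that \Gathering~solves \GE~in \BRE~rings: all $\mathcal{R}$ robots terminate on the same node in finite time, which already yields the safety of \G~and the termination of every robot. The only remaining task is therefore to upgrade the finite termination time into a \emph{bounded} one, i.e.\ to bound the number of rounds of each of the four phases when the ring is in \BRE. The key structural observation driving every bound is that a \BRE~ring has no eventual missing edge: since the footprint is a ring and each of its edges reappears at least once every $\delta$ rounds, every edge is recurrent, and consequently a robot that wants to cross a given edge is blocked for at most $\delta$ consecutive rounds. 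This is exactly the \BRE-analogue of the ``at most one missing edge, hence stuck at most a multiple of $n$ rounds'' argument used in the proof of Theorem~\ref{theoremC9} for \AC~rings, and I would reuse that proof as a template, substituting the $\delta$-recurrence bound for the single-missing-edge bound throughout.

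Phase by phase, I would first bound Phase~\AmITheMin. By Lemma~\ref{atLeastOneLeader} and Corollary~\ref{uniqueLeader}, $r_{min}$ becomes $min$ either after performing $4 * id_{r_{min}} * n$ right steps or after meeting a robot (a \potentialMin, a \dumbSearcher, or a robot carrying $r_{min}$'s identifier) that makes $MinDiscovery()$ true. Since in \BRE~each right move of an isolated \righter~or \potentialMin~is delayed by at most $\delta$ rounds, the first alternative costs $O(id_{r_{min}} * n * \delta)$ rounds, and the second, handled through the searcher dynamics and the $\delta$-recurrence, can only be faster; thus Phase~\AmITheMin~is in $O(id_{r_{min}} * n * \delta)$ rounds. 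For Phase~\MinWaitToBeKnown, I would invoke Lemma~\ref{searchBounded}, which states precisely that, whenever at least $3$ searcher-type robots are present, one of them reaches the node $u$ of the \minWaitingWalker~within $n * \delta$ rounds. Since forming a \towerElection~requires $\mathcal{R}-3$ such arrivals (plus the at-most-one-\righter~bookkeeping of Lemmas~\ref{PotentialMinAndTowerElection} and \ref{NoPotentialMinAndTowerElection}, each resolved within a further $O(n * \delta)$ rounds), this phase is in $O(\mathcal{R} * n * \delta)$ rounds.

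It then remains to bound Phases~\Walk~and \WaitTermination. For Phase~\Walk, \headWalker~and the $tail$ move alternately for $n$ right steps each, a total of $O(n)$ moves; as each move is delayed by at most $\delta$ rounds (and, there being no eventual missing edge, the walk actually completes), the phase runs in $O(n * \delta)$ rounds. For Phase~\WaitTermination, once \headWalker~and the $tail$ form the tower of $\mathcal{R}-2$ robots, each of the two remaining robots $r_{1}$ and $r_{2}$ needs at most $n$ steps to reach that node, each step delayed by at most $\delta$ rounds, using the movement analysis of the proof of Lemma~\ref{theoremC5}; since there is no eventual missing edge, both actually arrive, so Rule~\TerminationOne~eventually fires for all $\mathcal{R}$ robots, and the phase is in $O(n * \delta)$ rounds. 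Summing the four bounds gives $O(n * \delta * (id_{r_{min}} + \mathcal{R}))$, the claimed complexity, and the fact that every phase is bounded is exactly what turns the \GE~of Theorem~\ref{theoremC6} into \G.

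The hard part, as usual in this development, is not the global scheme but checking that the per-step $\delta$-bound genuinely survives the intricate searcher interactions of Phases~\AmITheMin~and \MinWaitToBeKnown: one must verify that no robot can oscillate or idle in the $\bot$ direction long enough to break the bound. Fortunately Lemmas~\ref{NotBotDirection}, \ref{lemma4Bis} and \ref{NotBotDirection3} already guarantee that searcher-type robots never consider $\bot$, and Lemma~\ref{searchBounded} packages the delicate two-direction reachability argument with the $\delta$ factor already built in, so the residual work is the essentially mechanical substitution of $\delta$ for the \AC~constants throughout the proof of Theorem~\ref{theoremC9}, combined with the observation that in \BRE~the absence of an eventual missing edge forces all $\mathcal{R}$ robots (not merely $\mathcal{R}-1$) to gather.
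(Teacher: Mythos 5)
Your overall route is the same as the paper's: obtain \GE~in \BRE~from Theorem~\ref{theoremC6} and the inclusion $\BRE\subset\RE$, then bound each of the four phases using the fact that every footprint edge reappears within $\delta$ rounds, with Lemma~\ref{searchBounded} doing the work for Phase~\MinWaitToBeKnown; the per-phase bounds you give ($O(id_{r_{min}}\,n\,\delta)$, $O(\mathcal{R}\,n\,\delta)$, $O(n\,\delta)$, $O(n\,\delta)$) match the paper's.

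There is, however, one concrete case your analysis of Phases~\Walk~and \WaitTermination~skips. You assume that the \headWalker~and the $tail$ always end up together as the tower of $\mathcal{R}-2$ robots and that $r_{1}$ and $r_{2}$ then join them. But during the alternating walk the \headWalker~can be one node to the right of the $tail$ when $r_{1}$ and $r_{2}$ reach the $tail$'s node; at that moment $\mathcal{R}-1$ robots (including the \minTailWalker) are colocated, the predicate \GEWBold$()$ is true, and Rule~\TerminationTwo~fires \emph{before} the \headWalker~rejoins. The stranded \headWalker~then executes Rule~\rFourTwo, becomes a \leftWalker, and must travel left (Rule~\rFourOne) for up to $n$ steps, each delayed by at most $\delta$ rounds, before it reaches the terminated group and fires Rule~\TerminationOne. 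The paper devotes its Case~2 to exactly this scenario (an extra $O(n\,\delta)$ term); without it you cannot claim that \emph{all} $\mathcal{R}$ robots terminate within the stated bound, only $\mathcal{R}-1$ of them. The omission does not change the asymptotic complexity, but the case must be argued for the proof of \G~(as opposed to \GW) to be complete.
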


\begin{proof} 
 By Lemma~\ref{theoremC6}, \Gathering~solves \GE~in \RE~rings. Therefore, since
 $\BRE\subset\RE$, then \Gathering~also solves \GE~in 
 \BRE~rings.
 We want to prove that \Gathering~solves \G~in~\BRE~rings. Therefore, we have to
 prove that each phase of the algorithm is bounded.

 \begin{description}
  \item [Phase \AmITheMin:]

 By Corollary~\ref{uniqueLeader}, we know that only $r_{min}$ becomes $min$ in 
 finite time. By Lemma~\ref{OnlyMovingRightAndPLeaderCanBeMin}, before being 
 $min$, $r_{min}$ is either a \righter~or a \potentialMin~robot. By
 Lemma~\ref{rightDirection}, if, at a time $t$, a robot is a \righter~or a
 \potentialMin~robot, then it considers the $right$ direction from the beginning
 of the execution until the Look phase of time $t$. Since initially all the
 robots are \righter, and since, by the rules of \Gathering, only \righter~can
 become \potentialMin~(refer to Rule \rOneSix), then by 
 Observations~\ref{noMoreRighter} and \ref{noMoreMovingRightAndPLeader}, a robot
 that is a \righter~(resp. \potentialMin) is a \righter~(resp. is either a 
 \righter~or a \potentialMin) since the beginning of the execution. Besides,
 each time $r_{min}$, as a \righter~or as a \potentialMin, crosses an edge in 
 the right direction, it increases its variable $rightSteps$ of one (refer to 
 Rules \rOneEight~and \rOneSix). 
 Therefore, since each edge of the footprint of a \BRE ring is 
 present at least once every $\delta$ units of time, by definition of $min$ and 
 of the predicate $Min\-Discovery()$ of Rule \rOneOne, $r_{min}$ becomes
 $min$ in at most $4 * n * id_{r_{min}} * \delta$ rounds. Hence, Phase 
 \AmITheMin~is bounded and is in $O(id_{r_{min}} * n * \delta)$.
 
 \item [Phase \MinWaitToBeKnown:]
 
 Now, consider the execution when $r_{min}$ just becomes $min$. Therefore, we 
 consider the execution once $r_{min}$ is \minWaitingWalker. By 
 Corollary~\ref{towerElection}, we know that in finite time a \towerElection~is 
 formed. By Lemma~\ref{uniqueTowerElection}, there is only one \towerElection~in
 the whole execution. Therefore, before a \towerElection~is formed, by the rules 
 of \Gathering~and since initially all the robots are \righter, there are only
 \righter, \potentialMin, \dumbSearcher, \awareSearcher, \minWaitingWalker~and
 \waitingWalker~robots. By Lemma~\ref{waitingCautiousWalkOnSameNode}, we know 
 that all the \minWaitingWalker~and \waitingWalker~robots are located on a same
 node and do not move. By Rule \rTwoThree, if a \potentialMin, a 
 \dumbSearcher~or an \awareSearcher~is located on the same node as a
 \minWaitingWalker, it becomes \waitingWalker~$(*)$. If there is no more 
 \righter~robot in the execution, we use Lemma~\ref{searchBounded} and $(*)$
 multiple times to prove that it takes at most $n * \delta * (\mathcal{R} - 3)$ 
 rounds for a \towerElection~to be formed. To prove that Phase
 \MinWaitToBeKnown~is bounded, we hence have to prove that the number of 
 rounds that are necessary to stop to have \righter~in the execution is bounded.
 
 If a \righter~is located on the same node as the \minWaitingWalker~while there
 is an adjacent right edge to its location, then by Rule \rTwoFour, the 
 \righter~becomes an \awareSearcher~and moves on the right. If a \righter~is 
 located only with $\mathcal{R} - 2$ other \righter, they all execute Rule 
 \rOneSix, hence one becomes \potentialMin~while the others become
 \dumbSearcher. If a \righter~is located either with a \dumbSearcher~or with an
 \awareSearcher, then it becomes an \awareSearcher~(Rule \rOneSeven). Note that,
 by Lemma~\ref{OnlyOnePLeaderOrOneMovingRight}, since we consider the execution 
 once $r_{min}$ is $min$, it cannot exist a \righter~and a \potentialMin~in the
 execution. Therefore, a \righter~cannot meet a \potentialMin. In all the other
 cases, (a \righter~that is isolated, a \righter~that is only with other 
 \righter~on its node such that $|NodeMate()| < \mathcal{R} - 2$, and a 
 \righter~that is located on the same node as the \minWaitingWalker~while there
 is no adjacent right edge to its location) a \righter~stays a \righter~and 
 considers the $right$ direction (Rule \rOneEight). Therefore, by 
 Observation~\ref{noMoreRighter} and since each edge of the footprint of a
 \BRE~ring is present at least once every $\delta$ units of time, it
 takes at most $n * \delta$ rounds in order to stop having \righter~robots in
 the execution. Indeed, even if a \righter~does not execute Rule 
 \rOneSeven~or Rule \rOneSix, at most $n * \delta$ rounds are needed 
 for it to be located on the node where the \minWaitingWalker~is located while 
 there is an adjacent right edge to its position. Hence, Phase 
 \MinWaitToBeKnown~is bounded and is in $O(n * \mathcal{R} * \delta)$.
 
 \end {description}
 
 Once a \towerElection~is present in the execution, the robots forming this
 \towerElection~execute Rule \rTwoOne. While executing this rule, the robot
 $r$ with the maximum identifier among the robots involved in this
 \towerElection~becomes \headWalker~while the \minWaitingWalker~becomes 
 \minTailWalker~and the other robots involved in this \towerElection~become
 \tailWalker. Note that, by Corollary~\ref{uniqueLeader}, only $r_{min}$ can be 
 $min$, and therefore, since $r_{min}$ is the robot with the minimum identifier 
 among all the robots of the system and since at least $2$ robots are involved 
 in a \towerElection, $r_{min}$ cannot become \headWalker. By 
 Lemma~\ref{uniqueTowerElection} and by the rules of \Gathering, only $r$ can 
 be \headWalker~during the execution.
 
 There is no rule in \Gathering~permitting a \tailWalker~or a
 \minTailWalker~robot to become another kind of robot. A \headWalker~can become 
 a \leftWalker. Let then consider the two following cases.
 
 \begin{description}
  \item [Case 1:] \textbf{$\mathbf{r}$ is a \textit{head\-Walker} during the
  whole execution.}
 
 \begin{description}
  \item[Phase \Walk:]
  
  A \headWalker~can execute Rules \rFourTwo, \rFourThree~and \rThreeOne. 
  Since $r$ does not become a \leftWalker, it cannot execute Rule \rFourTwo.
  Moreover, since we consider the worst case 
  execution in terms of time, this implies that $r$ is able to execute Rule
  \rThreeOne~entirely. This means that $r$ is able to execute Rule 
  \rThreeOne~until its variable $walk\-Steps$ reaches the value $n$. In other 
  words, $r$ is able to execute Rule \rThreeOne~until it executes Rule 
  \rFourThree. 
  
  In this case, the \tailWalker~and \minTailWalker~are also able to execute
  Rule \rThreeOne~entirely. Indeed, if, at a time $t'$, 
  while executing Rule \rThreeOne~or Rule \rFourThree, the 
  \headWalker~is waiting on its node for the \tailWalker~and the 
  \minTailWalker~to join it while there is an adjacent left edge to its 
  position, and if at time $t' + 1$ the \tailWalker~and the \minTailWalker~have 
  not join it on its node, this necessarily implies that they stop their 
  execution, otherwise by Rule \rThreeOne~they would have join it. Moreover, 
  if such an event happens, $r$ executes Rule \rFourOne~and therefore
  becomes a \leftWalker, which leads to a contradiction.

  If the \headWalker~and the \minTailWalker/\tailWalker~execute Rule 
  \rThreeOne~entirely, this implies that they move alternatively in the right 
  direction during $n$ steps. Since each edge of the footprint of a 
  \BRE~ring is present at least once every $\delta$ units of time,
  this takes at most $2 * n * \delta$ rounds. Phase \Walk~being
  composed only of the execution of Rule \rThreeOne, this phase is bounded.
 
  \item [Phase \WaitTermination:]
 
  Call $t_{v}$ the time at which the \headWalker~and
  \minTailWalker/\tailWalker~robots finish to execute Rule 
  \rThreeOne~entirely. Since the \headWalker~and 
  \minTailWalker/\tailWalker~start the execution of Rule \rThreeOne~on the
  same node, at time $t_{v}$, they are on the same node $v$.
  
  Call $r_{1}$ and $r_{2}$ the two robots that are not involved in the 
  \towerElection~at time $t_{tower}$. 
  
  If at time $t_{v}$, $r_{1}$ and $r_{2}$ are on node $v$, then Rule 
  \TerminationOne~is executed at time $t_{v}$. In this case, by 
  Lemma~\ref{term}, Phase \WaitTermination~last 0 round, hence it is
  bounded. 
  
  If at time $t_{v}$, only one robot among $r_{1}$ and $r_{2}$ is located on 
  node $v$, then Rule \TerminationTwo~is executed at time $t_{v}$. Hence, by 
  Lemma~\ref{term}, $\mathcal{R} - 2$ robots are terminated on node $v$ at time 
  $t_{v}$. By Lemma~\ref{kindR1AndR2}, at time $t_{tower}$, $r_{1}$ and $r_{2}$
  are such that
  $state_{r_{1}} \in \{$\righter, \potentialMin, \awareSearcher, \dumbSearcher$\}$
  and $state_{r_{2}} \in \{$\awareSearcher, \dumbSearcher$\}$. By the movements 
  of the robots given in the proof of Lemma~\ref{theoremC5}, and since each edge
  of the footprint of a \BRE~ring is present at least once every
  $\delta$ units of time, it takes at most $n * \delta$ rounds for the last 
  robot to reach node $v$. Therefore, it takes at most $n * \delta$ rounds for 
  Rule \TerminationOne~to be executed, and thus, by Lemma~\ref{term}, for 
  all the robots to be terminated on node $v$. Hence, in this case Phase  
  \WaitTermination~last at most $n * \delta$ rounds, therefore it is 
  bounded. 
  
  Now, consider that at time $t_{v}$ neither $r_{1}$ nor $r_{2}$ is located on 
  node $v$. In this
  case, at time $t_{v}$, the \headWalker~and \minTailWalker/\tailWalker~execute
  Rule \rFourThree. While executing Rule \rFourThree, the 
  \headWalker~(resp. \minTailWalker/\tailWalker) stays a \headWalker~(resp.
  \minTailWalker/\tailWalker) and considers the $\bot$ direction. Then, by the 
  rules of \Gathering, they can only execute Rule \rFourThree~until they 
  terminate. Therefore, they remain on node $v$ from time $t_{v}$ until the end
  of their execution. Moreover, as noted previously, by Lemma~\ref{kindR1AndR2}, 
  at time $t_{tower}$, $r_{1}$ and $r_{2}$ are such that
  $state_{r_{1}} \in \{$\righter, \potentialMin, \awareSearcher, \dumbSearcher$\}$
  and $state_{r_{2}} \in \{$\awareSearcher, \dumbSearcher$\}$. By the movements
  of the robots given in the proof of Lemma~\ref{theoremC5}, since each edge of
  the footprint of a \BRE~ring is present at least once every 
  $\delta$ units of time, it takes at most $2 * n * \delta$ rounds for $r_{1}$ 
  and $r_{2}$ to both reach the node $v$ (in case $r_{1}$ and $r_{2}$ meet on an 
  adjacent node of $v$ after at most $n * \delta$ rounds of movements in the 
  same direction). In the case the two robots reach node $v$ at the same time,
  then Rule \TerminationOne~is executed, hence, by Lemma~\ref{term}, all the
  robots terminate at that time. In the case the two robots do not reach node
  $v$ at the same time, then the first one that reaches $v$ permits the 
  execution of Rule \TerminationTwo~(hence, by Lemma~\ref{term}, permits the
  termination of $\mathcal{R} - 2$ robots on node $v$) and the second that 
  reaches $v$ permits the execution of Rule \TerminationOne. Hence, Phase
  \WaitTermination~last at most $2 * n * \delta$ rounds, therefore
  it is bounded. 
 \end {description}
 
 \item[Case 2:] \textbf{It exists a time at which $\mathbf{r}$ is a 
  \textit{left\-Walker}.}
  
  By the explanations given in the Case 1, Phase \Walk, at most 
  $2 * n * \delta$ rounds are needed for $r$ to become \leftWalker~and for the 
  $\mathcal{R} - 2$ other robots to terminate their execution on a node $v$. 
  
  By the rules of \Gathering, a \leftWalker~only executes Rule \rFourOne. 
  While executing this rule, a robot considers the $left$ direction and stays a
  \leftWalker. Since each edge of the footprint of a \BRE~ring is
  present at least once every $\delta$ units of time, such a robot reaches the 
  node $v$ in at most $n * \delta$ rounds. Hence, in this case, Phases 
  \Walk~and \WaitTermination~take at most $3 * n * \delta$ 
  rounds, hence they are bounded.
 \end{description}
 
 By the two previous cases, phase \Walk~and Phase \WaitTermination~take
 $O(n * \delta)$ rounds.
 
 Whatever the \BRE~ring considered, each phase of \Gathering~is 
 bounded, therefore, \Gathering~solves \G~in \BRE~rings. Moreover, 
 \Gathering~solves \G~in \BRE~rings in $O(n * \delta * (id_{r_{min}} + \mathcal{R}))$ rounds.
\end{proof}

Now, we consider the case of \ST~rings. We know that \ST~rings are \BRE~rings 
such that $\delta = 1$, hence, by Lemma~\ref{theoremC7}, we can deduce the following corollary.

\begin{corollary} \label{theoremStatic}
 \Gathering~solves \G~in \ST~rings in $O(n * (id_{r_{min}} + \mathcal{R}))$ rounds.
\end{corollary}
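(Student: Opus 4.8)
The plan is to recognize that Corollary~\ref{theoremStatic} is a direct specialization of Theorem~\ref{theoremC7} and to derive it by instantiating the recurrence parameter. First I would observe that every \ST~ring is a \BRE~ring with witnessing bound $\delta = 1$. Indeed, by the model of Section~\ref{sec:model}, in a \ST~ring each snapshot $G_i$ equals the footprint, so every edge of the footprint is present at every instant; in particular each edge appears at least once within every window of one unit of time, which is precisely the defining property of \BRE~with $\delta = 1$. This is exactly the inclusion $\ST \subset \BRE$ already recorded in the class hierarchy.

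Next I would invoke Theorem~\ref{theoremC7}, which states that \Gathering~solves \G~in \BRE~rings in $O(n * \delta * (id_{r_{min}} + \mathcal{R}))$ rounds. Since the set of \ST~rings is contained in the set of \BRE~rings, any execution of \Gathering~on a \ST~ring is an execution on a \BRE~ring, so both the correctness (all four phases bounded) and the round bound transfer without modification: nothing in the proof of Theorem~\ref{theoremC7} relies on $\delta$ being strictly greater than $1$, only on the existence of the recurrence bound that a static ring trivially satisfies.

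Finally, substituting $\delta = 1$ into the \BRE~complexity yields $O(n * 1 * (id_{r_{min}} + \mathcal{R})) = O(n * (id_{r_{min}} + \mathcal{R}))$, which is the claimed bound. There is no real obstacle in this argument; the only point deserving a moment's care is confirming that $\delta = 1$ is the correct and minimal admissible value for a static ring (consistent with the remark that \ST~rings are \BRE~rings with $\delta = 1$), so that the substitution into the bound of Theorem~\ref{theoremC7} is legitimate. The corollary then follows immediately.

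\begin{proof}
 A \ST~ring is a \BRE~ring in which each edge of the footprint is present at every instant, hence a \BRE~ring with $\delta = 1$. By Theorem~\ref{theoremC7}, \Gathering~solves \G~in \BRE~rings in $O(n * \delta * (id_{r_{min}} + \mathcal{R}))$ rounds. Since $\ST \subset \BRE$, \Gathering~solves \G~in \ST~rings, and substituting $\delta = 1$ gives a round complexity of $O(n * (id_{r_{min}} + \mathcal{R}))$.
\end{proof}
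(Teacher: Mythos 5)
Your proposal is correct and follows exactly the paper's own derivation: the paper obtains Corollary~\ref{theoremStatic} by noting that \ST~rings are \BRE~rings with $\delta = 1$ and then specializing the bound of Theorem~\ref{theoremC7}. Nothing further is needed.
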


\section{Conclusion}\label{sec:conclu}

In this paper, we apply for the first time the gracefully degrading
approach to robot networks. This approach consists in circumventing
impossibility results in highly dynamic systems by providing algorithms
that adapt themselves to the dynamics of the graph: they solve the
problem under weak dynamics and only guarantee that some weaker
but related problems are satisfied whenever the dynamics increases
and makes the original problem impossible to solve.

Focusing on the classical problem of gathering a squad of autonomous robots,
we introduce a set of weaker variants of this problem that preserves its safety
(in the spirit of the indulgent approach that shares the same underlying idea).
Motivated by a set of impossibility results, we propose a gracefully degrading 
gathering algorithm (refer to Table \ref{tab:summary} for a summary of our results).
We highlight that it is the first gracefully degrading algorithm dedicated to robot 
networks and the first algorithm focusing on the gathering problem 
in \COT, the class of dynamic graphs that exhibits the weakest recurrent connectivity.

A natural open question arises on the \emph{optimality} of the graceful
degradation we propose. Indeed, we prove that our algorithm provides for each
class of dynamic graphs the best specification \emph{among the ones we proposed}.
We do not claim that another algorithm could not be able to satisfy stronger 
variants
of the original gathering specification. 
Aside gathering in robot networks, defining
a general form 
of \emph{degradation optimality} 
seems 
to be a challenging future work.


\newpage
\bibliographystyle{plain}
\bibliography{biblio.bib}

\end{document}